%% file: PPT-distillable-arXiv-v1.tex
\documentclass[11pt,notitlepage,tightenlines,nofootinbib,superscriptaddress]{revtex4-2}
\input{macros}

\renewcommand{\PPT}{\mathrm{PPT}}
\newcommand{\LL}{\pazocal{L}}
\newcommand{\RR}{\pazocal{R}}
\newcommand{\KK}{\pazocal{K}}
\newcommand{\JJ}{\pazocal{J}}
\newcommand{\GG}{\pazocal{G}}

\newcommand{\pt}{\Gamma}

\makeatletter
\AtBeginDocument{\long\def\@makecaption#1#2{%
  \par\vskip\abovecaptionskip
  \begingroup\normalfont\normalsize\justifying\noindent
  #1: #2\par\endgroup\vskip\belowcaptionskip}}
\makeatother

\newcommand{\Eppt}{E_{d,\,\PPT}}
\newcommand{\EpptErr}{E_{d,\PPT}^{\mathrm{err},\dag}}
\newcommand{\Dbin}{D_{\mathrm{bin}}}

\hypersetup{colorlinks=true,linkcolor=Blues5seq5,citecolor=Blues5seq5,urlcolor=Blues5seq5,
 pdftitle={On PPT entanglement distillation},pdfauthor={Ludovico Lami}}
 
\begin{document}

\title{On PPT entanglement distillation}

\author{Ludovico Lami}
\email{ludovico.lami@gmail.com}
\affiliation{Scuola Normale Superiore, Piazza dei Cavalieri 7, 56126 Pisa, Italy}

\begin{abstract}
We study entanglement distillation under operations that remain completely positive under partial-transpose conjugation, a.k.a.\ PPT channels, in the standard quantum Shannon theory regime of asymptotically vanishing but non-zero error. Our main results are: (a)~a regularised formula for the PPT distillable entanglement in terms of a measured-relative-entropy-like quantity; and (b)~two different single-letter converses, one based on operator quadratic forms and the other on Hirschman's strengthening of the Hadamard three-line theorem. As an immediate application of~(b), we show that the PPT distillable entanglement can be strictly smaller than the regularised Rains bound, thereby resolving an open problem in the theory of entanglement manipulation~\href{https://iopscience.iop.org/article/10.1088/1367-2630/ab4732}{[Regula et al., \emph{NJP} 21:103017, 2019]}. A gap appears already for $3\times 3$ Werner states: at antisymmetric weight $25/26$, a certified upper bound of $0.62107$ ebits lies below the (regularised) Rains bound, which equals $\frac{25}{26} \log_2 5 - \log_2 3 \approx 0.64766$ ebits. On a different note, (a)~implies a faithful lower bound on the PPT distillable entanglement in terms of the entanglement negativity, which provides a quantitative counterpart to the qualitatively known fact that any NPT state is PPT distillable~\href{https://journals.aps.org/prl/abstract/10.1103/PhysRevLett.87.257902}{[Eggeling et al., \emph{PRL} 87:257902, 2001]}.
\end{abstract}

\maketitle

\begingroup\footnotesize\setlength{\baselineskip}{9.5pt}
\makeatletter
\def\l@f@section{\addpenalty{\@secpenalty}\addvspace{0.3em plus 0.5pt}}
\makeatother
\tableofcontents
\endgroup

\section{Introduction}

\noindent \textbf{Note.} The main results and techniques presented here have been derived almost entirely by ChatGPT Astra. The human author (yours truly) gave many suggestions on how to extend and generalise the initial findings, which applied only to Werner states, and spent many hours (but an order of magnitude fewer than would have been needed a year ago) validating and improving the proof techniques, as well as adapting the presentation to his style.

\subsection{Background}

Understanding asymptotic entanglement manipulation is a truly daring endeavour. For how central the phenomenon is for foundations~\cite{Werner, Aspect, dense-coding, teleportation, Ekert91} as well as applications~\cite{teleportation-experiment, entanglement-1200-km, entanglement-optical-clock, DIQKD-experiment, distributed-quantum-computing}, we only have a limited knowledge of the ultimate operational constraints on entanglement transformation in the asymptotic regime, where arbitrarily many copies of a given source state are available.
A particularly challenging task, in this sense, is entanglement distillation~\cite{Bennett-distillation, Bennett-distillation-mixed, Bennett-error-correction, devetak2005}, which consists in turning copies of a noisy, possibly mixed quantum state into as many `ebits' as possible with asymptotically vanishing error. Here, an ebit, also called two-qubit maximally entangled state or Bell pair, can be thought of as the `golden unit' of entanglement, since it is the fuel on which many quantum information processing protocols run. The ultimate rate of ebit production that is achievable from copies of a state $\rho = \rho_{AB}$ through local operations and classical communication (LOCC) is called the LOCC distillable entanglement of $\rho$, and denoted $E_{d,\,\locc}(\rho)$. 

Calculating $E_{d,\,\locc}(\rho)$ for even very simple mixed states $\rho$ is typically an extremely challenging task. As far as the author is aware, it is even unclear whether $E_{d,\,\locc}(\rho)$ is Turing computable at all. Roughly speaking, the nature of this difficulty can be traced back to three different obstacles:
\begin{enumerate}[(1)]
\item The set of LOCC operations is mathematically complicated to describe and analyse~\cite{LOCC}.
\item The presence of the approximation error makes the output of the distillation protocol a moving target.
\item The asymptotic limit of an arbitrarily high number of input copies is analytically intractable, as we have no guarantee on how many copies may be needed to enable distillation~\cite{Watrous2004}.
\end{enumerate}
A possible approach to the problem is to separate these three obstacles, tackling them first one by one, and then in pairs, with the goal of eventually being able to tackle all three at once.

\subsection{Prior work}

A popular idea to remove obstacle~(1) without trivialising the problem is that of replacing LOCC operations with quantum channels that completely preserve the property of having a non-negative partial transpose~\cite{PeresPPT}, also called PPT channels~\cite{Rains1999, Rains2001}. When acting on a bipartite system $A:B$ and outputting a (possibly different) bipartite system $A':B'$, PPT channels can also be defined as those whose Choi state~\cite{Choi} is PPT across the bipartition $AA':BB'$. Therefore, importantly, PPT-ness is a positive semi-definite constraint, entailing that it can be treated efficiently within the formalism of semi-definite programs (SDPs)~\cite{BOYD, vandenberghe_1996}. PPT channels form a (strict) superset of LOCC, meaning that the PPT distillable entanglement of a state is no smaller than its LOCC counterpart: formally, 
\bb
E_{d,\,\locc}(\rho) \leq E_{d,\,\ppt}(\rho)\, .
\ee
Hence, any upper bound on $E_{d,\,\ppt}(\rho)$ automatically yields an upper bound on $E_{d,\,\locc}(\rho)$. It is in this way that one of the most powerful upper bounds on $E_{d,\,\locc}(\rho)$, called the \emph{Rains bound}, was derived~\cite{Rains2001}. This is defined as
\bb
R(\rho) \coloneqq \inf_{\sigma \geq 0\,:\ \|\sigma^{\pt}\|_1\leq 1} D(\rho\|\sigma)\, ;
\ee
here, $D(\rho\|\sigma) \coloneqq \Tr \rho(\log \rho - \log \sigma)$ is the Umegaki relative entropy~\cite{Umegaki1962}, $\Gamma$ denotes the partial transpose on $B$, identified with the linear map satisfying $(X\otimes Y)^{\pt} = X \otimes Y^\intercal$, and $\|Z\|_1\coloneqq \Tr \sqrt{Z^\dag Z}$ denotes the trace norm. Note that the optimisation variable $\sigma$ does not need to be a state\,---\,it can be merely \emph{sub}-normalised, meaning that $\Tr \sigma\leq 1$.\footnote{Sub-normalisation is needed because $1\geq \left\|\sigma^{\pt}\right\|_1 \geq \Tr \sigma^{\pt} = \Tr \sigma$.} The tightest known bound on $E_{d,\,\ppt}$, and one of the tightest known on $E_{d,\,\locc}$, is obtained by \emph{regularising} the Rains bound, i.e.\ by computing its asymptotic value in the limit of arbitrarily many copies:
\bb
E_{d,\,\locc}(\rho) \leq E_{d,\,\ppt}(\rho) \leq R^\infty(\rho) \coloneqq \lim_{n\to\infty} \frac1n\, R\big(\rho^{\otimes n}\big)\, .
\label{Rains_bound}
\ee
For context, the \emph{only} other (upper) bound on $E_{d,\,\locc}$ that, to the best of our knowledge, can sometimes compete with the regularised Rains bound in~\eqref{Rains_bound} is the \emph{squashed entanglement}~\cite{Tucci1999, squashed, rel-ent-sq, Shirokov-sq}. It is known that, in general, the regularisation in~\eqref{Rains_bound} is necessary~\cite{nonadditivity-Rains}. However, unlike many other regularised measures in quantum information theory, $R^\infty$ can actually be shown to be computable, with precise runtime guarantees~\cite{FFF2025}.

Throughout this paper, we will deal with PPT operations instead of the much more complex set of LOCC operations, thereby softening the above obstacle~(1). The key question we face in this work is:
\begin{center}
\it At least for PPT instead of LOCC operations, can we tackle obstacles~(2) and~(3) together?
\end{center}
It is worth remarking that, \emph{if we accept removing also obstacle~(2)}, namely, if we look at exact, zero-error transformations, then much more is known. The task of dilution rather than distillation, in the PPT zero-error case, has been solved completely, first in~\cite{Martin-exact-PPT} for states with vanishing bi-negativity\,---\,in particular, for all two-qubit states~\cite{Ishizaka2004}\,---\,and then in~\cite{PPT-zerocost} for general states. This latter paper built on a previous claimed solution to the problem~\cite{Xin-exact-PPT, Wang2023}, which however turned out to have a serious issue~\cite{Xin-Mark-errata}. See also the recent work~\cite{Brinster2026}. As for distillation, the optimal PPT rate that is achievable with zero error was conjectured to reduce to the min-Rains relative entropy~\cite{Wang2016, nonadditivity-Rains}, a statement that was disproved very recently~\cite{Zhu2026}.

Here, we will leave aside these results, which deal with a setting in which both obstacles~(1) and~(2) have been artificially tamed, and focus on the setting in which only~(1) has been removed artificially (by relaxing LOCC to PPT) but~(2) and~(3) are dealt with together. We will therefore look at the theory of PPT entanglement manipulation in the asymptotically vanishing error setting. In this setting, much less is known from the prior literature. Here is a summary of what we do know:
\begin{itemize}
\item The theory of PPT entanglement manipulation is asymptotically irreversible~\cite{irreversibility-PPT}, something that cannot be amended even by going to the wider class of PPT-preserving operations~\cite{irreversibility}; however, we currently have no expressions for either the PPT distillable entanglement or the PPT entanglement cost.

\item We suspect that the first inequality in~\eqref{Rains_bound} is not tight in general, as \emph{NPT bound entangled states}, i.e.\ states that are not PPT but such that $E_{d,\,\locc}(\rho) = 0$, are conjectured to exist~\cite{HorodeckiBound, Horodecki-open-problems}. Any such state would necessarily satisfy $E_{d,\,\ppt}(\rho) > 0$, because every NPT state is distillable with PPT operations~\cite{Eggeling2001}:
\bb
\rho^\Gamma \not\geq 0\quad\Longrightarrow\quad E_{d,\,\ppt}(\rho) > 0\, .
\label{NPT_states_are_PPT_distillable}
\ee
A remarkable feature of the above result is that it is exquisitely qualitative, in that no lower bound on $E_{d,\,\ppt}(\rho)$ is stated based on the \emph{entanglement negativity}~\cite{negativity} of $\rho$.

\item What about the second inequality in~\eqref{Rains_bound}? 
In~\cite[Section~3.1.1]{Regula2019}, it is explicitly conjectured that it be an equality for all states:
\bb
E_{d,\,\ppt}(\rho) \eqt{?} R^\infty(\rho)\, .
\label{Bartosz_conjecture}
\ee
Solving this conjecture, or otherwise finding an alternative formula for $E_{d,\,\ppt}(\rho)$, would mark a decisive step forward in our understanding of PPT operations.
\end{itemize}

\subsection{Our contributions}

In this paper, we investigate in a systematic way entanglement distillation under PPT operations in the regime of asymptotically vanishing error. Our main results are as follows:
\begin{enumerate}[(a)]

\item In Theorem~\ref{regularised_formula_thm}, we establish a regularised formula for the PPT distillable entanglement $E_{d,\,\ppt}$. This formula can be used to calculate achievable rates (i.e.\ lower bounds on $E_{d,\,\ppt}$) systematically, but we have not yet been able to provide any guarantees on the speed of convergence. As a byproduct, it yields an explicit faithful lower bound in terms of the entanglement negativity: writing $N(\rho)\coloneqq(\|\rho^\Gamma\|_1-1)/2$ and $d\coloneqq\min\{d_A,d_B\}$, we obtain $E_{d,\,\ppt}(\rho)\geq1-h_2\!\left(\tfrac12+\tfrac{N(\rho)}{d+1}\right)$. This makes the qualitative asymptotic statement~\eqref{NPT_states_are_PPT_distillable} quantitative by supplying a strictly positive rate whenever $N(\rho)>0$.

\item Then, in Theorems~\ref{quadratic_converse_thm} and~\ref{analytic_converse_thm} we prove two different converse bounds, i.e.\ two upper bounds on $E_{d,\,\ppt}$ that are valid for any state. These are quite different in nature: the one in Theorem~\ref{quadratic_converse_thm} is based on quadratic forms, while that in Theorem~\ref{analytic_converse_thm} is based on the theory of analytic functions.

Since every LOCC operation is PPT, these converses also upgrade the best available (computable) upper bounds on the LOCC distillable entanglement to
\bb
E_{d,\,\locc}(\rho) \leq \min\big\{ R^\infty(\rho),\, E_{\mathrm{sq}}(\rho),\, E_Q(\rho),\, E_A(\rho)\big\}\, .
\label{LOCC_converse_bounds}
\ee
Here, $E_{\mathrm{sq}}$ denotes the squashed entanglement~\cite{Tucci1999, squashed, rel-ent-sq, Shirokov-sq}. The last two bounds are new, and are defined in~\eqref{quadratic_bound} and~\eqref{analytic_bound}, respectively.

\item A remarkable consequence of our converses is that the conjectured equality~\eqref{Bartosz_conjecture} between the PPT distillable entanglement and the regularised Rains bound is, alas, false: in Theorem~\ref{gap_Werner_thm}, we find that, already for a $3\times 3$ Werner state with antisymmetric weight $25/26$, we have $\Eppt(\rho_{25/26}) \leq 0.62107 < \frac{25}{26}\log5 - \log3 = R^\infty(\rho_{25/26})$.

\end{enumerate}

\section{Main results}
\label{main_results_sec}

All systems are finite dimensional and logarithms are base two. We write $\Gamma$ for the partial transpose on the second party in a fixed product basis. A state is PPT if $\rho^{\pt} \geq 0$, and a channel $\Lambda$ is PPT if $\Gamma_{\rm out}\circ\Lambda\circ\Gamma_{\rm in}$ is completely positive. The PPT distillable entanglement $\Eppt(\rho)$ is the asymptotic number of ebits per copy obtainable with vanishing error. We also consider its strong-converse threshold $\Eppt^\dag(\rho)$, above which the optimal overlap with the target maximally entangled state tends to zero, and the lower exponential decay rate $\EpptErr(\rho,R)$ of that overlap at rate $R$. Precise operational definitions are given in Section~\ref{setting_sec}.

For $\Phi_d = \frac1d \sum_{i,j = 1}^d \ketbraa{ii}{jj}$, the normalised maximally entangled state, write
\bb
F_{\PPT}(\rho,d) \coloneqq \sup_{\Lambda\in\PPT}\Tr\big[\Phi_d\Lambda(\rho)\big]\, ,
\label{bell_overlap_definition}
\ee
where the supremum is over PPT channels with the indicated output spaces. At rate $R \geq 0$, the exponent is
\bb
\EpptErr(\rho,R) \coloneqq \liminf_{n\to\infty} - \frac1n\log F_{\PPT}\!\left(\rho^{\otimes n},\left\lceil 2^{nR}\right\rceil\right), \qquad R \geq 0\, .
\label{distillation_error_exponent}
\ee
A positive exponent gives an exponential strong converse.

\subsection{A regularised formula}

Our first result expresses the distillation rate through a measurement optimisation. Let $\MM$ be a finite measurement channel represented by the POVM $(E_j)_j$, and define
\bb
L(\rho) \coloneqq \sup_{\MM}\sum_{j:\ \Tr[\rho E_j]>0}\Tr[\rho E_j]\log\frac{\Tr[\rho E_j]}{\big\|E_j^{\pt}\big\|_\infty}\, ,\qquad L^\infty(\rho) \coloneqq \lim_{n\to\infty}\frac1n L\big(\rho^{\otimes n}\big)\, .
\label{L_definition}
\ee
Zero effects are omitted. The limit exists by superadditivity and the logarithmic-negativity upper bound proved in Proposition~\ref{FG_properties_prop}. Unlike a finite-error operational optimisation, $L$ contains neither an output dimension nor a fidelity constraint.

Rains's projective-measurement bound~\cite[Theorem~6.4]{Rains2001} is recovered by restricting the POVM to orthogonal projections. Allowing arbitrary POVMs and then regularising gives an equality with the operational rate. The converse direction uses the two-outcome measurement obtained by pulling back the Bell test of a distillation protocol; the direct direction combines a fixed measurement with classical typicality. Thus the same explicit functional accounts for both sides of the rate formula.

\begin{thm}[(Regularised formula for the PPT distillable entanglement)] \label{regularised_formula_thm}
For every finite-dimensional bipartite state, with $L$ defined by~\eqref{L_definition},
\bb
\Eppt(\rho) = L^\infty(\rho) = \sup_{n\in\N^+}\frac1n\,L\big(\rho^{\otimes n}\big)\, .
\label{regularisation}
\ee
\end{thm}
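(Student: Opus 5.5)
The proof has three parts: the regularisation is well defined, a converse $\Eppt \le L^\infty$, and achievability $\Eppt \ge \frac1k L(\rho^{\otimes k})$ for every $k$. For the first part, $L$ is superadditive: tensoring two POVMs gives product probabilities, and $\|(E_j\otimes E'_{j'})^{\pt}\|_\infty = \|E_j^{\pt}\|_\infty\|E_{j'}'^{\pt}\|_\infty$, so the objective adds. Proposition~\ref{FG_properties_prop} bounds $L$ by the logarithmic negativity. Fekete's lemma then gives $L^\infty = \sup_n \frac1n L(\rho^{\otimes n})$. It remains to show $\Eppt \le L^\infty$ and $\Eppt \ge \frac1k L(\rho^{\otimes k})$ for each $k$.

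\emph{Converse.} Take a PPT protocol $\Lambda$ on $\rho^{\otimes n}$ with output dimension $m=\lceil 2^{nR}\rceil$ and fidelity $F=\Tr[\Phi_m\Lambda(\rho^{\otimes n})]\to 1$. Pull back the Bell test to the two-outcome POVM $E=\Lambda^\dag(\Phi_m)$, $\mathbb{1}-E$. Since $\Gamma\circ\Lambda^\dag\circ\Gamma$ is positive and unital, and $\Phi_m^{\pt}=F_{\rm swap}/m$ has operator norm $1/m$, we get $\|E^{\pt}\|_\infty\le 1/m$. Also $\|(\mathbb{1}-E)^{\pt}\|_\infty\le 1+1/m\le 2$. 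Plugging this POVM into~\eqref{L_definition} gives
\bb
L(\rho^{\otimes n}) \ge F\log(Fm)+(1-F)\log\tfrac{1-F}{2} \ge F\,nR - h_2(F)-1 .
\ee
Dividing by $n$ and letting $F\to1$ yields $R\le L^\infty(\rho)$.

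\emph{Achievability.} Fix $k$ and a POVM $(E_j)_j$ on $\rho^{\otimes k}$. Set $p_j=\Tr[\rho^{\otimes k}E_j]$ and $c_j=\|E_j^{\pt}\|_\infty$, so that the objective is $\ell=\sum_j p_j\log(p_j/c_j)$. On $n$ blocks, for sequences $\mathbf j$ write $p(\mathbf j)=\prod_i p_{j_i}$ and $c(\mathbf j)=\prod_i c_{j_i}$. Define the Stein-type test
\bb
T=\sum_{\mathbf j:\ p(\mathbf j)\ge 2^{n(\ell-\delta)}c(\mathbf j)} E_{j_1}\otimes\cdots\otimes E_{j_n}.
\ee
By the law of large numbers applied to $\frac1n\log\frac{p(\mathbf j)}{c(\mathbf j)}$, we have $\Tr[\rho^{\otimes kn}T]\to1$. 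By the triangle inequality and multiplicativity of the norm,
\bb
\|T^{\pt}\|_\infty\le \sum_{\mathbf j\in T} c(\mathbf j)\le 2^{-n(\ell-\delta)}\sum_{\mathbf j} p(\mathbf j) \le 2^{-n(\ell-\delta)} .
\ee
Similarly $\|(\mathbb{1}-T)^{\pt}\|_\infty\le 1+\|T^{\pt}\|_\infty$.

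Then use Rains's twirled measure-and-prepare channel
\bb
\Lambda(X)=\Tr[TX]\,\Phi_m+\Tr[(\mathbb{1}-T)X]\,\frac{\mathbb{1}-\Phi_m}{m^2-1},
\ee
with $m=\lfloor 2^{n(\ell-2\delta)}\rfloor$, which gives fidelity $\Tr[\rho^{\otimes kn}T]\to1$. \textbf{The main obstacle} is checking that $\Lambda$ is PPT. Its Choi operator under partial transpose is $T^{\pt}\otimes F_{\rm swap}/m+(\mathbb{1}-T)^{\pt}\otimes(\mathbb{1}-F_{\rm swap}/m)/(m^2-1)$. Splitting along the symmetric and antisymmetric projectors $P_\pm$ of $F_{\rm swap}$, positivity reduces to two operator inequalities of the form $\pm\frac{m^2-1}{m}\,T^{\pt}+(\mathbb{1}-T)^{\pt}\cdot\frac{m\mp1}{m}\ge 0$, up to normalisation. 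These should hold once $\|T^{\pt}\|_\infty\lesssim 1/m$, and the slack $2^{-n\delta}$ between $1/m$ and $\|T^{\pt}\|_\infty$ absorbs the constants. If the exact inequalities are delicate, I would first weaken the output to a slightly smaller $m$ or mix $T$ with a small multiple of $\mathbb{1}$. Letting $\delta\to0$ and optimising over the POVM gives $\Eppt\ge L(\rho^{\otimes k})/k$ for every $k$, which completes~\eqref{regularisation}.
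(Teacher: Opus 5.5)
Your proposal is correct and follows essentially the same route as the paper. The converse pulls back the Bell test to a binary POVM. Achievability uses law-of-large-numbers acceptance of i.i.d. block words followed by the measure-and-prepare channel~\eqref{channel}. Your own splitting along $P_\pm$ already reduces the two conditions to $\frac{m\mp1}{m}\bigl(\id\pm mT^{\pt}\bigr)\geq0$, i.e.\ exactly $\|T^{\pt}\|_\infty\leq 1/m$, so no slack or extra mixing is needed.

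The only difference is a specialisation. The paper accepts words using arbitrary weights $y$ and bounds $\|T^{\pt}\|_\infty$ through the operator absolute values $|E_x^{\pt}|$, which achieves the larger quantity $\widetilde{L}(\rho^{\otimes k})/k$. Your choice $y_j=p_j/c_j$ combined with the triangle inequality achieves exactly $L(\rho^{\otimes k})/k$, which suffices for the theorem.
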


For a state with computable entries, Theorem~\ref{regularised_formula_thm} gives a computable increasing sequence of lower bounds converging to $\Eppt(\rho)$: finite rational POVMs and positive rational weights can be enumerated, and the corresponding scores can be approximated from below. Thus $\Eppt$ is lower semicomputable. This also follows directly by searching over finite-block PPT channels and applying the hashing inequality to their outputs~\cite{devetak2005}; no new computability status is claimed here. Had $\Eppt$ equalled $R^\infty$, the algorithms of~\cite{FFF2025} would have established its Turing computability. Our separation rules out that identification, but does not rule out computability by another method; the existence of a convergent computable family of upper bounds remains open.

For the proofs and sharper finite-block lower bounds, we retain an auxiliary quantity $\widetilde{L}$, defined in~\eqref{povm_regularised_scores}, whose regularisation $\widetilde{L}^{\,\infty}$ also equals $\Eppt$. It keeps the operator absolute values before taking a norm. This additional structure gives simple spectral lower bounds and the one-copy Werner calculation, but is unnecessary in the statement of Theorem~\ref{regularised_formula_thm}.

\subsection{Two converse constructions}

Our two upper bounds use different information about a distillation test. The first controls a quadratic form on the test, including correlations with an ancillary system. The second places the input in an analytic family and combines bounds on both boundary circles of an annulus. Both retain enough finite-block information to prove exponential strong converses, beyond an upper bound on vanishing-error rates.

For the quadratic construction, the operator space carries the Hilbert--Schmidt scalar product $\Braket{X,Y} = \Tr[X^\dag Y]$. Write $\LL_\omega(X) \coloneqq \omega X$. For $\alpha>0$ and a strictly positive super-operator $\KK$ on this space, define
\bb
V_\alpha[\omega,\KK] \coloneqq \frac1\alpha\Braket{\sqrt\omega,\,\big(\log\big(\LL_{\sqrt\omega}\circ\KK^{-1}\circ\LL_{\sqrt\omega}\big)\big)(\sqrt\omega)}\, .
\label{V_definition}
\ee
Here $\omega \geq 0$ may be singular, and the logarithm is restricted to the support of the super-operator. The vector $\sqrt\omega$ lies in that support. Positivity of super-operators refers to the Hilbert--Schmidt scalar product, not to complete positivity.

\begin{thm}[(Conditional quadratic converse)]\label{quadratic_converse_thm}
Let $\rho = \rho_{A:B}$ be a bipartite state. With $V_\alpha$ defined by~\eqref{V_definition}, put
\bb
E_Q(\rho) = E_Q(\rho_{A:B}) \coloneqq \inf_{\rho_{AA':BB'},\,\alpha,\KK,\JJ} \left\{V_\alpha[\rho_{AA':BB'},\KK] - V_\alpha[\rho_{A':B'},\JJ]\right\} ,
\label{quadratic_bound}
\ee
where the infimum is over all finite-dimensional extensions $\rho_{AA':BB'}$ of $\rho_{A:B}$, all $\alpha>0$, and all strictly positive super-operators $\KK,\JJ$ on the operator spaces of $AA'BB'$ and $A'B'$, respectively, satisfying
\bb
\Braket{Q_n\otimes X_n,\KK^{\otimes n}[Q_n\otimes X_n]} \leq \big\|Q_n^{\Gamma_{B^n}}\big\|_\infty^\alpha\Braket{X_n,\JJ^{\otimes n}[X_n]}
\label{ancillary_form_all_copies}
\ee
for all $n\in\N^+$, all tests $0 \leq Q_n \leq \id$ on $A^nB^n$, and all operators $X_n$ on $(A'B')^n$. Then
\bb
\Eppt(\rho_{A:B}) \leq \Eppt^\dag(\rho_{A:B}) \leq E_Q(\rho_{A:B})\, .
\label{conditional}
\ee
Every rate strictly above $E_Q(\rho_{A:B})$ has an exponential strong converse. 
\end{thm}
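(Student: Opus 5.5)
The plan is to reduce any PPT distillation protocol to a two-outcome test and then bound the success probability of that test by a Cauchy--Schwarz inequality for the quadratic form $\KK$. Fix $n$ and a PPT channel $\Lambda$ with output dimension $M=\lceil 2^{nR}\rceil$, and set $Q_n\coloneqq\Lambda^\dagger(\Phi_M)$. The standard Rains argument gives $0\le Q_n\le\id$ and $\|Q_n^{\Gamma_{B^n}}\|_\infty\le 1/M$: indeed $\Lambda^\dagger$ is again PPT, and $\Phi_M^\Gamma$ has operator norm $1/M$. The success probability is $p_n=\Tr[\rho^{\otimes n}Q_n]$. Bounding $F_{\PPT}$ therefore reduces to bounding $\Tr[\rho^{\otimes n}Q]$ uniformly over tests with $\|Q^{\Gamma}\|_\infty\le 1/M$. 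The first inequality in~\eqref{conditional} is immediate, so all the work is in the second one.

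To bring in the extension, write $\omega=\rho_{AA':BB'}$ and consider $Y_n=Q_n\otimes X_n$ for a suitable operator $X_n$ on $(A'B')^n$. We have
\[
\Tr[\omega^{\otimes n}Y_n]=\Braket{\sqrt{\omega}^{\otimes n},\LL_{\sqrt\omega}^{\otimes n}(Y_n)} .
\]
Inserting $\KK^{\otimes n}$ and its inverse, Cauchy--Schwarz gives
\[
\big|\Tr[\omega^{\otimes n}Y_n]\big|^2\le \Braket{\sqrt\omega^{\otimes n},T^{\otimes n}\sqrt\omega^{\otimes n}}\,\Braket{Y_n,\KK^{\otimes n}Y_n},
\qquad T\coloneqq\LL_{\sqrt\omega}\circ\KK^{-1}\circ\LL_{\sqrt\omega}.
\]
By hypothesis~\eqref{ancillary_form_all_copies}, the second factor is at most $M^{-\alpha}\Braket{X_n,\JJ^{\otimes n}X_n}$.

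The choice of $X_n$ has to satisfy two requirements at once:
\begin{itemize}
\item[(i)] $\Tr[\rho_{A'B'}^{\otimes n}X_n]$ is not exponentially small;
\item[(ii)] $\Braket{X_n,\JJ^{\otimes n}X_n}\lesssim 2^{n\alpha V_\alpha[\rho_{A'B'},\JJ]}$.
\end{itemize}
In addition, $\Tr[\omega^{\otimes n}(Q_n\otimes X_n)]$ must be comparable to $p_n$. When $p_n$ is bounded away from zero (or only decays subexponentially), this follows from a gentle-measurement/union-bound argument, because $X_n$ will be close to a projector of high probability for the marginal.

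Both the $\omega$-side and the $\JJ$-side factors must be brought from the ``Petz-like'' form $\log\Braket{v,Tv}$ down to the ``Umegaki-like'' form $\Braket{v,(\log T)v}$ appearing in~\eqref{V_definition}. The tool is spectral typicality. Decompose $\sqrt\omega^{\otimes n}$ along the spectral projections of $T^{\otimes n}$. Because $\log T^{\otimes n}$ is a sum of $n$ commuting copies of $\log T$, the spectral distribution of $\log T^{\otimes n}$ in the state $\sqrt\omega^{\otimes n}$ (with weights $\Tr\omega=1$) is an i.i.d.\ sum. Hence it concentrates around $n\Braket{\sqrt\omega,(\log T)\sqrt\omega}=n\alpha V_\alpha[\omega,\KK]$. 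Projecting $\sqrt\omega^{\otimes n}$ onto the typical window $\{\log T^{\otimes n}\le n(\alpha V_\alpha+\delta)\}$ yields a vector $v_n$ with:
\begin{itemize}
\item $\Braket{v_n,T^{\otimes n}v_n}\le 2^{n(\alpha V_\alpha+\delta)}$;
\item $\|v_n-\sqrt\omega^{\otimes n}\|_2\to0$.
\end{itemize}
Running Cauchy--Schwarz with $v_n$ in place of $\sqrt\omega^{\otimes n}$ then costs only a vanishing error. An analogous typical-window construction applied to the vector $\sqrt{\rho_{A'B'}}^{\otimes n}$ and the operator $\LL_{\sqrt{\rho_{A'B'}}}\circ\JJ^{-1}\circ\LL_{\sqrt{\rho_{A'B'}}}$ is used to define $X_n$. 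Dually, this gives a matching lower bound on $\Tr[\rho_{A'B'}^{\otimes n}X_n]^2/\Braket{X_n,\JJ^{\otimes n}X_n}\gtrsim 2^{-n(\alpha V_\alpha[\rho_{A'B'},\JJ]+\delta)}$, where $X_n$ is essentially $\JJ^{-\otimes n}$ applied to the typical part of $\rho_{A'B'}^{\otimes n}$. Combining everything should give
\[
p_n^2\lesssim 2^{-n\alpha(R-V_\alpha[\omega,\KK]+V_\alpha[\rho_{A'B'},\JJ]-2\delta/\alpha)} .
\]
This shows that $p_n$ decays exponentially whenever $R>V_\alpha[\omega,\KK]-V_\alpha[\rho_{A'B'},\JJ]$. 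Taking the infimum over $\omega$, $\alpha$, $\KK$ and $\JJ$ yields both $\Eppt^\dag\le E_Q$ and the exponential strong converse.

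The main obstacle is the typicality step in the previous paragraph. Since $\KK$ is only positive for the Hilbert--Schmidt product, $T$ need not commute with anything natural. Moreover, the typical truncations on the $\omega$ side and on the ancilla side must be made simultaneously compatible with the product structure $Q_n\otimes X_n$ that~\eqref{ancillary_form_all_copies} requires. The plan is therefore to do the truncation entirely at the level of vectors in Hilbert--Schmidt space, which only needs the commuting i.i.d.\ structure of $\log T^{\otimes n}$. Any loss in converting the truncated vector back into a genuine overlap with $Q_n\otimes X_n$ would be absorbed by a H\"older-type bound using $0\le Q_n\le\id$. If that fails, a fallback is to replace the truncation by a Jensen/operator-concavity bound, with a small loss in $\alpha$ that is then optimised away.
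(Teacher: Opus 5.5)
Your overall architecture works: pull back to a test, apply Cauchy--Schwarz with $\KK^{\otimes n}$, invoke the hypothesis, and replace Petz-like quantities by Umegaki-like ones via i.i.d.\ spectral truncation. The ancillary step, however, has a genuine gap as written.

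First, requirement~(ii) has the wrong sign. With $\Tr[\rho_{A'B'}^{\otimes n}X_n]\approx1$ you need $\Braket{X_n,\JJ^{\otimes n}X_n}\lesssim 2^{-n\alpha(V_\alpha[\rho_{A'B'},\JJ]-\delta)}$, and your ``dual'' ratio is then $\gtrsim 2^{+n\alpha(V_\alpha[\rho_{A'B'},\JJ]-\delta)}$. Only this sign gives your final display.

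More importantly, (i) and (ii) do not make $\Tr[\omega^{\otimes n}(Q_n\otimes X_n)]$ comparable to $p_n$ when the extension is correlated. The reason is that $Q_n$ can select conditional ancillary states far from $\rho_{A'B'}^{\otimes n}$. Your gentle-measurement justification needs $X_n$ to dominate a multiple of a high-probability projector, and that holds only in the commuting case. For general $\KK,\JJ$, the operator you describe, namely $(\JJ^{-1})^{\otimes n}$ applied to a truncation of $\rho_{A'B'}^{\otimes n}$ and then normalised, is neither positive nor Hermitian. Moreover, in the spectral decomposition of $T_{\JJ}^{\otimes n}$, with $T_{\JJ}\coloneqq\LL_{\sqrt{\rho_{A'B'}}}\circ\JJ^{-1}\circ\LL_{\sqrt{\rho_{A'B'}}}$, the vector $\sqrt{\rho_{A'B'}}^{\otimes n}X_n$ is $\sqrt{\rho_{A'B'}}^{\otimes n}$ reweighted by factors $\lambda/N$. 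These factors vary exponentially across the window, so $\Tr\rho_{A'B'}^{\otimes n}(X_n-\id)(X_n-\id)^\dagger$ is of order one.

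What you actually need is that $X_n$ approximate $\id$ in the $\rho_{A'B'}^{\otimes n}$-weighted $L^2$ sense:
\[
\Tr\rho_{A'B'}^{\otimes n}X_nX_n^\dagger\le1, \qquad \Tr\rho_{A'B'}^{\otimes n}(X_n-\id)(X_n-\id)^\dagger\le2^{-nc_\delta}.
\]
The paper takes $X_n=(\JJ^{-1/2})^{\otimes n}\big[\Pi_n[(\JJ^{1/2}[\id])^{\otimes n}]\big]$. Here $\Pi_n$ is the spectral projection of $(\JJ^{-1/2}\circ\LL_{\rho_{A'B'}}\circ\JJ^{-1/2})^{\otimes n}$ onto eigenvalues at least $2^{n\alpha(V_\alpha[\rho_{A'B'},\JJ]-\delta)}$. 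In other words, it truncates the \emph{identity} and discards its \emph{small}-eigenvalue tail. Cauchy--Schwarz then gives $|\Tr\omega^{\otimes n}(Q_n\otimes X_n)-p_n|\le2^{-nc_\delta/2}$ uniformly over tests, with no case distinction on the size of $p_n$.

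Both this tail and the one in your principal-side truncation must be exponentially small, not merely vanishing, to get an exponential strong converse. This follows from Hoeffding's inequality, because the relevant spectral laws are finitely supported. Note also that $\sqrt\omega$ has no component in $\ker\LL_{\sqrt\omega}$, so no weight sits at $\lambda=0$.

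With this repair, your principal side is correct and is a genuinely different route from the paper. Write $\sqrt\omega^{\otimes n}=v_n+w_n$. The tail term is at most $\|w_n\|_2(\Tr\omega^{\otimes n}Y_nY_n^\dagger)^{1/2}\le\|w_n\|_2$, since $Y_nY_n^\dagger\le\id\otimes X_nX_n^\dagger$. So the ``main obstacle'' you flag does not arise: $T^{\otimes n}$ is a single self-adjoint operator with product spectral projections, and $Q_n\otimes X_n$ enters only through the hypothesis and this norm.

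The paper never truncates the principal vector. It uses
\[
|\Tr\omega Z|^2\le\Tr[(\KK\#_{1+s}\LL_\omega)[\id]]\,(\Tr\omega ZZ^\dagger)^{1-s}\,\Braket{Z,\KK[Z]}^s
\]
and lets $s\to0^+$, which yields an explicit R\'enyi-type exponent. It also obtains the weak converse separately, from data processing of $V_\alpha$ under the transformer inequality for geometric means, with explicit finite-error terms. Your route avoids geometric means entirely and produces only Chernoff-type exponents, which suffice for the statement.
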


The all-copy hypothesis is a substantive constraint; it cannot in general be replaced by its one-copy version. Geometric means of positive super-operators turn this quadratic control into data processing for $V_\alpha$. Conditioning on an ancillary system then subtracts the corresponding ancillary contribution. The choice of right multiplication recovers the Rains bound, while different quadratic forms can improve it, as the Werner example below shows.

For positive operators, we write
\bb
D(X\|Y) \coloneqq \Tr\big[X(\log X - \log Y)\big] \quad\text{if }\supp X \subseteq \supp Y\, ,
\label{relative_entropy_definition}
\ee
and set $D(X\|Y) = +\infty$ otherwise. Our relative entropy for a subnormalised second argument has no trace correction.

The sandwiched R\'enyi relative entropy~\cite{tomamichel12smooth_tutorial, Wilde2014, newRenyi} is
\bb
\widetilde D_\alpha(\rho\|\sigma) \coloneqq \frac1{\alpha - 1}\log\Tr\!\left[ \left(\sigma^{\frac{1 - \alpha}{2\alpha}}\rho \sigma^{\frac{1 - \alpha}{2\alpha}}\right)^\alpha\right], \qquad \alpha>1\, ,
\label{sandwiched_renyi}
\ee
when $\supp\rho\subseteq\supp\sigma$, and is $+\infty$ otherwise. Negative powers are restricted to $\supp\sigma$.

The analytic converse uses an explicit positive weight on the two boundary circles. For $0<\theta<1$, put
\bb
\beta_\theta(s) \coloneqq \frac{\sin(\pi\theta)}{2\theta\bigl(\cosh(\pi s) + \cos(\pi\theta)\bigr)},\qquad s\in\mathbb R\, .
\label{hirschman_density}
\ee
For an annulus $\mathcal{O} = \{z:a<|z-c|<b\}$ and an interior point $z_* = c + r e^{i\varphi}$, write $\ell \coloneqq \ln(b/a)$ and $\theta \coloneqq \ln(r/a)/\ell$. Define the boundary density
\bb
k_{\mathcal{O},z_*}\!\left(c + a e^{i\psi}\right) &\coloneqq \frac{1 - \theta}{a\ell}\sum_{m\in\mathbb Z}\beta_{1 - \theta}\!\left(\frac{\psi - \varphi + 2\pi m}{\ell}\right),\\
k_{\mathcal{O},z_*}\!\left(c + b e^{i\psi}\right) &\coloneqq \frac{\theta}{b\ell}\sum_{m\in\mathbb Z}\beta_\theta\!\left(\frac{\psi - \varphi + 2\pi m}{\ell}\right)\, .
\label{explicit_annular_density}
\ee
The series converge uniformly and define positive $2\pi$-periodic functions. With respect to positive arclength $|\mathrm dz|$, the inner and outer weights have masses $1 - \theta$ and $\theta$, respectively. Their sum is one, and no orientation of either circle is involved.

\begin{thm}[(Analytic-family strong converse)]\label{analytic_converse_thm}
Let $\rho = \rho_{A:B}$ be a finite-dimensional bipartite state, let $\mathcal{O} = \{z\in\mathbb C:a<|z - c|<b\}$ with $c\in\mathbb C$ and $0<a<b$, and let $Z(z)$ be an operator on $AB$, analytic in a neighbourhood of $\overline{\mathcal{O}}$, with $\Tr[Z(z)] = 1$. Suppose that $z_*\in\mathcal{O}$, $\sigma = Z(z_*)$ is a state, and $\supp\rho\subseteq\supp\sigma$. With the density in~\eqref{explicit_annular_density}, built from~\eqref{hirschman_density}, define
\bb
J_{Z,\mathcal{O},z_*}(R) \coloneqq \int_{\partial\mathcal{O}} \min\!\left\{\log\|Z(z)\|_1,\,\log\big\|Z(z)^{\pt_B}\big\|_1 - R\right\}k_{\mathcal{O},z_*}(z)\,|\mathrm dz|\, .
\label{analytic_J}
\ee
If $R \geq 0$ and $D(\rho\|\sigma) + J_{Z,\mathcal{O},z_*}(R)<0$, then $R$ is an exponential strong-converse rate for PPT distillation of $\rho$. Specifically,
\bb
\EpptErr(\rho,R) \geq \sup_{\alpha>1}\frac{\alpha - 1}{\alpha}\Big[ - J_{Z,\mathcal{O},z_*}(R) - \widetilde D_\alpha(\rho\|\sigma)\Big]>0\, .
\label{analytic_error_exponent}
\ee
\end{thm}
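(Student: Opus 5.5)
Fix $n$, a PPT channel $\Lambda$ with output dimension $M=\lceil 2^{nR}\rceil$, and the pulled-back test $Q_n \coloneqq \Lambda^\dagger(\Phi_M)$. Then $0\le Q_n\le\id$ and $F\coloneqq\Tr[\rho^{\otimes n}Q_n]$ is the Bell overlap. The plan rests on two operator-norm bounds on $Q_n$, which follow from the PPT property of $\Lambda^\dagger$ and the bounds $\|\Phi_M\|_\infty = 1$, $\|\Phi_M^{\pt}\|_\infty = 1/M$:
\bb
\|Q_n\|_\infty\le 1\, ,\qquad \big\|Q_n^{\pt}\big\|_\infty\le \tfrac1M\le 2^{-nR}\, .
\ee
The second bound uses that $\Gamma\circ\Lambda^\dagger\circ\Gamma$ is a unital CP map. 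The first is just $Q_n\le\id$.

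For the analytic family $Z^{\otimes n}(z)$, define the scalar function $f(z) \coloneqq \Tr[Q_n Z(z)^{\otimes n}]$. It is analytic near $\overline{\mathcal O}$. By Hölder duality, on each boundary point we get
\bb
|f(z)|\le \min\big\{\|Z(z)\|_1^n,\ \|Q_n^{\pt}\|_\infty\|Z(z)^{\pt}\|_1^n\big\} .
\ee
Here we use $\Tr[QZ]=\Tr[Q^{\pt}Z^{\pt}]$. Hence $\log|f(z)|\le n\min\{\log\|Z\|_1,\log\|Z^{\pt}\|_1-R\}$ on $\partial\mathcal O$.

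The key analytic step is Hirschman's strengthening of the three-line theorem transplanted to the annulus. For $f$ analytic near the closed annulus, $\log|f(z_*)|\le\int_{\partial\mathcal O}\log|f(z)|\,k_{\mathcal O,z_*}(z)\,|\mathrm dz|$, with $k$ as in~\eqref{explicit_annular_density}. I would prove this by applying $w=\log(z-c)$, so the annulus lifts to the strip $\ln a<\Re w<\ln b$, periodic in $\Im w$. Rescaling by $\ell$ maps this to the unit strip. Hirschman's theorem on the unit strip, with density $\beta_\theta$ on the line $\Re=1$ and $\beta_{1-\theta}$ on $\Re=0$, then applies because $\log|f|$ is subharmonic and bounded above. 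The periodisation sum over $m$ and the Jacobian $1/(a\ell)$, $1/(b\ell)$ reproduce exactly~\eqref{explicit_annular_density}. This step is the main obstacle: one must justify Hirschman's bound for subharmonic functions such as $\log|f|$ (possibly $-\infty$ at zeros) on the lifted strip, where growth is controlled by periodicity. Combining everything yields $\log|\Tr[Q_n\sigma^{\otimes n}]|\le nJ_{Z,\mathcal O,z_*}(R)$.

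Finally, we convert the bound on $\sigma$ into one on $\rho$ by data processing of the sandwiched Rényi divergence under the binary measurement $\{Q_n,\id-Q_n\}$. Dropping the second outcome gives, for $\alpha>1$,
\bb
F^{\alpha}\,\Tr[Q_n\sigma^{\otimes n}]^{1-\alpha}\le 2^{(\alpha-1)n\widetilde D_\alpha(\rho\|\sigma)} ,
\ee
by additivity. Thus $\log F\le \frac{\alpha-1}{\alpha}\big[n\widetilde D_\alpha(\rho\|\sigma)+nJ\big]$, which is~\eqref{analytic_error_exponent} after optimising over $\Lambda$ and taking $\liminf$. Positivity follows because $\widetilde D_\alpha\to D$ as $\alpha\to1^+$ (support condition given) and $D+J<0$, so some $\alpha>1$ makes the bracket positive.
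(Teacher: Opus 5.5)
Your proposal is correct and follows the paper's proof essentially step by step: pull back the Bell test, bound the scalar pairing $\Tr[Z(z)^{\otimes n}Q_n]$ on both boundary circles by trace-norm duality, lift the annulus to a periodic strip and apply Hirschman's inequality, then pass from $\sigma$ to $\rho$ by binary-measurement data processing for $\widetilde D_\alpha$, keeping only the accepting term (which is exactly how the paper obtains its finite Mosonyi--Ogawa bound). The step you flag as the main obstacle is not one. The lifted function $w\mapsto\Tr[Z(c+ae^{\ell w+i\varphi})^{\otimes n}Q_n]$ is holomorphic and bounded on the closed strip because $\overline{\mathcal O}$ is compact, so the paper's Hirschman lemma applies to it directly, and boundary zeros are handled by regularising with $\log\max\{|h|,\varepsilon\}$. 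The remaining edge case $\Tr[\sigma^{\otimes n}Q_n]=0$ is trivial, since support inclusion then forces $\Tr[\rho^{\otimes n}Q_n]=0$.
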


The analytic family need not consist of states away from the reference point. This freedom is useful: trace-norm duality supplies two different boundary bounds, and Hirschman's inequality averages their smaller value with positive weights. Quantum hypothesis testing then transfers the resulting exponential bound from the reference state to the input. For a fixed reference family this yields an explicit scalar boundary integral, while optimisation over families gives the converse $E_A$ defined in~\eqref{analytic_bound}.

\subsection{A strict gap from the regularised Rains bound}

For local dimension $d$, let $F_d$ be the swap, $P_0 = (\id + F_d)/2$ and $P_1 = (\id - F_d)/2$. The Werner states are
\bb
\rho_p \coloneqq (1 - p)\frac{2P_0}{d(d + 1)} + p\frac{2P_1}{d(d - 1)}\, ,\qquad 0 \leq p \leq 1\, ,\quad d \geq 2\, .
\label{werner}
\ee
The regularised Rains bound $R^\infty$, defined in~\eqref{rains_definition}, is a standard upper bound on $\Eppt$. It was not known whether this bound always gives the PPT distillation rate~\cite{Regula2019}. The following separation shows that it does not, already for qutrit Werner states.

Symmetry makes the relevant calculations small enough to certify. On the lower-bound side, Werner twirling converts collective POVMs into a finite linear program, and one feasible $48$-copy POVM certifies the displayed rate. For the converse, the symmetric and antisymmetric sectors reduce a quadratic-form comparison to two dimensions. An analytic Werner family gives a stronger numerical upper bound; its boundary integral is bounded with directed rounding, rather than inferred from a floating-point optimisation. Figure~\ref{werner_fig} shows the resulting bounds across the NPT interval.

\begin{thm}[(Strict qutrit Werner separation)]\label{gap_Werner_thm}
For $d = 3$ and $p = 25/26$,
\bb
0.5836& \leq \Eppt(\rho_{25/26}) \leq 0.6211\\
&<R^\infty(\rho_{25/26}) = \frac{25}{26}\log5 - \log3 \approx 0.6477\, .
\label{benchmark}
\ee
Each of $E_Q$, $U$ and $E_A$, defined in~\eqref{quadratic_bound}, \eqref{one_state_quadratic_bound} and~\eqref{analytic_bound}, respectively, is strictly below $R^\infty$ on this state.
\end{thm}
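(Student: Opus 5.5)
The statement has three parts, and I would prove each from the earlier theorems plus a finite, certified computation that exploits Werner symmetry.

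\textbf{Regularised Rains bound.} First I would record the value $R^\infty(\rho_{25/26}) = \frac{25}{26}\log 5 - \log 3$. This is the known closed form for Werner states with $p$ large, from the Audenaert--Eisert--Jané--Plenio--Virmani--De Moor analysis, and I would cite it. A self-contained route is also available. For the lower bound, evaluate the one-copy Rains bound at the optimal Werner-invariant $\sigma$ and use additivity of the relevant SDP-dual certificate. For the upper bound, the same certificate gives subadditivity, since $\log\|\sigma^{\otimes n\Gamma}\|_1$ is additive. The only thing to check here is numerics: $\frac{25}{26}\log_2 5 - \log_2 3 \approx 0.64766$.

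\textbf{Lower bound $0.5836$.} I would use Theorem~\ref{regularised_formula_thm}, which gives $\Eppt \ge \frac1n L(\rho^{\otimes n})$ for every $n$. Taking $n=48$, I would exhibit an explicit POVM that commutes with the $U^{\otimes n}\otimes \bar U^{\otimes n}$ twirl. After twirling, the effects may be taken in the commutant, i.e.\ in the span of products of the projections $P_0,P_1$ on each copy, and the effects can further be symmetrised over copy permutations. Each effect is then a combination of the type projections $\Pi_k$ ($k$ antisymmetric factors), so $\Tr[\rho^{\otimes n}E_j]$ is an explicit binomial-type sum.

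The only nontrivial quantity is $\|E_j^{\Gamma}\|_\infty$. The partial transpose maps $P_0,P_1$ to combinations of $\Phi_3$ and $\id-\Phi_3$ with known coefficients, so $E_j^\Gamma$ is diagonal in the analogous type basis. Its norm is therefore a maximum over $n+1$ explicit numbers. The score becomes a linear-programming-type expression, and I would verify the chosen feasible point in exact rational arithmetic.

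\textbf{Upper bounds.} The fact $E_Q(\rho_{25/26}) < R^\infty$ would follow from Theorem~\ref{quadratic_converse_thm}. I would take the trivial extension ($A'B'$ one-dimensional) and a Werner-covariant $\KK$ acting as distinct multiples on the symmetric and antisymmetric sectors, perturbing away from right multiplication by the Rains-optimal $\sigma$, which recovers $R$. Verifying~\eqref{ancillary_form_all_copies} for all $n$ reduces, by twirling $Q_n$, to a tensor-power two-dimensional comparison that factorises. A first-order perturbation in the sector weights then shows that $V_\alpha$ strictly decreases below $R^\infty$ (similarly for $U$).

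The numerical value $0.6211$ comes from Theorem~\ref{analytic_converse_thm}. I would choose $Z(z)=(1-q(z))\frac{2P_0}{d(d+1)}+q(z)\frac{2P_1}{d(d-1)}$ with $q$ affine in $z$, together with an annulus and $z_*$ such that $\sigma=Z(z_*)$ is a suitable Werner state. Both $\|Z\|_1$ and $\|Z^\Gamma\|_1$ are then explicit piecewise functions of $q$, because $\Gamma$ acts on the two-dimensional Werner algebra by a fixed $2\times2$ matrix. The quantity $D(\rho\|\sigma)$ is a two-term classical relative entropy. I would bound $J$ rigorously with interval arithmetic: truncate the series in~\eqref{explicit_annular_density} with a tail bound, and integrate via directed-rounding quadrature using Lipschitz bounds on the integrand. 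Since $D+J<0$ at $R=0.6211$, that rate is a strong-converse rate.

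The main obstacle is this certified boundary integral together with the search for good parameters. I would optimise in floating point first and then certify a single rational parameter choice.
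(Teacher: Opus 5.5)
Your lower bound (an exactly feasible $48$-copy twirled, permutation-symmetric POVM scored via Theorem~\ref{regularised_formula_thm}, i.e.\ a feasible point of~\eqref{LP}) follows the paper. So does the analytic bound via Theorem~\ref{analytic_converse_thm}. Your self-contained route to $R^\infty(\rho_p)=R(\rho_p)$ is also fine once made precise: twirling and commutativity give $R^M(\rho_p)=R(\rho_p)$, and tensor powers of a witness in~\eqref{measured_rains_norm} remain feasible.

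The genuine gap is the claim that $E_Q$ and $U$ lie strictly below $R^\infty$. You exhibit no admissible form, and the verification you sketch would not work, for three reasons.

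First, twirling cannot reduce the all-copy hypothesis~\eqref{ancillary_form_all_copies} to invariant tests. Let $\mathcal{W}$ be the $n$-fold Werner twirl. If $\KK$ commutes with the $U\otimes\bar U$ conjugations, convexity of the quadratic form gives $\langle\mathcal{W}(Q_n),\KK^{\otimes n}\mathcal{W}(Q_n)\rangle\le\langle Q_n,\KK^{\otimes n}Q_n\rangle$. At the same time $\|\mathcal{W}(Q_n)^{\Gamma}\|_\infty\le\|Q_n^{\Gamma}\|_\infty$. So checking twirled tests says nothing about the remaining ones. Twirling helps only after the form has been bounded by an invariant linear functional, as in~\eqref{rains_quadratic_estimate} via $Q_n^2\le Q_n$.

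Second, even for invariant tests, neither $0\le Q_n\le\id$ nor $\|Q_n^{\Gamma}\|_\infty$ factorises over copies, and one-copy validity does not tensorise. The qubit example~\eqref{non_tensorising_form} exhibits exactly this failure.

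Third, a first-order perturbation of $\RR_\sigma$ cannot help if it stays inside the mechanism that certifies $\RR_\sigma$, namely $Q_n^2\le Q_n$ together with $\|\sigma^{\Gamma}\|_1\le1$. Within that class $\sigma$ is already Rains-optimal and $R(\rho_p)=R^\infty(\rho_p)$. Any improvement needs a form certified by a different argument, which you do not supply. For $U$ there is not even a base point reproducing the Rains value, so ``similarly for $U$'' has no content.

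The missing idea is to pick a form for which the all-copy hypothesis is automatic. Take $\KK=\Gamma\circ\LL_\tau\circ\Gamma$ with a state $\tau>0$ and $\alpha=2$. Then $\langle Q_n,\KK^{\otimes n}Q_n\rangle=\Tr[\tau^{\otimes n}(Q_n^{\Gamma})^2]\le\|Q_n^{\Gamma}\|_\infty^2$ for every $n$ and every test. This gives $E_Q\le U\le V_2[\rho,\KK]$ with no tensorisation issue.

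The paper takes $\tau=\frac{2}{13}\Phi_3+\frac{11}{13}\cdot\frac{\id-\Phi_3}{8}$. Werner invariance reduces the logarithm to the $2\times2$ matrix $G=D+aa^T/50$, with $D=\diag(9/25,9)$, $a=(1,-25)^T$ and $v=(5,1)^T/\sqrt{26}$. One checks that $-\frac12v^T\log D\,v=\frac{25}{26}\log5-\log3$ exactly. The resolvent formula for the rank-one update gives $v^T(\ln G-\ln D)v>0$, because $a^Tv\neq0$. Hence $U(\rho_{25/26})<0.63310<R^\infty$, and $E_Q<R^\infty$ follows.

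Separately, the paper's analytic certificate uses the M\"obius reparametrisation $w(\zeta)=c+\zeta/(1+h\zeta)$ with $h\approx-1.114$ and reference state $\rho_t$, $t=0.959483$. The certified residual is only about $-10^{-6}$, so nothing guarantees that your affine family (the case $h=0$) reaches $0.6211$. You should keep the M\"obius freedom in the parameter search.
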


\begin{figure}[t]
{\centering\includegraphics[width=.82\linewidth]{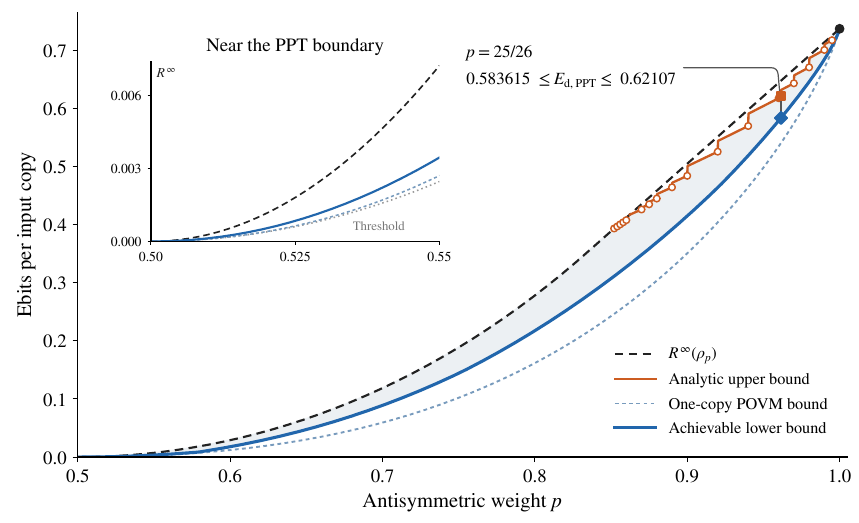}\par}
\justifying
\caption{Bounds for qutrit Werner states; the PPT region $p\leq1/2$ has zero rate. Dashed: regularised Rains bound~\eqref{rains}. Light blue: the one-copy bound $\widetilde{L}(\rho_p)$ in~\eqref{onecopy_value}. Blue: achievable bounds from POVMs on up to $48$ copies and collective PPT effects, including the valid propagation in~\eqref{admixture}. Orange: certified analytic upper bounds with the same propagation rule; the small jumps reflect the finite certificate set. The inset compares the lower bounds near the PPT boundary, including the grey threshold bound~\eqref{werner_lower}. Curves are floating-point evaluations of proved formulas; marked benchmark values have interval certificates. The black endpoint is the known exact antisymmetric rate.}\label{werner_fig}
\end{figure}

\section{Setting and operational reduction}
\label{setting_sec}

We retain the conventions of Section~\ref{main_results_sec}. On $n$ copies, partial transpose acts on the entire second party of the bipartition $A^n:B^n$. Channels are completely positive and trace preserving, and $\ln$ denotes the natural logarithm.

\subsection{One-shot and asymptotic quantities}

For $\e \in (0,1)$, define the one-shot PPT distillable entanglement by the hypothesis-testing expression
\bb
\Eppt^{(1),\,\e}(\rho) \coloneqq - \log \inf_{\substack{0 \leq Q \leq \id\\
\Tr \rho Q \geq 1 - \e}} \big\|Q^{\pt}\big\|_\infty\, .
\label{one_shot_ppt}
\ee
Rigorously speaking, the largest achievable integer Schmidt rank at error $\e$ is $\lfloor2^{\Eppt^{(1),\,\e}(\rho)}\rfloor$, by Rains's semidefinite program~\cite[Theorem~3.1]{Rains2001}. Taking its logarithm changes~\eqref{one_shot_ppt} by less than one bit, which is asymptotically irrelevant.
The PPT distillable entanglement and its strong-converse threshold are, respectively,
\bb
\Eppt(\rho) & \coloneqq \lim_{\e \to 0^+}\liminf_{n\to\infty}\frac1n\Eppt^{(1),\,\e}(\rho^{\otimes n})\, ,\\
\Eppt^\dag(\rho) & \coloneqq \lim_{\e \to 1^-}\limsup_{n\to\infty}\frac1n\Eppt^{(1),\,\e}(\rho^{\otimes n})\, .
\label{operational_rate}
\ee
Thus the first quantity describes rates achievable with vanishing error, whereas every rate above the second forces the best Bell overlap to tend to zero.

For an integer $d \geq 1$, the optimal Bell overlap in~\eqref{bell_overlap_definition} has the semidefinite representation
\bb
F_{\PPT}(\rho,d) = \max\left\{\Tr \rho Q:\ 0 \leq Q \leq \id,\ \big\|Q^{\pt}\big\|_\infty \leq d^{-1}\right\} .
\label{optimal_overlap}
\ee
The exponent in~\eqref{distillation_error_exponent} measures decay of this overlap, rather than of its square root.

For completeness, we recall the implementation of~\eqref{optimal_overlap}. A PPT channel gives $Q = \Lambda^\dag(\Phi_d)$ satisfying
\bb
0 \leq Q \leq \id,\qquad \big\|Q^{\pt}\big\|_\infty \leq d^{-1}\, ,
\label{effect}
\ee
because its conjugate adjoint is unital positive. Conversely, for $d \geq 2$, such a test defines the channel
\bb
\Lambda_Q(X) = \Tr[XQ]\Phi_d + \Tr\big[X(\id - Q)\big]\frac{\id - \Phi_d}{d^2 - 1}\, .
\label{channel}
\ee
Writing $P_\pm = (\id \pm F_d)/2$, where $F_d$ is the target swap, its transpose conjugate is
\bb
(\Gamma_{\rm out} \circ \Lambda_Q \circ \Gamma_{\rm in})(X) = \frac{\Tr\big[X(\id + dQ^{\pt})\big]}{d(d + 1)}P_+ + \frac{\Tr\big[X(\id - dQ^{\pt})\big]}{d(d - 1)}P_-\, .
\ee
Both coefficient matrices are positive, proving complete positivity. At $d = 1$ the optimal overlap is trivially one.

For binary probabilities we use $h_2(a) = - a\log a - (1 - a)\log(1 - a)$ and $\Dbin(a\|b) = a\log(a/b) + (1 - a)\log\big((1 - a)/(1 - b)\big)$, with the usual continuous endpoint conventions. 

\subsection{The Rains bound}
\label{rains_prelim_sec}

The Rains set, its relative entropy, and the corresponding regularisation are~\cite{Rains2001}
\bb
\mathcal{R}(A:B) & \coloneqq \left\{\sigma_{AB} \geq 0:\ \big\|\sigma_{AB}^{\pt_B}\big\|_1 \leq 1\right\} ,\\
R(\rho) & \coloneqq \inf_{\sigma\in\mathcal{R}(A:B)}D(\rho\|\sigma), \qquad R^\infty(\rho) \coloneqq \lim_{n\to\infty}\frac1n R(\rho^{\otimes n})\, .
\label{rains_definition}
\ee
Tensor closure of $\mathcal{R}$ makes $R$ subadditive, so the limit is also the infimum over $n$. Rains's converse gives $\Eppt(\rho) \leq R^\infty(\rho)$. Regularisation cannot in general be omitted: Wang and Duan exhibited two-qubit states with $R(\rho^{\otimes2}) < 2R(\rho)$~\cite[Theorem~1]{nonadditivity-Rains}.

Recall the measured relative entropy~\cite{Donald1986} and the measured Rains bound. A finite measurement channel $\MM$ is represented by a finite POVM $\pazocal{E} = (E_j)_j$ through $\MM(X) = \sumno_j\Tr\!\left[XE_j\right]\ketbraa{j}{j}$, where the output basis is indexed by the outcomes. Then
\bb
D^M(\rho\|\sigma) & \coloneqq \sup_{\MM}D(\MM(\rho)\|\MM(\sigma)), \qquad R^M(\rho) \coloneqq \inf_{\sigma\in\mathcal{R}(A:B)}D^M(\rho\|\sigma)\, ,
\label{measured_rains_definition}
\ee
where the supremum ranges over finite measurement channels. The following useful form follows from the variational and approximation results of Fang, Fawzi and Fawzi~\cite{Fang2025, FFF2025}.

\begin{lemma}\label{measured_rains_lemma}
For every bipartite state $\rho = \rho_{AB}$,
\bb
R^M(\rho) = \sup_{H>0:\,\left\|H^{\pt}\right\|_\infty \leq 1}\Tr\rho\log H\, .
\label{measured_rains_norm}
\ee
Moreover, denoting with $D$ the global dimension of $AB$, for all $n\in\N^+$ we have
\bb
0 \leq R(\rho^{\otimes n}) - R^M(\rho^{\otimes n}) \leq 2(D^2 + D)\log(n + D)\, .
\label{rains_measured_penalty}
\ee
In particular, $R^\infty(\rho) = \lim_{n\to\infty}\frac1n R^M(\rho^{\otimes n})$.
\end{lemma}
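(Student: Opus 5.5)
The plan has three parts: a variational formula for $R^M$ by duality, then the penalty bound \eqref{rains_measured_penalty} from the Fang--Fawzi--Fawzi approximation, then the regularised identity as a direct corollary.

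\emph{Variational formula.} First we use the variational form of the measured relative entropy, $D^M(\rho\|\sigma) = \sup_{H>0}\{\Tr\rho\log H - \Tr\sigma H + 1\}$, which holds because of Berta--Fawzi--Tomamichel and extends to subnormalised $\sigma$ (with the trace correction in the form appropriate to the convention \eqref{relative_entropy_definition}; I would check this convention carefully). Substituting into the definition of $R^M$ gives
\bb
R^M(\rho) = \inf_{\sigma\in\mathcal{R}}\sup_{H>0}\big\{\Tr\rho\log H - \Tr\sigma H + 1\big\}.
\ee
The objective is convex (indeed linear) in $\sigma$ and concave in $H$, since $\log$ is operator concave. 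Moreover $\mathcal{R}$ is convex and compact. Sion's minimax theorem therefore lets us swap the order. The inner problem then becomes
\bb
\inf_{\sigma\in\mathcal{R}}\Tr\sigma H = -\sup_{\sigma\in\mathcal{R}}\Tr\sigma(-H).
\ee
By trace-norm/operator-norm duality, and since $\Tr\sigma H = \Tr\sigma^{\pt}H^{\pt}$, this support function is controlled by $\|H^{\pt}\|_\infty$. The cleanest route is scaling. Replacing $H$ by $tH$ gives $\Tr\rho\log H + \log t - t\sup_{\sigma}\Tr\sigma H + 1$, and optimising over $t$ gives $\Tr\rho\log H - \log\sup_{\sigma\in\mathcal{R}}\Tr\sigma H$. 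Because $\sigma\geq0$ and $H>0$, this supremum equals $\|H^{\pt}\|_\infty$ up to the positivity constraint. Two cases then need checking:
\begin{itemize}
\item \emph{Upper bound on the support function.} The inequality $\Tr\sigma H\leq\|\sigma^{\pt}\|_1\|H^{\pt}\|_\infty$ holds trivially.
\item \emph{Lower bound, the delicate step.} The supremum over positive $\sigma$ with $\|\sigma^{\pt}\|_1\leq1$ need not attain $\|H^{\pt}\|_\infty$ exactly. Here I would instead invoke directly the dual formula from \cite{Fang2025}, which is stated exactly as $\sup\{\Tr\rho\log H: H>0,\ \|H^{\pt}\|_\infty\leq1\}$ after the same minimax and rescaling. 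The key observation is that the positivity constraint on $\sigma$ only enlarges the feasible set of $H$ in the dual, so we get equality after normalising.
\end{itemize}

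\emph{Penalty bound.} For \eqref{rains_measured_penalty}, the lower bound $R^M\leq R$ is immediate from data processing, $D^M\leq D$. For the upper bound we exploit permutation symmetry. Since $\mathcal{R}$ is closed under conjugation by local permutations of copies, the infimum in $R(\rho^{\otimes n})$ can be restricted to permutation-invariant $\sigma$. For such $\sigma$ the pinching with respect to $\sigma$, or the universal symmetric state $\omega_n$ with $\sigma\leq g_{n,D}\,\omega_n$, gives
\bb
D(\rho^{\otimes n}\|\sigma)\leq D^M(\rho^{\otimes n}\|\sigma)+\log(\#\text{distinct eigenvalues}),
\ee
with the number of eigenvalues polynomial, at most $(n+D)^{D^2}$ or so. The constant $2(D^2+D)$ would come from combining the mixing of $\sigma$ with $\omega_n$ (which is valid since $\omega_n$ is PPT-compatible, up to normalisation) with the pinching step. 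This is the main obstacle: to get exactly this constant, I would follow the construction in \cite{FFF2025}, where $\sigma$ is replaced by $(\sigma+\omega_n)/2$, suitably renormalised to stay in $\mathcal{R}$, at the cost of one bit, and then the pinching inequality $D\leq D^M+\log|\mathrm{spec}|$ is applied.

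\emph{Regularised identity.} Dividing \eqref{rains_measured_penalty} by $n$, the penalty term vanishes as $n\to\infty$, which gives $R^\infty(\rho)=\lim_n\frac1nR^M(\rho^{\otimes n})$.
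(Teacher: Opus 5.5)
\textbf{There is a genuine gap in the variational formula.} Your minimax and rescaling argument is fine, and it reproduces what the cited variational lemma gives:
$R^M(\rho)=\sup_{H>0:\,h_{\mathcal R}(H)\le 1}\Tr\rho\log H$, where $h_{\mathcal R}(H)\coloneqq\sup_{\sigma\in\mathcal R}\Tr\sigma H$. Your H\"older estimate $h_{\mathcal R}(H)\le\|H^{\pt}\|_\infty$ correctly proves that the right-hand side of \eqref{measured_rains_norm} is at most $R^M(\rho)$.

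The reverse inequality is not proved. As you note yourself, $h_{\mathcal R}(H)$ can be strictly smaller than $\|H^{\pt}\|_\infty$. For instance, $H=\id+cF$ with $0<c<1$ has $\|H^{\pt}\|_\infty=1+cd$, but $h_{\mathcal R}(H)\le 1+c$ because $F\le\id$ and $\Tr\sigma\le 1$. So the constraint $h_{\mathcal R}(H)\le 1$ admits strictly more $H$ than $\|H^{\pt}\|_\infty\le 1$. Your remark that positivity of $\sigma$ ``only enlarges the feasible set of $H$'' is exactly why equality is not automatic. An enlarged feasible set can only increase the supremum, so you must show that the additional $H$ do not help. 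Deferring to a reference that supposedly states the norm form directly does not close this. The variational lemma delivers the support-function form, and converting it to the norm form is the content to be proved.

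\textbf{The missing step.} You need the semidefinite dual of the Rains support function. Dualise the constraint $\sigma\ge 0$ with a multiplier $Y\ge 0$, and use $\sup_{\|\sigma^{\pt}\|_1\le 1}\Tr\sigma X=\|X^{\pt}\|_\infty$. This gives
$h_{\mathcal R}(H)=\min_{Y\ge 0}\|(H+Y)^{\pt}\|_\infty=\min_{K\ge H}\|K^{\pt}\|_\infty$.
Strong duality and attainment hold by Slater's condition, since a small multiple of $\id$ is strictly feasible. Hence every $H>0$ with $h_{\mathcal R}(H)\le 1$ is dominated by some $K\ge H$ with $\|K^{\pt}\|_\infty\le 1$. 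Operator monotonicity of $\log$ then gives $\Tr\rho\log K\ge\Tr\rho\log H$, so replacing $H$ by $K$ does not decrease the objective. This is precisely how the paper closes the argument, invoking the Rains support-function identity of \cite{FFF2025}.

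\textbf{The penalty bound is fine.} The paper simply cites \cite{FFF2025} for it. Your pinching route works and is self-contained, without the mixing with $\omega_n$. Symmetrising $\sigma$ over copy permutations keeps it in $\mathcal R(A^n\!:\!B^n)$ and does not increase $D^M(\rho^{\otimes n}\|\sigma)$, by convexity and invariance. By Schur--Weyl duality, a permutation-invariant operator on $(\mathbb{C}^D)^{\otimes n}$ has at most $\sum_\lambda\dim\mathcal{Q}_\lambda\le(n+1)^{D(D+1)/2}$ distinct eigenvalues. The pinching inequality $D\le D^M+\log(\#\,\mathrm{spec}\,\sigma)$ then already gives a penalty within $2(D^2+D)\log(n+D)$.
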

\begin{proof}
Combining~\cite[Lemma~19]{Fang2025} with the Rains support-function identity in~\cite[Eq.~(98)]{FFF2025} gives
\bb
R^M(\rho) & = \sup_{H>0:\,h_{\mathcal{R}}(H) \leq 1}\Tr\rho\log H,\qquad h_{\mathcal{R}}(H) \coloneqq \sup_{\sigma\in\mathcal{R}}\Tr\sigma H = \min_{K \geq H}\big\|K^{\pt}\big\|_\infty\, .
\label{rains_support_formula}
\ee
Thus every feasible $H$ admits $K \geq H$ with $\big\|K^{\pt}\big\|_\infty \leq 1$; operator monotonicity of $\log$ allows us to replace $H$ by $K$ without decreasing the objective. Conversely, $\big\|H^{\pt}\big\|_\infty \leq 1$ implies $h_{\mathcal{R}}(H) \leq 1$. This proves~\eqref{measured_rains_norm}. The estimate~\eqref{rains_measured_penalty} is~\cite[Lemma~16, Eq.~(24)]{FFF2025}, applied to $\{\rho^{\otimes n}\}$ and $\mathcal{R}(A^n:B^n)$; its hypotheses are verified for the Rains set in the same reference, Sec.~4.3.
\end{proof}

The following strong superadditivity of the regularised Rains bound was established by Ao, Philip and Streltsov~\cite[Lemma~1]{Ao2026}. We include a simple proof for completeness.

\begin{prop}[{\cite{Ao2026}}]\label{rains_strong_superadditivity_prop}
The regularised Rains bound is strongly superadditive:
\bb
R^\infty(\rho_{AA':BB'}) \geq R^\infty(\rho_{A:B}) + R^\infty(\rho_{A':B'})\, .
\label{rains_strong_superadditivity}
\ee
In particular, $R^\infty(\rho\otimes\sigma) = R^\infty(\rho) + R^\infty(\sigma)$.
\end{prop}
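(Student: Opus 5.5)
Strong superadditivity of $R^\infty$ will follow from the dual (measured) formula in Lemma~\ref{measured_rains_lemma}: tensor products of dual witnesses are again feasible, whereas partial traces of Rains-set operators are not obviously controlled.

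We take $n$ copies of $\rho_{AA':BB'}$, so that $\rho_{AA':BB'}^{\otimes n}$ has marginals $\rho_{A:B}^{\otimes n}$ on $A^nB^n$ and $\rho_{A':B'}^{\otimes n}$ on $A'^nB'^n$. By~\eqref{measured_rains_norm}, for any $\delta>0$ we choose $H_1>0$ on $A^nB^n$ with $\|H_1^{\pt}\|_\infty\le1$ and $\Tr\rho_{AB}^{\otimes n}\log H_1\ge R^M(\rho_{AB}^{\otimes n})-\delta$. Similarly we choose $H_2$ for $\rho_{A'B'}^{\otimes n}$. Then $H\coloneqq H_1\otimes H_2>0$. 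Its partial transpose across $A^nA'^n:B^nB'^n$ is $H_1^{\pt}\otimes H_2^{\pt}$, whose operator norm is multiplicative and hence at most $1$. So $H$ is feasible for $R^M(\rho_{AA':BB'}^{\otimes n})$. Since $\log(H_1\otimes H_2)=\log H_1\otimes\id+\id\otimes\log H_2$, the objective splits along the marginals:
\[
R^M\big(\rho_{AA':BB'}^{\otimes n}\big)\ \ge\ \Tr\rho_{AB}^{\otimes n}\log H_1+\Tr\rho_{A'B'}^{\otimes n}\log H_2\ \ge\ R^M(\rho_{AB}^{\otimes n})+R^M(\rho_{A'B'}^{\otimes n})-2\delta.
\]
We then let $\delta\to0$, divide by $n$, and send $n\to\infty$. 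The last statement of Lemma~\ref{measured_rains_lemma} identifies each regularised $R^M$ with $R^\infty$, which gives~\eqref{rains_strong_superadditivity}. One caveat: the global dimension of $AA'BB'$ enters the penalty~\eqref{rains_measured_penalty}. This is harmless because it is fixed, so the penalty is only logarithmic in $n$.

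For the product case, applying~\eqref{rains_strong_superadditivity} to $\rho\otimes\sigma$ yields $\ge$. For the reverse inequality we use subadditivity. If $\sigma_1,\sigma_2$ lie in Rains sets, then $\sigma_1\otimes\sigma_2$ is in the Rains set of the joint system, by multiplicativity of the trace norm under tensor products. Additivity of $D$ on tensor products then gives $R(\rho^{\otimes n}\otimes\sigma^{\otimes n})\le R(\rho^{\otimes n})+R(\sigma^{\otimes n})$, and regularising gives $\le$.

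The main obstacle is to make sure the dual formula~\eqref{measured_rains_norm} genuinely applies with the enlarged bipartition $AA':BB'$, and that the partial transpose of a tensor product factorises as claimed. Both follow from the definitions, since $\Gamma$ on $BB'$ is $\Gamma_B\otimes\Gamma_{B'}$. Everything else is bookkeeping.
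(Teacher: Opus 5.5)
Your proof is correct and follows essentially the same route as the paper. Product witnesses in the dual formula~\eqref{measured_rains_norm} give superadditivity of $R^M$ even for correlated inputs, the penalty~\eqref{rains_measured_penalty} transfers this to $R^\infty$, and tensor closure of the Rains set gives the reverse inequality for product inputs. The only cosmetic difference is the order of steps: the paper first proves the one-copy inequality $R^M(\rho_{AA':BB'})\geq R^M(\rho_{A:B})+R^M(\rho_{A':B'})$ and then substitutes $\rho_{AA':BB'}^{\otimes n}$, whereas you run the witness argument directly on the $n$-copy marginals.
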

\begin{proof}
Taking product witnesses in~\eqref{measured_rains_norm}, as in the proof of~\cite[Lemma~23 and Appendix~C]{Fang2025}, gives $R^M(\rho_{AA':BB'}) \geq R^M(\rho_{A:B}) + R^M(\rho_{A':B'})$: the logarithm of a tensor product is a sum of local logarithms, even when the input is correlated. Apply this to $\rho_{AA':BB'}^{\otimes n}$, divide by $n$, and use~\eqref{rains_measured_penalty}. Product inputs satisfy the reverse inequality by subadditivity of $R^\infty$.
\end{proof}

Proposition~\ref{rains_strong_superadditivity_prop} concerns arbitrary correlated inputs, rather than only tensor products. Full additivity was proved by Beigi, Rubboli and Tomamichel~\cite[Proposition~9]{BRT26}, using the asymptotic equipartition property of~\cite{Fang2025}. The measured-relative-entropy argument above makes the correlated-input extension immediate.

\section{A regularised formula for PPT distillable entanglement}
\label{achievability_sec}

We construct two functions of a state and a measurement, optimise over the measurement, and then regularise over copies. The first function retains the operator absolute values of the partially transposed effects, while the second replaces each of them by its operator norm. Although the two optimisations can differ on a single copy, their regularisations coincide and give the PPT distillable entanglement, as established in Theorem~\ref{regularised_formula_thm}.

\subsection{Definitions and elementary properties}

Let $\MM$ be a finite measurement channel represented by the POVM $\pazocal{E} = (E_j)_j$, with zero effects omitted. For a bipartite state $\rho$, define
\bb
\widetilde{\ell}(\rho,\MM) \coloneqq \sup_{y>0}\left\{\sumno_j\Tr[\rho E_j]\log y_j - \log\left\|\sumno_j y_j\big|E_j^{\pt}\big|\right\|_\infty\right\} .
\label{povm_f}
\ee
Here $y = (y_j)_j$ is a family of strictly positive real numbers, and $|H| \coloneqq (H^2)^{1/2}$ for a Hermitian operator $H$. The operators $E_j^{\pt}$ need not commute, and their absolute values need not sum to $\id$.

\begin{lemma}[(Dual representation)]\label{povm_dual_lemma}
For every state $\rho$ and finite measurement channel $\MM$ as above,
\bb
\widetilde{\ell}(\rho,\MM) = \min_\sigma \sum_{j:\ \Tr[\rho E_j]>0}\Tr[\rho E_j]\log\frac{\Tr[\rho E_j]}{\Tr\!\left[\sigma\big|E_j^{\pt}\big|\right]}\, ,
\label{povm_dual}
\ee
where the minimum is over normalised states, finite, and attained. Positive-probability terms with zero denominator have value $+\infty$, and zero-probability terms are zero.
\end{lemma}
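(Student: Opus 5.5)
We prove the identity by a minimax argument. First we eliminate the variable $\log\|\cdot\|_\infty$ through the variational formula $\|M\|_\infty=\max_\sigma \Tr[\sigma M]$ for $M\geq 0$, where the maximum runs over normalised states. Write $p_j\coloneqq\Tr[\rho E_j]$ and $G_j\coloneqq|E_j^{\pt}|\geq 0$. For fixed $y>0$,
\bb
\sumno_j p_j\log y_j - \log\Big\|\sumno_j y_j G_j\Big\|_\infty = \min_\sigma\Big\{\sumno_j p_j\log y_j - \log\sumno_j y_j\Tr[\sigma G_j]\Big\}\, .
\ee
Hence $\widetilde\ell(\rho,\MM)=\sup_{y>0}\min_\sigma \Phi(y,\sigma)$, with $\Phi(y,\sigma)\coloneqq\sum_j p_j\log y_j-\log\sum_j y_j g_j(\sigma)$ and $g_j(\sigma)\coloneqq\Tr[\sigma G_j]$.

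The inner problem over $y$ for fixed $\sigma$ is a classical Gibbs/Donsker--Varadhan-type optimisation. Since $\sum_j p_j=1$, we may assume $\sum_j y_j g_j=1$ by homogeneity. Maximising $\sum_j p_j\log y_j$ under this constraint gives $y_j=p_j/g_j$, and the optimal value is $\sum_j p_j\log(p_j/g_j(\sigma))=D(p\|g(\sigma))$. This is also justified by Jensen's inequality, which gives $\Phi\leq D(p\|g)$ since $\log\sum_j y_j g_j\geq\sum_{p_j>0}p_j\log(y_jg_j/p_j)$. Equality is approached as $y_j\to0$ for the indices with $p_j=0$. Note that $g(\sigma)$ need not be normalised, which is why no trace correction appears. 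If $g_j(\sigma)=0$ while $p_j>0$, then sending $y_j\to\infty$ gives $+\infty$, consistent with the stated convention. So $\sup_y\Phi(y,\sigma)$ equals the right-hand side of~\eqref{povm_dual} evaluated at $\sigma$.

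It then remains to exchange sup and min. We reparametrise by $u_j=\log y_j\in\R$, so that $\Phi$ becomes concave in $u$: it is linear minus log-sum-exp, and log-sum-exp is convex. In $\sigma$, $\Phi$ is convex, since $-\log$ of an affine positive function is convex. It is also lower semicontinuous into $(-\infty,+\infty]$, because the $-\log$ term may equal $+\infty$ when all relevant $g_j$ vanish. The set of states is compact and convex. Sion's minimax theorem, in the form that needs compactness only on the minimising side together with lower semicontinuity and quasi-convexity in $\sigma$ and upper semicontinuity and quasi-concavity in $u$, gives $\sup_u\min_\sigma\Phi=\min_\sigma\sup_u\Phi$. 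Attainment of the outer minimum follows because $\sigma\mapsto\sup_u\Phi(u,\sigma)$ is a supremum of lower semicontinuous functions, hence lower semicontinuous on a compact set.

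Finiteness follows by taking $\sigma=\id/D$. Then $g_j>0$ whenever $E_j\neq0$, because $G_j=0$ iff $E_j^{\pt}=0$ iff $E_j=0$, and zero effects were omitted. So the minimum is at most a finite number. A lower bound $>-\infty$ follows from $y\equiv1$, which gives $\widetilde\ell\geq -\log\|\sum_j G_j\|_\infty$.

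The main obstacle is the minimax step with extended-valued functions. Sion's theorem needs $\Phi$ to be real-valued, or at least handled carefully at $+\infty$. As a fallback, I would restrict to $\sigma_\delta=(1-\delta)\sigma+\delta\id/D$, where every $g_j(\sigma_\delta)>0$ and $\Phi$ is continuous and real. I would apply the minimax theorem on this compact convex set, and then let $\delta\to0$, using that $D(p\|g(\sigma))$ is convex and lower semicontinuous in $\sigma$ and upper bounded along the segment by convexity.
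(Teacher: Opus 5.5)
Your proof is correct. Its skeleton matches the paper's: write $\log\|\cdot\|_\infty$ as a maximum over states, exchange the supremum over weights with the minimum over states via Sion's theorem, and evaluate the inner supremum explicitly.

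Where you differ is in how the objective is made convex--concave. The paper first uses scale invariance of $\widetilde{\ell}$ in $y$ (because $\sum_j\Tr[\rho E_j]=1$) to linearise the logarithmic penalty. It replaces $-\log\sum_j y_j g_j(\sigma)$ by $(1-\sum_j y_j g_j(\sigma))/\ln 2$. The objective is then affine in $\sigma$ and concave in $y$ with no change of variables. Its inner supremum splits into scalar problems $\sup_{y_j>0}\{p_j\log y_j - y_j g_j/\ln 2\}$, which settle the cases $p_j=0$ or $g_j=0$ coordinate by coordinate. You keep the logarithm instead. You get concavity from convexity of log-sum-exp in $u=\log y$, convexity in $\sigma$ from convexity of $-\log$ of an affine function, and the inner value from Jensen plus a limiting choice of weights. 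This avoids the linearisation step, at the cost of a slightly less mechanical inner computation; both are rigorous.

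One correction: the extended-value obstacle you flag does not occur. Since $|H|\geq H$ for Hermitian $H$, one has $\sum_j|E_j^{\Gamma}|\geq\sum_j E_j^{\Gamma}=\id$. Hence $\sum_j y_j g_j(\sigma)\geq\min_j y_j>0$ for every state and every $y>0$. So your $\Phi$ is real-valued and jointly continuous, and the standard Sion theorem applies directly.

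The $\delta$-regularisation fallback is therefore unnecessary. As sketched, it would also be incomplete. Restricting the minimisation to the smaller set $\{\sigma_\delta\}$ can only increase $\min\Phi(u,\cdot)$, which is the wrong direction. So you would need a uniform-in-$u$ comparison such as $\Phi(u,\sigma_\delta)\leq\Phi(u,\sigma)-\log(1-\delta)$, which follows from affinity of $g_j$. Convexity of $\sigma\mapsto D(p\|g(\sigma))$ alone does not suffice.
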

\begin{proof}
We use the linearisation of the logarithmic penalty in~\cite[Lemma~1]{Berta2017}, followed by Sion's minimax theorem~\cite[p.~174]{Sion}. Optimising a common positive scale of $y$ in~\eqref{povm_f} gives
\bb
\widetilde{\ell}(\rho,\MM) &\eqt{(i)} \sup_{y>0}\min_\sigma\left\{\sumno_j\Tr[\rho E_j]\log y_j + \frac{1 - \sumno_j y_j\Tr\!\left[\sigma\big|E_j^{\pt}\big|\right]}{\ln2}\right\}\\
&\eqt{(ii)} \min_\sigma\sup_{y>0}\left\{\sumno_j\Tr[\rho E_j]\log y_j + \frac{1 - \sumno_j y_j\Tr\!\left[\sigma\big|E_j^{\pt}\big|\right]}{\ln2}\right\} .
\label{povm_sion}
\ee
For (i), the best scale makes $\big\|\sumno_j y_j\,\big|E_j^{\pt}\big|\big\|_\infty = 1$, and the norm is the maximum expectation over states. For (ii), the state space is compact and convex, the positive orthant is convex, and the displayed objective is continuous, affine in $\sigma$ and concave in $y$. These are precisely the hypotheses of Sion's theorem; compactness of the positive orthant is not required. The inner supremum separates over $j$ and gives~\eqref{povm_dual}: when both traces are positive, its maximiser is $y_j = \Tr[\rho E_j]\big/\Tr\!\big[\sigma|E_j^{\pt}|\big]$; the remaining cases follow by taking $y_j \to 0^+$ or $y_j\to\infty$. The maximally mixed state gives positive denominators for every nonzero effect, so the minimum is finite; lower semicontinuity and compactness show that it is attained.
\end{proof}

Bounding each denominator in~\eqref{povm_dual} with $\Tr \sigma\big|E_j^{\pt}\big| \leq \big\|E_j^{\pt}\big\|_\infty$ gives the simpler function
\bb
\widetilde{\ell}(\rho,\MM) \geq \ell(\rho,\MM) \coloneqq \sum_{j:\ \Tr[\rho E_j]>0}\Tr[\rho E_j]\log\frac{\Tr[\rho E_j]}{\big\|E_j^{\pt}\big\|_\infty}\, .
\label{povm_g}
\ee
For a partition of the identity into orthogonal projections on $V\otimes V$, this is the score in~\cite[Theorem~6.4]{Rains2001}. The two scores can be negative. Equality in~\eqref{povm_g} holds exactly when the maximal-eigenvalue eigenspaces of all positive-probability $\big|E_j^{\pt}\big|$ have a nonzero common intersection, since a minimising state in~\eqref{povm_dual} must then attain every norm simultaneously.

For comparison with $L$ in~\eqref{L_definition}, we optimise over finite measurement channels and define the auxiliary functional
\bb
\widetilde{L}(\rho) \coloneqq \sup_{\MM}\widetilde{\ell}(\rho,\MM)\, ,\qquad L(\rho) = \sup_{\MM}\ell(\rho,\MM)\, ,
\label{povm_regularised_scores}
\ee
together with its regularisation, alongside the previously defined $L^\infty$
\bb
\widetilde{L}^{\,\infty}(\rho) \coloneqq \lim_{n\to\infty}\frac1n \widetilde{L}\big(\rho^{\otimes n}\big)\, ,\qquad L^\infty(\rho) = \lim_{n\to\infty}\frac1n L\big(\rho^{\otimes n}\big)\, .
\label{FG_regularisations}
\ee
Both limits are well defined by Proposition~\ref{FG_properties_prop} below. No positive part is needed in these optimisations, since the trivial measurement gives $\widetilde{\ell} = \ell = 0$.

\begin{prop}[(Elementary properties of $\widetilde{L}$ and $L$)]\label{FG_properties_prop}
For finite-dimensional bipartite states, the following hold.
\begin{enumerate}[(i)]
\item The functions $\widetilde{L}$ and $L$ are convex and satisfy
\bb
0 \leq L(\rho) \leq \widetilde{L}(\rho) \leq \log\big\|\rho^{\pt}\big\|_1\, .
\label{fg_bounds}
\ee
In particular, they vanish on PPT states.
\item Both functions are strongly superadditive:
\bb
\widetilde{L}(\rho_{AA':BB'}) & \geq \widetilde{L}(\rho_{A:B}) + \widetilde{L}(\rho_{A':B'})\, ,\\
L(\rho_{AA':BB'}) & \geq L(\rho_{A:B}) + L(\rho_{A':B'})\, .
\label{fg_strong_superadditivity}
\ee
\item For every PPT channel $\Lambda$,
\bb
L\big(\Lambda(\rho)\big) \leq L(\rho)\, .
\label{g_data_processing}
\ee
\item If $0 \leq A \leq \id$, $0 \leq A^{\pt} \leq b\id$, and $0<b<a \coloneqq \Tr[\rho A] \leq 1$, then, for the binary measurement channel $\MM_A$ represented by $(A,\id-A)$,
\bb
\widetilde{L}(\rho) \geq \widetilde{\ell}(\rho,\MM_A) \geq \Dbin(a\|b)\, .
\label{f_binary}
\ee
\item Let $d \coloneqq \min\{d_A,d_B\} \geq 2$, and write $-\nu_1,\ldots,-\nu_r$ for the strictly negative eigenvalues of $\rho^{\pt}$. With
\bb
N \coloneqq \sum_{i = 1}^r\nu_i = \frac{\big\|\rho^{\pt}\big\|_1 - 1}{2},\qquad s \coloneqq \left(\sumno_{i = 1}^r\nu_i^2\right)^{1/2}\, ,
\label{npt_spectral_parameters}
\ee
we have
\bb
\widetilde{L}(\rho) &\geq \max\left\{\begin{gathered}\Dbin\!\left(\frac{1 + s}{1 + \sqrt{\frac{d - 1}{2d}}}\,\middle\|\,\frac{1}{1 + \sqrt{\frac{d - 1}{2d}}}\right),\\
\Dbin\!\left(\frac{d + 1 + 2N}{2d}\,\middle\|\,\frac{d + 1}{2d}\right)\end{gathered}\right\} .
\label{npt}
\ee
If $N > 0$ and $k \coloneqq \min\{\sqrt r,(d + 1)/2\}$, this further implies
\bb
\widetilde{L}(\rho) \geq 1 - h_2\!\left(\frac12 + \frac{N}{2k}\right) \geq 1 - h_2\!\left(\frac12 + \frac{N}{d + 1}\right)\, .
\label{npt_rank}
\ee
Both terms in~\eqref{npt} vanish when $N = 0$. If $d = 1$, the state is separable and $\widetilde{L}(\rho) = 0$.
\end{enumerate}
\end{prop}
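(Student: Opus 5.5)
I would prove the five items in order, relying almost entirely on the dual representation in Lemma~\ref{povm_dual_lemma} and on the elementary facts $\|X\otimes Y\|_\infty=\|X\|_\infty\|Y\|_\infty$, $|X\otimes Y|=|X|\otimes|Y|$, and $(X\otimes Y)^{\pt}=X^{\pt}\otimes Y^{\pt}$ across the bipartitions.

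\textbf{Item (i).} For convexity, note that for a fixed $\MM$ the score $\widetilde\ell(\rho,\MM)$ is a supremum over $y$ of affine functions of $\rho$. The score $\ell(\rho,\MM)$ is $-H(\MM(\rho))$ plus a linear term, and hence convex. A supremum over $\MM$ preserves convexity. The chain $0\le L\le\widetilde L$ follows from the trivial measurement and from \eqref{povm_g}. For the upper bound, I would take $\sigma=|\rho^{\pt}|/\|\rho^{\pt}\|_1$ in \eqref{povm_dual}. Then
\[
\Tr[\rho E_j]=\Tr[\rho^{\pt}E_j^{\pt}]\le \Tr\big[|\rho^{\pt}|\,|E_j^{\pt}|\big].
\]
This holds because $|\Tr[XY]|\le\Tr[|X||Y|]$ for commuting-free Hermitian $X,Y$; the cleanest route is through the pinching/Hölder-type inequality $\Tr[XY]\le \Tr[|X|\,|Y|]$, which itself needs checking. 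I expect this inequality to be the delicate point of item (i). If it fails, I would instead work in the primal \eqref{povm_f}: bound $\sum_j y_j\Tr[\rho E_j]\le\|\rho^{\pt}\|_1\,\|\sum_j y_jE_j^{\pt}\|_\infty$, use $\|\sum_j y_jE_j^{\pt}\|_\infty\le\|\sum_j y_j|E_j^{\pt}|\|_\infty$ (which holds since $-|H|\le H\le|H|$ summed), and finish with Jensen's inequality $\sum_j p_j\log y_j\le\log\sum_j p_jy_j$. The resulting bound is $\log\|\rho^{\pt}\|_1$, so this primal route avoids the problem entirely and I would adopt it.

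\textbf{Item (ii).} Given near-optimal POVMs $(E_j)$ for $\rho_{A:B}$ and $(F_k)$ for $\rho_{A':B'}$, I would use the product POVM $(E_j\otimes F_k)$ on $AA':BB'$, whose outcome distribution has marginals $p_j,q_k$. For $L$ the terms split, because $\|(E_j\otimes F_k)^{\pt}\|_\infty$ factorises and the log-ratio is additive. The marginal-based sum is then exactly $\ell_1+\ell_2$ even though the joint distribution is correlated, since the integrand is a function of $j$ plus a function of $k$ except for the entropy term $-H(\mathrm{joint})\ge -H(p)-H(q)$. For $\widetilde L$, I would take product weights $y_jz_k$. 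Then $\|\sum y_jz_k|E_j^{\pt}|\otimes|F_k^{\pt}|\|_\infty$ factorises and $\log(y_jz_k)$ splits linearly, so no entropy term even appears.

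\textbf{Items (iii) and (iv).} For (iii), I would pull back the measurement: $\MM\circ\Lambda$ has effects $\Lambda^\dag(E_j)$ with the same outcome probabilities. Since $\Gamma\circ\Lambda^\dag\circ\Gamma$ is positive and unital, we have $\Lambda^\dag(E_j)^{\pt}=\Lambda'^\dag(E_j^{\pt})$, and a positive unital map contracts the operator norm on Hermitian inputs. Item (iv) is direct: plug $(A,\id-A)$ into \eqref{povm_f}. We have $|A^{\pt}|=A^{\pt}\le b\id$ and $|(\id-A)^{\pt}|\le\id$, although the latter is not exactly what we need. Instead, use the dual \eqref{povm_dual} with denominators $\Tr\sigma A^{\pt}=:t\le b$ and $\Tr\sigma|\id-A^{\pt}|=1-t$, since $\id-A^{\pt}\ge0$. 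This gives $\min_{t\le b}\Dbin(a\|t)=\Dbin(a\|b)$ because $a>b$.

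\textbf{Item (v), the main obstacle.} Here the task is to exhibit tests. For the second bound, I would take $A=(\id-dQ)/\ldots$ modelled on the channel \eqref{channel}: let $P$ be the projector onto the negative eigenspace of $\rho^{\pt}$ and set
\[
A=\tfrac{d+1}{2d}\,\id-\tfrac{1}{?}P^{\pt}.
\]
Using $\|P^{\pt}\|_\infty\le1$ together with the fact that the partial transpose of a projector has spectrum in $[-\tfrac{d-1}{2},\ldots]$, one gets $\Tr\rho A=\frac{d+1+2N}{2d}$ and $A^{\pt}\le\frac{d+1}{2d}$, after which (iv) applies. The first bound would use $P$ replaced by the normalised negative part $\rho^{\pt}_-/s$, with the constant $\sqrt{(d-1)/2d}$ coming from the bound $\|X^{\pt}\|_\infty\le\sqrt{\cdot}\,\|X\|_2$. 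Fixing these normalisations exactly is where I expect the most work. Finally, \eqref{npt_rank} follows from $\Dbin(\tfrac12+x\|\tfrac12)\ge\ldots$ via monotonicity after rescaling to base $\tfrac12$ and using $N\le\sqrt r\,s$.
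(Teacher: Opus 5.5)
Items (i)--(iv) are correct and essentially follow the paper. Your worry in (i) is unfounded: $\Tr[XY]\le\Tr[|X||Y|]$ for Hermitian $X,Y$ follows from $X=X_+-X_-$ and $\pm Y\le|Y|$. Your H\"older-plus-operator-order route is equally valid. In (iv), going through the dual \eqref{povm_dual} and using that $t\mapsto\Dbin(a\|t)$ decreases on $[0,a]$ is a correct alternative to the paper's explicit weights $y_1=a/b$, $y_0=(1-a)/(1-b)$. The genuine gap is item (v). There the proposal lacks the key spectral estimate, and the two facts it does invoke are false.

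\textbf{Negativity bound.} The right effect is $B=\big(\tfrac{d+1}{2}\id-P^{\pt}\big)/d$. For this choice $B^{\pt}\le\tfrac{d+1}{2d}\id$ is automatic and $\Tr[\rho B]=\tfrac{d+1+2N}{2d}$. However, $0\le B\le\id$ needs $-\tfrac{d-1}{2}\id\le P^{\pt}\le\tfrac{d+1}{2}\id$, and (iv) also needs $N\le\tfrac{d-1}{2}$ so that $a\le1$. Your claim $\|P^{\pt}\|_\infty\le1$ is false. For $\rho=\Phi_d$ one has $\rho^{\pt}=F_d/d$, so $P=P_1$, and $P_1^{\pt}=(\id-d\Phi_d)/2$ has eigenvalue $-(d-1)/2$, whose modulus exceeds $1$ for $d\ge4$.

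\textbf{First bound.} There is also no symmetric estimate $\|X^{\pt}\|_\infty\le q\|X\|_2$ with $q=\sqrt{(d-1)/(2d)}$. A product pure state has $\|X^{\pt}\|_\infty=\|X\|_2=1>q$.

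\textbf{Missing idea.} Test against pure states. If $\psi$ has Schmidt probabilities $\lambda_i$, then $\psi^{\pt}$ has eigenvalues $\lambda_i$ and $\pm\sqrt{\lambda_i\lambda_j}$ for $i<j$. Hence
\[
\|\psi^{\pt}\|_1=\Big(\sum_i\sqrt{\lambda_i}\Big)^2\le d,\qquad \|(\psi^{\pt})_-\|_2^2=\frac{1-\sum_i\lambda_i^2}{2}\le q^2,\qquad \|(\psi^{\pt})_+\|_2\le1 .
\]
Since $\Tr[X^{\pt}\psi]=\Tr[X\psi^{\pt}]$, these give the two-sided bounds on $P^{\pt}$ stated above.

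For $W=(\rho^{\pt})_-\ge0$ with $\|W\|_2=s$, Cauchy--Schwarz gives the one-sided bound $-qs\id\le W^{\pt}\le s\id$. Positivity of $W$ is essential here, and the constant $q$ appears only on the negative side. The effect is then $A=(\id-W^{\pt}/s)/(1+q)$. Its partial transpose $A^{\pt}=(\id-W/s)/(1+q)$ lies in $[0,\id/(1+q)]$ because $W\le s\id$. Its acceptance is $\Tr[\rho A]=(1+s)/(1+q)$ because $\Tr[\rho W^{\pt}]=-s^2$. Taking the $\rho$-expectation of $W^{\pt}\ge-qs\id$ yields $s\le q$, which is exactly what guarantees $a\le1$ in (iv).

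Without these estimates neither normalisation in \eqref{npt} can be fixed, so (v) remains unproved. Once they are in place, your route to \eqref{npt_rank} works. Thin the accepting outcome with probability $(1+q)/2$, respectively $d/(d+1)$. Then apply data processing for $\Dbin$, monotonicity of $1-h_2$ on $[\tfrac12,1]$, and $N\le\sqrt r\,s$.
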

\begin{proof}
For (i), a fixed-POVM $\widetilde{\ell}$ is a supremum of affine functions of $\rho$, while $\ell$ is convex because $p\log p$ is convex on $[0,1]$, with $0\log0 = 0$. Taking suprema over POVMs preserves convexity. The one-outcome POVM gives zero, and $\widetilde{\ell} \geq \ell$ follows from the dual representation in Lemma~\ref{povm_dual_lemma}.

For the upper bound, the trace inequality $\Tr[XY] \leq \Tr[|X||Y|]$ for Hermitian $X,Y$ follows by writing $X = X_+ - X_-$ and using $\pm Y \leq |Y|$. Consequently, for any positive weights, Jensen's inequality gives
\bb
\sum_j\Tr[\rho E_j]\log y_j & \leq \log\left(\sumno_j y_j\Tr[\rho E_j]\right)\\
& \leq \log\Tr\!\left[\big|\rho^{\pt}\big|\sumno_j y_j\big|E_j^{\pt}\big|\right]\\
& \leq \log\big\|\rho^{\pt}\big\|_1 + \log\left\|\sumno_j y_j\big|E_j^{\pt}\big|\right\|_\infty\, .
\label{fg_upper_bound}
\ee
Taking the two suprema proves~\eqref{fg_bounds}.

For (ii), let $\MM$ and $\pazocal{N}$ be represented by the POVMs $(E_i)_i$ and $(K_j)_j$ on $AB$ and $A'B'$, respectively. Their product POVM, with weights $y_i z_j$, has a logarithmic expectation equal to the sum of the two marginal logarithmic expectations. Its weighted absolute-operator sum is the tensor product of the two marginal sums, because
\bb
\big|(E_i\otimes K_j)^{\pt}\big| = \big|E_i^{\pt}\big|\otimes\big|K_j^{\pt}\big|\, .
\label{f_product_absolute_value}
\ee
Multiplicativity of the operator norm and optimisation prove the assertion for $\widetilde{L}$. For $L$, put $p_{ij} \coloneqq \Tr[\rho_{AA':BB'}(E_i\otimes K_j)]$, with marginals $p_i$ and $p'_j$. Direct expansion gives
\bb
\ell\big(\rho_{AA':BB'},\MM\otimes\pazocal{N}\big) = \ell(\rho_{A:B},\MM) + \ell(\rho_{A':B'},\pazocal{N}) + \sum_{i,j:\,p_{ij}>0}p_{ij}\log\frac{p_{ij}}{p_i p'_j}\, .
\label{g_product_expansion}
\ee
The last term is nonnegative, which proves the second assertion after optimisation. In particular, both functions are superadditive on tensor products and cannot increase when subsystems are discarded.

For (iii), the map $T \coloneqq \Gamma_{\rm out}\circ\Lambda\circ\Gamma_{\rm in}$ is a channel, so $T^\dag$ is unital and completely positive. For an output POVM $(E_j)_j$, its pullback $(\Lambda^\dag(E_j))_j$ has the same probabilities and satisfies
\bb
\big\|(\Lambda^\dag(E_j))^{\Gamma_{\rm in}}\big\|_\infty = \big\|T^\dag(E_j^{\Gamma_{\rm out}})\big\|_\infty \leq \big\|E_j^{\Gamma_{\rm out}}\big\|_\infty\, .
\label{g_pullback_norm}
\ee
Thus the pullback can only increase $\ell$, proving~\eqref{g_data_processing}. Zero pullback effects have zero probability and are omitted.

For (iv), both partial transposes are positive because $b<1$. If $a<1$, choose $y_1 = a/b$ and $y_0 = (1 - a)/(1 - b)$, so $y_1 > y_0 > 0$ and
\bb
\big\|y_1|A^{\pt}| + y_0|(\id-A)^{\pt}|\big\|_\infty & = \big\|y_0\id + (y_1 - y_0)A^{\pt}\big\|_\infty\\
& \leq by_1 + (1 - b)y_0 = 1\, .
\label{f_binary_norm}
\ee
Substitution in the definition of $\widetilde{\ell}$ gives~\eqref{f_binary}. If $a = 1$, take $y_1 = 1/b$ and $y_0\to 0^+$; the logarithmic expectation is $\log(1/b)$ and the norm has limit at most one, proving the same bound.

For (v), we may assume $N > 0$. Let $W \coloneqq (\rho^{\pt})_-$ and let $P$ be its support projector, so that $\Tr\big[\rho W^{\pt}\big] = -s^2$ and $\Tr\big[\rho P^{\pt}\big] = -N$. For a pure state $\psi = \ketbra{\psi}$ with Schmidt probabilities $(\lambda_i)_i$, the eigenvalues of $\psi^{\pt}$ are $\lambda_i$ and $\pm\sqrt{\lambda_i\lambda_j}$ for $i<j$. Consequently, putting $q \coloneqq \sqrt{(d - 1)/(2d)}$, we have
\bb
\big\|(\psi^{\pt})_-\big\|_2^2 = \frac{1-\sum_i\lambda_i^2}{2} \leq q^2,\qquad \big\|(\psi^{\pt})_+\big\|_2 \leq \big\|\psi^{\pt}\big\|_2 = 1\, .
\label{npt_pure_state_bounds}
\ee
Since $W \geq 0$, the Hilbert--Schmidt Cauchy--Schwarz inequality gives $-qs \leq \Tr\big[W\psi^{\pt}\big] \leq s$ for every pure $\psi$, hence $-qs\id \leq W^{\pt} \leq s\id$. Together with $0 \leq W \leq s\id$, this shows that
\bb
A &\coloneqq \frac{\id-W^{\pt}/s}{1 + q},\\
0 &\leq A \leq \id,\qquad 0 \leq A^{\pt} = \frac{\id-W/s}{1 + q} \leq \frac{\id}{1 + q},\\
\Tr[\rho A] &= \frac{1 + s}{1 + q}\, .
\label{npt_weighted_effect}
\ee
Taking the expectation of $W^{\pt} \geq -qs\id$ in $\rho$ yields $s \leq q$. Thus $0 < 1/(1 + q) < (1 + s)/(1 + q) \leq 1$, and item~(iv) gives
\bb
\widetilde{L}(\rho) \geq \Dbin\!\left(\frac{1 + s}{1 + q}\,\middle\|\,\frac{1}{1 + q}\right) .
\label{npt_weighted_threshold}
\ee
including the endpoint $s = q$.

Likewise, $\Tr\big[(\psi^{\pt})_\pm\big] = (\|\psi^{\pt}\|_1\pm 1)/2$ and $\|\psi^{\pt}\|_1 = (\sum_i\sqrt{\lambda_i})^2 \leq d$ imply
\bb
-\frac{d - 1}{2}\id \leq P^{\pt} \leq \frac{d + 1}{2}\id\, .
\label{npt_projector_bounds}
\ee
In particular, $N \leq (d - 1)/2$. The effect
\bb
B &\coloneqq \frac{(d + 1)\id/2-P^{\pt}}{d},\\
0 &\leq B \leq \id,\qquad 0 \leq B^{\pt} = \frac{(d + 1)\id/2-P}{d} \leq \frac{d + 1}{2d}\id,\\
\Tr[\rho B] &= \frac{d + 1+2N}{2d}
\label{npt_projector_effect}
\ee
therefore satisfies item~(iv), which yields
\bb
\widetilde{L}(\rho) \geq \Dbin\!\left(\frac{d + 1+2N}{2d}\,\middle\|\,\frac{d + 1}{2d}\right) .
\label{npt_negativity_threshold}
\ee
including the endpoint $N = (d - 1)/2$.

For the comparison, the symmetric effects $(\id - W^{\pt}/s)/2$ and $(\id - 2P^{\pt}/(d + 1))/2$ are obtained from $A$ and $B$, respectively, by retaining an accepting outcome with probability $(1 + q)/2$ or $d/(d + 1)$. These stochastic acceptance maps act on the binary state probabilities, so data processing gives
\bb
\Dbin\!\left(\frac{1 + s}{1 + q}\,\middle\|\,\frac{1}{1 + q}\right) & \geq \Dbin\!\left(\frac{1 + s}{2}\,\middle\|\,\frac12\right) = 1 - h_2\!\left(\frac{1 + s}{2}\right),\\
\Dbin\!\left(\frac{d + 1+2N}{2d}\,\middle\|\,\frac{d + 1}{2d}\right) & \geq \Dbin\!\left(\frac12+\frac{N}{d + 1}\,\middle\|\,\frac12\right) = 1 - h_2\!\left(\frac12+\frac{N}{d + 1}\right) .
\label{f_spectral_thinning}
\ee
Since $s \geq N/\sqrt r$, this implies~\eqref{npt_rank}.\footnote{The rank estimate can also be obtained directly: partial transpose preserves the Hilbert--Schmidt norm, so $\big\|P^{\pt}\big\|_\infty \leq \big\|P^{\pt}\big\|_2 = \sqrt r$, while the preceding dimension estimates give $\big\|P^{\pt}\big\|_\infty \leq (d + 1)/2$. Thus $\big\|P^{\pt}\big\|_\infty \leq k$. Since $k \geq 1$ and $N \leq k$, the effect $(\id-P^{\pt}/k)/2$ has partial transpose between zero and $\id/2$, and acceptance $(1 + N/k)/2$; item~(iv) gives the first expression in~\eqref{npt_rank}.}
\end{proof}

Single-copy monotonicity of $\widetilde{L}$ under arbitrary PPT channels is not asserted here. In particular, the pullback proof for $L$ does not apply to $\widetilde{L}$, since absolute value does not obey $|T^\dag(X)| \leq T^\dag(|X|)$ for general unital completely positive maps. The regularisation of $\widetilde{L}$ is a PPT monotone by its operational identification with $\Eppt$ below.

For $H = \widetilde{L},L$, item~(ii) makes $H(\rho^{\otimes n})$ superadditive in $n$, while item~(i) gives $H(\rho^{\otimes n}) \leq n\log\big\|\rho^{\pt}\big\|_1$. Fekete's lemma therefore proves that the limits in~\eqref{FG_regularisations} exist and equal the corresponding suprema.

For $N > 0$, let $P$ again denote the strictly negative spectral projector of $\rho^{\pt}$. The exact spectral extrema of $P^{\pt}$ can give a further improvement, although they also depend on the negative eigenspace of $\rho^{\pt}$. Put $c \coloneqq \max\{1,\lambda_{\max}(P^{\pt})\}$ and $z \coloneqq -\lambda_{\min}(P^{\pt})>0$. Since $0 < N \leq z$, the effect $(c\id - P^{\pt})/(c + z)$ gives
\bb
\widetilde{L}(\rho) \geq \Dbin\!\left(\frac{c + N}{c + z}\,\middle\|\,\frac{c}{c + z}\right) .
\label{npt_exact_projector}
\ee
Indeed its spectrum lies in $[0,1]$ and its partial transpose in $[0,c/(c + z)]$. Since $1 \leq c \leq k$, $0<z \leq k$, $c \leq (d + 1)/2$ and $z \leq (d - 1)/2$, the maps
\bb
t \longmapsto \frac{(c + z)t+k - c}{2k},\qquad t \longmapsto \frac{(c + z)t+(d + 1)/2-c}{d}
\label{npt_exact_comparison}
\ee
are binary postprocessings. Applied to the binary pair in~\eqref{npt_exact_projector}, the first produces the pair in~\eqref{npt_rank}, while the second produces the negativity-dependent pair in~\eqref{npt}. Data processing for binary relative entropy therefore proves both comparisons. No comparison with the first term in~\eqref{npt} or optimisation over all effects is asserted.

\subsection{Operational interpretation}

We now identify the regularised functions with the operational distillation rate. The achievability argument applies a fixed measurement repeatedly and selects its typical weighted outcomes; the reverse inequality follows by pulling back the Bell test of a distillation protocol.

\begin{manualthm}{\ref{regularised_formula_thm}}[(Regularised formula for the PPT distillable entanglement)]
For every finite-dimensional bipartite state, with $L$ defined by~\eqref{L_definition},
\begin{equation}
\begin{aligned}
\Eppt(\rho) = L^\infty(\rho) = \sup_{n\in\N^+}\frac1n\,L\big(\rho^{\otimes n}\big)\, .
\end{aligned}
\tag{\ref{regularisation}}
\end{equation}
\end{manualthm}
\begin{proof}
Fix a block length $m$ and a measurement channel $\MM$ on $m$ copies, represented by the POVM $(E_x)_x$ and outputting a classical label $x$. Fix weights $(y_x)_x>0$ and put $h\coloneqq\sum_x\Tr\!\left[\rho^{\otimes m}E_x\right]\log y_x$. On $n$ independent blocks, accept the words $x^n=(x_1,\ldots,x_n)$ for which $\sum_{i=1}^n\log y_{x_i}\geq n(h-\delta)$, with any fixed $\delta>0$. By the law of large numbers, their summed effect $Q$ has acceptance tending to one as $n\to\infty$. Wordwise spectral calculus gives
\bb
-\bigotimes_{i=1}^n\big|E_{x_i}^{\pt}\big| \leq \bigotimes_{i=1}^n E_{x_i}^{\pt} \leq \bigotimes_{i=1}^n\big|E_{x_i}^{\pt}\big|\, .
\label{wordwise_spectral_bound}
\ee
Each accepted word has weight at least $2^{n(h - \delta)}$. Summing the corresponding positive bounds, and then adding the omitted positive words, yields
\bb
\big\|Q^{\pt}\big\|_\infty \leq 2^{-n(h - \delta)}\left\|\sumno_x y_x\big|E_x^{\pt}\big|\right\|_\infty^n\, .
\label{weighted_typical_bound}
\ee
Letting $\delta\to0^+$, equation~\eqref{channel} therefore achieves every positive rate below the objective in~\eqref{povm_f}, evaluated on $\rho^{\otimes m}$, divided by $m$. Discarding fewer than $m$ unused copies covers arbitrary total lengths. Optimising the fixed weights and measurement gives $\Eppt(\rho) \geq \widetilde{L}(\rho^{\otimes m})/m$, and hence $\Eppt(\rho) \geq \widetilde{L}^{\,\infty}(\rho) \geq L^\infty(\rho)$.

Conversely, pull back the Bell projection of an achieving protocol to $Q_n$. If its overlap is $1-\e_n$ and output rank is $d_n$, the measurement represented by $(Q_n,\id - Q_n)$ gives
\bb
L\big(\rho^{\otimes n}\big) \geq (1-\e_n)\log d_n - h_2(\e_n) - \e_n\log(1 + 1/d_n)\, .
\label{L_protocol_bound}
\ee
Indeed~\eqref{effect} and the triangle inequality bound the two partial-transpose norms by $1/d_n$ and $1 + 1/d_n$. Divide by $n$, let $\e_n\to0$, and take the supremum over achievable rates to obtain $L^\infty(\rho) \geq \Eppt(\rho)$. The superadditivity in Proposition~\ref{FG_properties_prop}(ii) and the linear bound in item~(i) give the limits and supremum formulae by Fekete's lemma.
\end{proof}

The same proof establishes the useful auxiliary identity
\bb
\widetilde{L}^{\,\infty}(\rho) = \Eppt(\rho)\, ,\qquad \widetilde{L}(\rho) \leq \Eppt(\rho)\, .
\label{auxiliary_regularisation}
\ee

Together with Proposition~\ref{FG_properties_prop}(v), Theorem~\ref{regularised_formula_thm} turns~\eqref{npt} into an explicit lower bound on $\Eppt(\rho)$ in terms of the spectrum of $\rho^{\pt}$, with the negativity relaxation in~\eqref{npt_rank}. In particular, every NPT state has a strictly positive rate. This makes quantitative the qualitative distillability theorem of Eggeling, Vollbrecht, Werner and Wolf~\cite[Theorem~1]{Eggeling2001}. The regularised functions are both monotone under PPT channels, either by their operational interpretation or by data processing for $L$ on every tensor power.

\section{Converse statements}

The regularised characterisation concerns vanishing-error rates. By retaining finite-block success bounds in the converse arguments, we also obtain exponential strong converses. We present the two converse constructions in separate subsections.

\subsection{Converse via quadratic forms on operator spaces}

The main result of this subsection is the conditional bound in Theorem~\ref{quadratic_converse_thm}. We first establish the data-processing inequality on which its proof is based.

\subsubsection{Data processing and the conditional converse}

For a positive operator $\omega \geq 0$, recall that $\LL_\omega(X) = \omega X$ denotes left multiplication on the Hilbert--Schmidt operator space. Note that, with respect to the Hilbert--Schmidt scalar product, we have $\LL_\omega \geq 0$. In what follows, positivity of a super-operator will always refer to the Hilbert--Schmidt scalar product. For two positive definite (super-)operators $A,B$, write
\bb
A\#_t B \coloneqq A^{1/2}\big(A^{-1/2}BA^{-1/2}\big)^tA^{1/2}
\label{weighted_geometric_mean}
\ee
for their $t$-weighted geometric mean~\cite{BHATIA}. For $t\in[0,1]$, we can extend the definition to all positive semi-definite $A,B \geq 0$ by setting $A\#_tB \coloneqq \lim_{\e\to 0^+} (A + \e\id)\#_t (B + \e\id)$. Outside that range we use the expression only when the displayed power is defined; in particular, $\KK\#_{1 + s}\LL_\omega$ is well-defined for $\KK>0$, $\omega \geq 0$ and $s>0$.

For $V_\alpha$ defined in~\eqref{V_definition}, polar decomposition gives the equivalent representation
\bb
V_\alpha[\omega,\KK] = \frac{1}{\alpha\ln 2}\lim_{t\to 0^+}\frac1t\left(\Tr[\omega] - \Tr\big[(\LL_\omega\#_t\KK)(\id)\big]\right) .
\label{V_geometric_derivative}
\ee
The logarithm is restricted to the support of the super-operator on which it acts. This causes no difficulty, since its kernel coincides with that of $\LL_{\sqrt\omega}$, and $\sqrt\omega$ belongs to the orthogonal complement of that kernel. This representation remains valid for singular $\omega$; the details, including continuity, are given in Appendix~\ref{quadratic_support_app}.

This expression makes it especially simple to prove data processing. We recall the properties that we need.

\begin{lemma}[(Transformer inequality~{\cite[Chapter~4]{BHATIA}})]\label{geometric_transformer_lemma}
For all positive semi-definite operators $A,B$, all linear maps $M$ between finite-dimensional Hilbert spaces, and all $t\in[0,1]$, it holds that
\bb
\big(MA M^\dag\big)\#_t\big(MB M^\dag\big) \geq M(A\#_tB)M^\dag\, ,
\label{transformer}
\ee
with equality if $M$ is invertible. Moreover, the weighted geometric mean is jointly monotone and jointly concave in its two arguments. Singular arguments are understood by decreasing positive regularisation.
\end{lemma}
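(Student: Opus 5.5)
This lemma is a standard fact from the theory of operator means, cited from~\cite[Chapter~4]{BHATIA}. I would reconstruct it in three steps: prove the invertible case, then the transformer inequality for positive definite arguments, then pass to singular arguments by regularisation.

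\textbf{Invertible case.} Here the identity follows from the variational characterisation. Equivalently, one can use the uniqueness of the solution to the Riccati-type relation behind $A\#_tB$. Concretely, for invertible $M$ and positive definite $A,B$, write $MAM^\dag = (MA^{1/2}U)(MA^{1/2}U)^\dag$, where $U$ is the unitary in the polar decomposition $MA^{1/2} = (MAM^\dag)^{1/2}U^\dag$. Then
\[
(MAM^\dag)^{-1/2}MBM^\dag(MAM^\dag)^{-1/2} = U A^{-1/2}BA^{-1/2}U^\dag .
\]
Taking the $t$-th power commutes with this unitary conjugation, and reassembling the factors gives $M(A\#_tB)M^\dag$.

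\textbf{Transformer inequality for positive definite $A,B$ and general $M$.} I would first reduce to $M$ being a contraction onto a subspace, i.e.\ a compression $M = PV$. This is done by factoring $M = M_1 W$, with $M_1$ invertible on its range (enlarging it to be invertible by a small perturbation) and $W$ a co-isometry, and then invoking the invertible case. The core inequality then reads $W(A\#_tB)W^\dag \leq (WAW^\dag)\#_t(WBW^\dag)$ for a compression $W$. I would obtain it from the integral representation of the operator monotone function $x\mapsto x^t$,
\[
x^t = \frac{\sin \pi t}{\pi}\int_0^\infty \lambda^{t-1}\frac{x}{x+\lambda}\,\mathrm{d}\lambda .
\]
This expresses $A\#_tB$ as a positive integral of parallel sums $\lambda^{t-1}\big(A:(B/\lambda)\big)$, up to normalisation. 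For parallel sums, the inequality $W(A:B)W^\dag \leq (WAW^\dag):(WBW^\dag)$ is immediate from the variational formula
\[
\langle v,(A:B)v\rangle = \inf_{v=x+y}\big(\langle x,Ax\rangle+\langle y,By\rangle\big),
\]
because restricting the decomposition to vectors in the range of $W^\dag$ can only increase the infimum. The same integral representation gives joint monotonicity and joint concavity directly. Parallel sums are jointly monotone, and they are jointly concave, as an infimum of jointly affine functions.

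\textbf{Singular arguments.} I would pass to the limit $\varepsilon\to0^+$ in $(A+\varepsilon\id)\#_t(B+\varepsilon\id)$. By joint monotonicity this family is monotonically decreasing in $\varepsilon$ and bounded below by $0$, so the limit exists. All the inequalities above are preserved under monotone limits. For the transformer inequality, $M(A+\varepsilon\id)M^\dag$ differs from $MAM^\dag+\varepsilon\id$. I would compare using monotonicity: $M(A+\varepsilon\id)M^\dag \leq MAM^\dag+\varepsilon\|M\|^2\id$. The right-hand regularisation also converges to the same limit, since any decreasing positive regularisation gives the same limit; this follows again from monotonicity and sandwiching.

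\textbf{Main obstacle.} I expect this last point to be the only delicate one: the independence of the limit from the choice of regulariser, and checking that equality is retained for invertible $M$ when the arguments are singular.
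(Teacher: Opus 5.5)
The paper gives no proof of this lemma; it only cites Bhatia. Your reconstruction is the standard Ando/Kubo--Ando route, and its core is sound. For $A,B>0$ and $0<t<1$, the identity $A\#_tB=\frac{\sin\pi t}{\pi}\int_0^\infty\lambda^{t-1}\big(A:(B/\lambda)\big)\,\mathrm d\lambda$ is correct. The Anderson--Duffin formula gives each integrand joint monotonicity, joint concavity and the congruence inequality, and the positive weights pass these properties to $\#_t$. Your final sandwich for singular $A,B$ is also fine. The point you flag as the main obstacle, independence of the regulariser, is settled by that same sandwich. If $X_n\downarrow X$ with $X_n>0$, then $X\le X_n\le X+\delta\id$ for large $n$, and monotonicity pins the limit.

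The step that does not work as written is the reduction to co-isometries.
\begin{itemize}
\item A factorisation $M=M_1W$ with $W$ a co-isometry and $M_1$ invertible exists only if $M$ is surjective. Otherwise $MAM^\dag$ is singular even for $A>0$, so you cannot stay among positive definite arguments.
\item If you perturb $M_1$ to an invertible map and remove the perturbation, you need a limit statement at singular points, and $\#_t$ is not continuous there. For distinct rank-one projections $P\ne Q$ and $0<t<1$ one has $P\#_tQ=0$, while $P\#_tP=P$.
\item The limit does go the right way, because $\#_t$ is upper semicontinuous. If $X_n\to X$ and $Y_n\to Y$, then eventually $X_n\le X+\delta\id$ and $Y_n\le Y+\delta\id$. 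Monotonicity then bounds every limit point of $X_n\#_tY_n$ by $(X+\delta\id)\#_t(Y+\delta\id)$, hence by $X\#_tY$.
\item Applied to invertible $M_n\to M$, after padding with zero blocks to make $M$ square, this observation alone derives the transformer inequality from the invertible equality.
\end{itemize}
The cleanest repair is to drop the reduction altogether. Your variational argument gives $M(A:B)M^\dag\le(MAM^\dag):(MBM^\dag)$ verbatim for every linear $M$ and all positive semi-definite $A,B$. You then only need the integral formula for singular arguments. This follows from $(C+\e\id):(D+\e\id)\downarrow C:D$, which is again immediate from the variational formula, together with dominated convergence.

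Two small slips:
\begin{itemize}
\item A Riccati characterisation exists only for $t=1/2$. It is your explicit polar-decomposition computation that proves the invertible case.
\item In that computation $U$ and $U^\dag$ are interchanged. With your convention $MA^{1/2}=(MAM^\dag)^{1/2}U^\dag$, the conjugated operator is $U^\dag A^{-1/2}BA^{-1/2}U$.
\end{itemize}
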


\begin{lemma}[(Data processing)]\label{quadratic_data_processing_lemma}
Let $\Lambda$ be a quantum channel, and let $\KK,\KK'>0$ be super-operators on its input and output operator spaces, respectively, such that
\bb
\Lambda\circ\KK\circ\Lambda^\dag \leq \KK'\, .
\label{paired_form_condition}
\ee
Then, for every $\omega \geq 0$ and every $\alpha>0$, we have
\bb
V_\alpha[\omega,\KK] \geq V_\alpha[\Lambda(\omega),\KK']\, .
\label{paired_data_processing}
\ee
The same statement holds for channels between finite direct sums of matrix algebras.
\end{lemma}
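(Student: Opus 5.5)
We plan to work entirely with the geometric-mean representation~\eqref{V_geometric_derivative}. It reduces~\eqref{paired_data_processing} to a comparison of the quantities $\Tr\big[(\LL_\omega\#_t\KK)(\id)\big]$ for small $t>0$. The trace of $\omega$ is preserved by $\Lambda$, so the first term cancels in the difference. It therefore suffices to show, for every $t\in(0,1)$,
\bb
\Tr\big[(\LL_{\Lambda(\omega)}\#_t\KK')(\id)\big] \geq \Tr\big[(\LL_\omega\#_t\KK)(\id)\big]\, .
\ee
Dividing by $t$ and letting $t\to0^+$ then gives $V_\alpha[\omega,\KK]\geq V_\alpha[\Lambda(\omega),\KK']$, because of the minus sign in~\eqref{V_geometric_derivative}. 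The factor $1/(\alpha\ln 2)$ is positive and common to both sides.

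To prove the displayed inequality, we will regard $\Lambda^\dag$ as a linear map $M$ between the Hilbert--Schmidt spaces, going from output to input, so that $M^\dag=\Lambda$. Since $\Lambda$ is trace preserving, $\Lambda^\dag(\id)=\id$. This lets us write
\bb
\Tr\big[(\LL_\omega\#_t\KK)(\id)\big] = \Braket{\id,(\LL_\omega\#_t\KK)(\id)} = \Braket{\id,\big(\Lambda\circ(\LL_\omega\#_t\KK)\circ\Lambda^\dag\big)(\id)}\, .
\ee
The transformer inequality of Lemma~\ref{geometric_transformer_lemma}, applied with this $M$, gives
\bb
\Lambda\circ(\LL_\omega\#_t\KK)\circ\Lambda^\dag \leq (\Lambda\circ\LL_\omega\circ\Lambda^\dag)\#_t(\Lambda\circ\KK\circ\Lambda^\dag)\, .
\ee
By hypothesis~\eqref{paired_form_condition}, and joint monotonicity of $\#_t$, we may replace the second argument by $\KK'$. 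It then remains to compare the first argument with $\LL_{\Lambda(\omega)}$.

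This comparison is the main obstacle. We will aim to prove the operator inequality
\bb
\Lambda\circ\LL_\omega\circ\Lambda^\dag \leq \LL_{\Lambda(\omega)}
\ee
on the output Hilbert--Schmidt space. Once it holds, monotonicity of $\#_t$ in its first argument finishes the chain. Written as a quadratic form, the required inequality reads $\Tr\big[\Lambda^\dag(Y)^\dag\,\omega\,\Lambda^\dag(Y)\big]\leq \Tr\big[Y^\dag\Lambda(\omega)Y\big]$, which we can rewrite as
\bb
\Tr\big[\omega\,\Lambda^\dag(Y)\Lambda^\dag(Y)^\dag\big]\leq\Tr\big[\omega\,\Lambda^\dag(YY^\dag)\big]\, .
\ee
This is the Kadison--Schwarz inequality $\Lambda^\dag(Y)\Lambda^\dag(Y)^\dag\leq\Lambda^\dag(YY^\dag)$, valid because $\Lambda^\dag$ is unital and completely positive (indeed $2$-positive suffices), paired against $\omega\geq0$. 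We need to track carefully whether left multiplication corresponds to $YY^\dag$ or to $Y^\dag Y$; in either case Kadison--Schwarz holds in both orders for a unital completely positive map, so no asymmetry arises.

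Singular $\omega$ is handled by the regularised definition of $\#_t$ through decreasing positive perturbations, together with the continuity established in Appendix~\ref{quadratic_support_app}. We will first run the argument with $\LL_\omega+\e\id$ and then let $\e\to0^+$. The limit $t\to0^+$ is taken only after the inequality between traces is established for each fixed $t$. For channels between finite direct sums of matrix algebras, the same argument applies verbatim: the Hilbert--Schmidt structure, the Kadison--Schwarz inequality, and unitality of the adjoint all persist, with $\id$ replaced by the unit of the algebra.
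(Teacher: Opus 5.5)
Your proposal is correct and follows the paper's proof essentially step for step: Kadison--Schwarz for the unital completely positive map $\Lambda^\dag$ gives $\Lambda\circ\LL_\omega\circ\Lambda^\dag\leq\LL_{\Lambda(\omega)}$; joint monotonicity together with the hypothesis and the transformer inequality then compare the geometric means; pairing with $\id$ via $\Lambda^\dag(\id)=\id$ and taking the $t\to0^+$ derivative in the geometric-mean representation of $V_\alpha$ finishes the argument. The only slip is notational: the transformer inequality is applied with $M=\Lambda$, so that $MAM^\dag=\Lambda\circ A\circ\Lambda^\dag$, rather than with $M=\Lambda^\dag$, although the inequality you actually display is the correct one.
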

\begin{proof}
The Schwarz inequality for the unital completely positive map $\Lambda^\dag$ gives
\bb
\Braket{Z,\LL_{\Lambda(\omega)}[Z]} & = \Tr Z^\dag\Lambda(\omega)Z\\
& = \Tr\omega\Lambda^\dag(ZZ^\dag)\\
& \geq \Tr\omega\Lambda^\dag(Z)\Lambda^\dag(Z)^\dag\\
& = \Braket{Z,\big(\Lambda\circ\LL_\omega\circ\Lambda^\dag\big)[Z]} .
\label{Schwarz_left_multiplication}
\ee
Hence, for all $t\in[0,1]$, Lemma~\ref{geometric_transformer_lemma} implies that
\bb
\LL_{\Lambda(\omega)}\#_t\KK' &\geqt{(i)} \big(\Lambda\circ\LL_\omega\circ\Lambda^\dag\big)\#_t \big(\Lambda\circ\KK\circ\Lambda^\dag\big)\\
&\geqt{(ii)} \Lambda\circ(\LL_\omega\#_t\KK)\circ\Lambda^\dag\, ,
\label{geometric_data_processing}
\ee
where~(i) uses joint monotonicity, and~(ii) is the transformer inequality. Taking the expectation in the identity operator, and remembering that $\Lambda^\dag(\id) = \id$, we have
\bb
\Tr\big[(\LL_{\Lambda(\omega)}\#_t\KK')[\id]\big] \geq \Tr\big[(\LL_\omega\#_t\KK)[\id]\big] .
\ee
At $t = 0$ both sides equal $\Tr\omega$. Subtracting from this common value, dividing by $\alpha t\ln 2$, and taking $t\to 0^+$ proves~\eqref{paired_data_processing}.
\end{proof}

We will also use three elementary properties of~\eqref{V_definition}. Joint concavity of the geometric mean implies joint convexity of $V_\alpha$ in $(\omega,\KK)$. For $\lambda,\mu>0$, homogeneity gives
\bb
V_\alpha[\lambda\omega,\mu\KK] = \lambda V_\alpha[\omega,\KK] + \frac{\lambda\Tr\omega}{\alpha}\log\frac{\lambda}{\mu}\, .
\label{V_scaling}
\ee
Finally, the tensor-product identity for geometric means gives, for normalised states $\omega_1,\omega_2$,
\bb
V_\alpha[\omega_1\otimes\omega_2,\KK_1\otimes\KK_2] = V_\alpha[\omega_1,\KK_1] + V_\alpha[\omega_2,\KK_2]\, .
\label{V_additivity}
\ee
Direct sums satisfy the analogous identity, without normalisation of the summands. Here and below a direct-sum output is viewed as a classical register together with its conditional quantum systems; only its block-diagonal operator algebra is needed.

\begin{lemma}\label{conditional_test_lemma}
Let $\omega = \omega_{AA':BB'}$ be a state, and let $\KK>0$ and $\JJ>0$ act on the operator spaces of $AA':BB'$ and $A':B'$, respectively. Assume that, for some $\alpha>0$,
\bb
\Braket{Q\otimes X,\KK[Q\otimes X]} \leq \big\|Q^{\pt}\big\|_\infty^\alpha\Braket{X,\JJ[X]}
\label{ancillary_form_one_copy}
\ee
for every test $0 \leq Q \leq \id$ on $AB$ and every operator $X$ on $A'B'$. Then, for every nonzero such test, putting $p \coloneqq \Tr\omega(Q\otimes\id)$, we have
\bb
V_\alpha[\omega_{AA':BB'},\KK] - V_\alpha[\omega_{A':B'},\JJ] \geq p\log\frac1{\big\|Q^{\pt}\big\|_\infty} - (1 - p)\log\left(1 + \big\|Q^{\pt}\big\|_\infty\right) - \frac2\alpha h_2(p)\, .
\label{conditional_test}
\ee
\end{lemma}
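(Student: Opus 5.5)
The plan is to apply the data-processing inequality of Lemma~\ref{quadratic_data_processing_lemma} to the binary measurement $(Q,\id-Q)$ on $AB$, keeping the ancilla $A'B'$ intact. Then the scaling identity~\eqref{V_scaling} and joint convexity of $V_\alpha$ should reduce the output side to $V_\alpha[\omega_{A':B'},\JJ]$ plus the displayed scalar terms.

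\emph{The channel.} Let $\Lambda$ map $\omega_{AA':BB'}$ to the direct sum $\omega_1\oplus\omega_0$, where
\[
\omega_1 \coloneqq \Tr_{AB}\big[(Q\otimes\id)\,\omega\big],\qquad \omega_0 \coloneqq \Tr_{AB}\big[((\id-Q)\otimes\id)\,\omega\big].
\]
Both blocks are operators on $A'B'$, with $\Tr\omega_1=p$ and $\Tr\omega_0=1-p$. This is a channel into a finite direct sum of matrix algebras, which the last sentence of Lemma~\ref{quadratic_data_processing_lemma} allows. Its adjoint is $\Lambda^\dag(X_1\oplus X_0)=Q\otimes X_1+(\id-Q)\otimes X_0$.

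\emph{The output form $\KK'$.} For the positive form $\KK$ and any $t>0$, the weighted Cauchy--Schwarz inequality $\braket{a+b,\KK[a+b]}\le(1+t)\braket{a,\KK[a]}+(1+t^{-1})\braket{b,\KK[b]}$ applies. I will take $1+t=1/p$, so that $1+t^{-1}=1/(1-p)$, assuming first $0<p<1$. The hypothesis~\eqref{ancillary_form_one_copy} applies to $Q$, and also to the test $\id-Q$, whose partial transpose satisfies $\|(\id-Q)^{\pt}\|_\infty\le 1+\|Q^{\pt}\|_\infty$. Combining these gives~\eqref{paired_form_condition} with
\[
\KK' \coloneqq \frac{\|Q^{\pt}\|_\infty^\alpha}{p}\,\JJ \;\oplus\; \frac{(1+\|Q^{\pt}\|_\infty)^\alpha}{1-p}\,\JJ ,
\]
which is strictly positive. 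Data processing then yields $V_\alpha[\omega,\KK]\ge V_\alpha[\omega_1\oplus\omega_0,\KK']$.

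\emph{Evaluating the output.} By the direct-sum additivity noted after~\eqref{V_additivity}, the right-hand side splits into the two blocks. Writing $\omega_1=p\tau_1$ and applying~\eqref{V_scaling}, the first block becomes
\[
p\,V_\alpha[\tau_1,\JJ]+\frac p\alpha\log\frac{p^2}{\|Q^{\pt}\|_\infty^\alpha} \;=\; p\,V_\alpha[\tau_1,\JJ]+p\log\frac1{\|Q^{\pt}\|_\infty}+\frac{2}{\alpha}\,p\log p .
\]
The second block, with $\omega_0=(1-p)\tau_0$, similarly gives
\[
(1-p)V_\alpha[\tau_0,\JJ]-(1-p)\log\big(1+\|Q^{\pt}\|_\infty\big)+\frac2\alpha(1-p)\log(1-p).
\]
The entropic pieces sum to $-\tfrac2\alpha h_2(p)$. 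Since $p\tau_1+(1-p)\tau_0=\omega_{A':B'}$, joint convexity of $V_\alpha$ (with $\JJ$ fixed) gives $pV_\alpha[\tau_1,\JJ]+(1-p)V_\alpha[\tau_0,\JJ]\ge V_\alpha[\omega_{A':B'},\JJ]$. Rearranging yields~\eqref{conditional_test}.

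\emph{Main obstacle.} I expect the main difficulty to be the edge cases and the bookkeeping of the weights, rather than any deep step. When $p\in\{0,1\}$, one block of $\Lambda(\omega)$ vanishes. In that case I would either drop that block entirely, using only $\Lambda^\dag$ restricted to the surviving summand so that no Cauchy--Schwarz splitting is needed, or run the argument with $t\to 0,\infty$ and pass to the limit using continuity and the convention $0\log0=0$. Since $Q\ne0$, $\|Q^{\pt}\|_\infty>0$, so the right-hand side of~\eqref{conditional_test} stays well defined. I should also check that the weighted Cauchy--Schwarz split is what produces exactly the factor $2/\alpha$ in front of $h_2(p)$: one factor of $p$ comes from the scaling of $\omega$, and the other from $\KK'$.
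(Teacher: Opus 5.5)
Your proposal follows the paper's proof: the same binary measurement channel retaining $A'B'$ and the same $\KK'$ (your weighted Cauchy--Schwarz with $1+t=1/p$ is the paper's convexity of the squared norm with weights $p,1-p$), followed by data processing, direct-sum additivity, the scaling identity~\eqref{V_scaling} and convexity in the physical state. For $p\in\{0,1\}$ the paper instead perturbs $Q$ to $(1-\e)Q+\e\id/2$; of your two options, keep the second (general $t$, then $t\to0^+$ or $t\to\infty$, with the vanishing block contributing $V_\alpha[0,\cdot]=0$), because restricting $\Lambda^\dag$ to one summand gives a map that is not unital, so Lemma~\ref{quadratic_data_processing_lemma} does not apply to it directly.
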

\begin{proof}
First assume that $0<p<1$. Consider the channel that measures $\{Q,\id - Q\}$ on $AB$ and retains the outcome together with $A'B'$. Its adjoint and its action on $\omega$ are
\bb
\Lambda^\dag(X_0\oplus X_1)& = Q\otimes X_0 + (\id - Q)\otimes X_1\, ,\\
\Lambda(\omega)& = p\,\omega^{(0)}_{A':B'}\oplus(1 - p)\,\omega^{(1)}_{A':B'}\, ,
\label{conditional_measurement}
\ee
where $\omega^{(x)}_{A':B'}$ is the normalised reduced state conditioned on the outcome $x\in\{0,1\}$, where $x = 0$ corresponds to $Q$ and $x = 1$ to $\id - Q$. In particular, $p\,\omega^{(0)}_{A':B'} + (1 - p)\,\omega^{(1)}_{A':B'} = \omega_{A':B'}$. The squared norm associated with $\KK$ satisfies
\bb
&\Braket{X_0\oplus X_1,\big(\Lambda\circ\KK\circ\Lambda^\dag\big)[X_0\oplus X_1]}\\
&\qquad = \Braket{Q\otimes X_0 + (\id - Q)\otimes X_1,\, \KK\big[Q\otimes X_0 + (\id - Q)\otimes X_1\big]}\\
&\qquad = \Braket{p\, \frac{Q\otimes X_0}{p} + (1 - p)\, \frac{(\id - Q)\otimes X_1}{1 - p},\, \KK\Big[p\, \frac{Q\otimes X_0}{p} + (1 - p)\, \frac{(\id - Q)\otimes X_1}{1 - p}\Big]}\\
&\qquad \leqt{(i)} p\Braket{\frac{Q\otimes X_0}{p},\KK\Big[\frac{Q\otimes X_0}{p}\Big]} + (1 - p) \Braket{\frac{(\id - Q)\otimes X_1}{1 - p},\KK\Big[\frac{(\id - Q)\otimes X_1}{1 - p}\Big]}\\
&\qquad = \frac1p\Braket{Q\otimes X_0,\KK[Q\otimes X_0]} + \frac1{1 - p}\Braket{(\id - Q)\otimes X_1,\KK[(\id - Q)\otimes X_1]}\\
&\qquad \leqt{(ii)} \frac{\big\|Q^{\pt}\big\|_\infty^\alpha}{p}\Braket{X_0,\JJ[X_0]} + \frac{\big(1 + \big\|Q^{\pt}\big\|_\infty\big)^\alpha}{1 - p} \Braket{X_1,\JJ[X_1]}\\
&\qquad \eqt{(iii)} \Braket{X_0\oplus X_1,\, \KK'[X_0\oplus X_1]} ,
\label{measurement_form_bound}
\ee
where the inequality in~(i) is convexity of a squared norm, applied with weights $p$ and $1 - p$, that in~(ii) follows from~\eqref{ancillary_form_one_copy} and the fact that $\big\|\id - Q^{\pt}\big\|_\infty \leq 1 + \big\|Q^{\pt}\big\|_\infty$, and in~(iii) we defined
\bb
\KK'[X_0\oplus X_1] \coloneqq \frac{\big\|Q^{\pt}\big\|_\infty^\alpha}{p} \JJ[X_0] \oplus \frac{\left(1 + \left\|Q^{\pt}\right\|_\infty\right)^\alpha}{1 - p} \JJ[X_1]\, .
\ee
Now,
\bb
V_\alpha[\omega_{AA':BB'},\KK] \ &\geqt{(iv)} \ V_\alpha\big[p\,\omega^{(0)}_{A':B'} \oplus (1 - p)\,\omega^{(1)}_{A':B'},\, \KK' \big]\\
&\eqt{(v)} \ V_\alpha\!\left[p\,\omega^{(0)}_{A':B'},\frac{\big\|Q^{\pt}\big\|_\infty^\alpha}{p}\JJ\right] + V_\alpha\!\left[(1 - p)\,\omega^{(1)}_{A':B'},\frac{\big(1 + \big\|Q^{\pt}\big\|_\infty\big)^\alpha}{1 - p}\JJ\right]\\
&\eqt{(vi)} \ p\,V_\alpha\big[\omega^{(0)}_{A':B'},\JJ\big] + (1 - p)\,V_\alpha\big[\omega^{(1)}_{A':B'},\JJ\big]\\
&\quad\ + \frac1\alpha\left(p\log\frac{p^2}{\big\|Q^{\pt}\big\|_\infty^\alpha} + (1 - p)\log\frac{(1 - p)^2}{\big(1 + \big\|Q^{\pt}\big\|_\infty\big)^\alpha}\right)\\
&\geqt{(vii)} \ V_\alpha[\omega_{A':B'},\JJ] + p\log\frac1{\big\|Q^{\pt}\big\|_\infty} - (1 - p)\log\big(1 + \big\|Q^{\pt}\big\|_\infty\big) - \frac2\alpha h_2(p)\, .
\ee
Here, in~(iv) we used Lemma~\ref{quadratic_data_processing_lemma}, (v)~follows from the definition of~$\KK'$, (vi)~descends from~\eqref{V_scaling}, and~(vii) uses convexity in the physical state. The cases $p = 0,1$ follow by replacing $Q$ with $(1 - \e)Q + \e\id/2$ and taking $\e\to0^+$.
\end{proof}

For $Q = \Phi_d$ on $AB$, the two states $\omega^{(0)}_{A':B'}$ and $\omega^{(1)}_{A':B'}$ are precisely those obtained by projecting onto the maximally entangled state or its complement. Since $\big\|\Phi_d^{\pt}\big\|_\infty = 1/d$, the preceding lemma gives
\bb
V_\alpha[\omega_{AA':BB'},\KK] - V_\alpha[\omega_{A':B'},\JJ] \geq p\log d - (1 - p)\log(1 + 1/d) - \frac2\alpha h_2(p)\, .
\label{bell_chain}
\ee
In particular, $\alpha = 2$ gives the finite-error Bell inequality for the state-weighted forms considered below.

\begin{manualthm}{\ref{quadratic_converse_thm}}[(Conditional quadratic converse)]
Let $\rho = \rho_{A:B}$ be a bipartite state. With $V_\alpha$ defined by~\eqref{V_definition}, put
\begin{equation}
\begin{aligned}
E_Q(\rho) = E_Q(\rho_{A:B}) \coloneqq \inf_{\rho_{AA':BB'},\,\alpha,\KK,\JJ} \left\{V_\alpha[\rho_{AA':BB'},\KK] - V_\alpha[\rho_{A':B'},\JJ]\right\} ,
\end{aligned}
\tag{\ref{quadratic_bound}}
\end{equation}
where the infimum is over all finite-dimensional extensions $\rho_{AA':BB'}$ of $\rho_{A:B}$, all $\alpha>0$, and all strictly positive super-operators $\KK,\JJ$ on the operator spaces of $AA'BB'$ and $A'B'$, respectively, satisfying
\begin{equation}
\begin{aligned}
\Braket{Q_n\otimes X_n,\KK^{\otimes n}[Q_n\otimes X_n]} \leq \big\|Q_n^{\Gamma_{B^n}}\big\|_\infty^\alpha\Braket{X_n,\JJ^{\otimes n}[X_n]}
\end{aligned}
\tag{\ref{ancillary_form_all_copies}}
\end{equation}
for all $n\in\N^+$, all tests $0 \leq Q_n \leq \id$ on $A^nB^n$, and all operators $X_n$ on $(A'B')^n$. Then
\begin{equation}
\begin{aligned}
\Eppt(\rho_{A:B}) \leq \Eppt^\dag(\rho_{A:B}) \leq E_Q(\rho_{A:B})\, .
\end{aligned}
\tag{\ref{conditional}}
\end{equation}
Every rate strictly above $E_Q(\rho_{A:B})$ has an exponential strong converse. 
\end{manualthm}
\begin{proof}
We first prove the ordinary converse. Fix an extension $\rho_{AA':BB'}$ and $\alpha,\KK,\JJ$ as in~\eqref{quadratic_bound}. Let $Q_n$ be the test associated with a distillation protocol of output rank $d_n$, and write $\Tr\rho_{AB}^{\otimes n}Q_n = 1 - \e_n$. By~\eqref{effect}, $\big\|Q_n^{\Gamma_{B^n}}\big\|_\infty \leq d_n^{-1}$. Applying Lemma~\ref{conditional_test_lemma} to $\rho_{AA':BB'}^{\otimes n}$, and using~\eqref{V_additivity}, gives
\bb
&n\big(V_\alpha[\rho_{AA':BB'},\KK] - V_\alpha[\rho_{A':B'},\JJ]\big)\\
&\qquad \geq (1 - \e_n)\log d_n - \e_n\log(1 + 1/d_n) - \frac2\alpha h_2(\e_n)\, .
\label{conditional_finite_error}
\ee
Dividing by $n$ and taking $n\to\infty$ bounds $\Eppt(\rho_{A:B})$ by the fixed conditional difference whenever $\e_n\to 0$. Taking the infimum in~\eqref{quadratic_bound} proves the ordinary converse. The exponential strong converse for each fixed difference follows by retaining a small positive power and spectrally truncating on the ancillary operator space, as shown in Appendix~\ref{conditional_strong_converse_app}. For every rate strictly above $E_Q(\rho_{A:B})$, one can choose a fixed difference below that rate, which proves the remaining assertion.
\end{proof}

The hypothesis~\eqref{ancillary_form_all_copies} concerns arbitrary ancillary operators, not just ancillary effects. Neither this hypothesis nor its tensor-power version follows from a one-copy bound on the physical tests alone.

For example, take a qubit with trivial second party and no ancilla, and let
\bb
\KK(X) \coloneqq \frac32 X - \frac12\Tr[X]\id\, .
\label{non_tensorising_form}
\ee
This super-operator has eigenvalue $1/2$ on the identity and $3/2$ on traceless operators, hence is strictly positive. Denoting the eigenvalues of a qubit effect by $0\leq b\leq a\leq1$, we have
\bb
\Braket{Q,\KK(Q)} = a^2 + b^2 - ab \leq a^2 = \|Q\|_\infty^2\, .
\label{one_copy_form_example}
\ee
Nevertheless, the Bell projection on two copies satisfies
\bb
\Braket{\Phi_2,\KK^{\otimes2}(\Phi_2)} = \frac14\left[\left(\frac12\right)^2 + 3\left(\frac32\right)^2\right] = \frac74 > 1 = \|\Phi_2\|_\infty^2\, .
\label{two_copy_form_failure}
\ee
Thus even strict positivity of the form and validity for every one-copy effect do not justify tensorisation.

\subsubsection{Retrieving and extending the Rains bound}

With trivial ancillary systems and $\JJ = \Id_{\C}$, Theorem~\ref{quadratic_converse_thm} gives $\Eppt^\dag(\rho) \leq V_\alpha[\rho,\KK]$ whenever $\Braket{Q_n,\KK^{\otimes n}[Q_n]} \leq \big\|Q_n^{\pt}\big\|_\infty^\alpha$ for all $n\in\N^+$ and all tests $0 \leq Q_n \leq \id$. Indeed, the ancillary contribution is $V_\alpha[1,\Id_{\C}] = 0$. We record two consequences.

It is interesting to observe that the (strong converse!) Rains bound can be obtained as a special case of Theorem~\ref{quadratic_converse_thm} with trivial ancillary systems. Namely, consider a strictly positive operator $\sigma>0$ in the Rains set, and define $\KK[X] = \RR_\sigma[X] \coloneqq X\sigma$ as the right multiplication by $\sigma$. Then, for all $n\in \N^+$ and all tests $0 \leq Q_n \leq \id$, it holds that
\bb
\Braket{Q_n,\KK^{\otimes n}[Q_n]} = \Tr \sigma^{\otimes n} Q_n^2 \leq \Tr \sigma^{\otimes n} Q_n = \Tr \big(\sigma^{\pt}\big)^{\otimes n} Q_n^{\pt} \leq \big\|Q_n^{\pt}\big\|_\infty\, .
\label{rains_quadratic_estimate}
\ee
Hence, Theorem~\ref{quadratic_converse_thm} and the commutativity between left and right multiplication gives
\bb
\Eppt(\rho) \leq \Eppt^\dag(\rho) & \leq \Braket{\sqrt\rho, \big( \log \big( \LL_{\sqrt\rho} \circ \RR_\sigma^{-1} \circ \LL_{\sqrt\rho}\big)\big)\big[\sqrt\rho\big]}\\
& = \Braket{\sqrt\rho, \big( \log \LL_{\rho} - \log \RR_\sigma \big)\big[\sqrt\rho\big]}\\
& = D(\rho\|\sigma)\, .
\label{rains_relative_entropy}
\ee

The same conclusion holds for a singular Rains operator by taking a strictly positive approximation within the Rains set. There is also an ancillary extension of this construction. Define
\bb
R_{\mathrm{ext}}(\rho) = R_{\mathrm{ext}}(\rho_{A:B}) \coloneqq \inf_{\substack{\rho_{ABA'B'}:\,\Tr_{A'B'}\rho_{ABA'B'} = \rho_{AB}\\
\sigma_{ABA'B'}>0:\,\Tr\sigma_{ABA'B'} = 1}} \bigg\{&D(\rho_{ABA'B'}\|\sigma_{ABA'B'})\\
& - \rel{D}{\rho_{A'B'}}{\Tr_{AB}\big|\sigma_{ABA'B'}^{\pt_B}\big|}\bigg\} .
\label{extended_rains_definition}
\ee
The ancillary systems are arbitrary and finite, and $\pt_B$ transposes only the principal system $B$, not $B'$. The second reference is strictly positive since $\Tr_{AB}\big|\sigma^{\pt_B}\big| \geq \sigma_{A'B'} > 0$. Normalising $\sigma$ does not restrict the infimum: a common rescaling of both references cancels in the difference.

\begin{prop}\label{extended_rains_prop}
For every bipartite state $\rho$,
\bb
R^\infty(\rho) \leq R_{\mathrm{ext}}(\rho) \leq R(\rho), \qquad E_Q(\rho) \leq R_{\mathrm{ext}}(\rho)\, .
\label{extended_rains_sandwich}
\ee
Consequently, $R_{\mathrm{ext}}(\rho) = R(\rho)$ whenever $R(\rho) = R^\infty(\rho)$, and $\lim_{n\to\infty}\frac1n R_{\mathrm{ext}}(\rho^{\otimes n}) = R^\infty(\rho)$.
\end{prop}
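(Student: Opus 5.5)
The plan is to prove the three inequalities separately, each by an explicit choice: trivial ancilla for $R_{\mathrm{ext}}\leq R$, right-multiplication forms for $E_Q\leq R_{\mathrm{ext}}$, and a measured-variational witness for $R^\infty\leq R_{\mathrm{ext}}$.

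\textbf{Step 1: $R_{\mathrm{ext}}\leq R$.} Take trivial ancillary systems, so that $\rho_{A'B'}=1$ and $\Tr_{AB}\big|\sigma^{\pt_B}\big| = \big\|\sigma^{\pt}\big\|_1$. With the convention~\eqref{relative_entropy_definition}, $D(1\|c)=-\log c$, so the objective becomes
\[
D(\rho\|\sigma)+\log\big\|\sigma^{\pt}\big\|_1 = D\big(\rho\,\big\|\,\sigma/\|\sigma^{\pt}\|_1\big).
\]
Every strictly positive element of the Rains set, once rescaled, is such a normalised $\sigma$. Moreover, Rains operators can be approximated by strictly positive ones within the Rains set, for instance by mixing with a small multiple of a positive PPT state and renormalising. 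Together with lower semicontinuity this gives $R_{\mathrm{ext}}(\rho)\leq R(\rho)$.

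\textbf{Step 2: $E_Q\leq R_{\mathrm{ext}}$.} Fix a feasible pair $(\rho_{ABA'B'},\sigma>0)$ and put $\tau\coloneqq\Tr_{AB}|\sigma^{\pt_B}|>0$. In~\eqref{quadratic_bound} choose $\alpha=1$, $\KK=\RR_\sigma$ and $\JJ=\RR_\tau$, with $\RR$ denoting right multiplication.

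To verify~\eqref{ancillary_form_all_copies}, set $Y=Q_n\otimes X_n$. Then
\[
\Braket{Y,\RR_{\sigma}^{\otimes n}Y}=\Tr\big[\sigma^{\otimes n}(Q_n^2\otimes X_n^\dag X_n)\big]\leq \Tr\big[\sigma^{\otimes n}(Q_n\otimes M)\big],\qquad M\coloneqq X_n^\dag X_n\geq0 .
\]
Apply the partial transpose on $B^n$ only. Since $\pm(Q_n^{\pt}\otimes M)\leq |Q_n^{\pt}|\otimes M$, the trace inequality from Proposition~\ref{FG_properties_prop}(i) gives
\[
\Tr\big[(\sigma^{\pt_B})^{\otimes n}(Q_n^{\pt}\otimes M)\big]\leq \Tr\big[|\sigma^{\pt_B}|^{\otimes n}(|Q_n^{\pt}|\otimes M)\big]\leq \|Q_n^{\pt}\|_\infty\Tr[\tau^{\otimes n}M] = \|Q_n^{\pt}\|_\infty\Braket{X_n,\RR_\tau^{\otimes n}X_n}.
\]
As in~\eqref{rains_relative_entropy}, left and right multiplications commute, so $V_1[\omega,\RR_\sigma]=D(\omega\|\sigma)$. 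The objective of~\eqref{quadratic_bound} therefore equals the objective of~\eqref{extended_rains_definition}. Taking the infimum gives $E_Q\leq R_{\mathrm{ext}}$.

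\textbf{Step 3: $R^\infty\leq R_{\mathrm{ext}}$ (the main step).} I will first show $R^M(\rho)\leq D(\rho_{ABA'B'}\|\sigma)-D(\rho_{A'B'}\|\tau)$ and then tensorise. Use the variational bound
\[
D(\omega\|\sigma)\geq \Tr[\omega\log G]-\log\Tr[\sigma G]\qquad (G>0),
\]
which holds because the right-hand side is a lower bound on the measured relative entropy. Take $G=H\otimes W$ with $H>0$, $\|H^{\pt}\|_\infty\leq1$, and $W\coloneqq\exp(\log\rho_{A'B'}-\log\tau)$, regularising $\rho_{A'B'}$ if it is singular.

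With this choice, $\Tr[\rho_{ABA'B'}\log G]=\Tr[\rho\log H]+D(\rho_{A'B'}\|\tau)$. For the normalisation term, the same absolute-value argument as in Step 2 gives
\[
\Tr[\sigma G]\leq \|H^{\pt}\|_\infty\Tr[\tau W]\leq 1,
\]
where the last inequality is Golden--Thompson: $\Tr[\tau W]\leq\Tr[\exp(\log\rho_{A'B'})]=1$. Optimising over $H$ via Lemma~\ref{measured_rains_lemma} gives $R^M(\rho)\leq R_{\mathrm{ext}}(\rho)$.

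To tensorise, apply this to $\rho^{\otimes n}$ using tensor-power extensions and references. The reference $\sigma^{\otimes n}$ is normalised and $\Tr|(\sigma^{\pt_B})^{\otimes n}|$ factorises, so $R_{\mathrm{ext}}(\rho^{\otimes n})\leq nR_{\mathrm{ext}}(\rho)$. Hence $R^M(\rho^{\otimes n})\leq nR_{\mathrm{ext}}(\rho)$. Dividing by $n$ and invoking~\eqref{rains_measured_penalty} yields $R^\infty\leq R_{\mathrm{ext}}$.

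\textbf{Consequences.} If $R(\rho)=R^\infty(\rho)$, the sandwich forces $R_{\mathrm{ext}}(\rho)=R(\rho)$. For the regularisation, Proposition~\ref{rains_strong_superadditivity_prop} gives $R^\infty(\rho^{\otimes n})=nR^\infty(\rho)$. Combined with $R_{\mathrm{ext}}(\rho^{\otimes n})\leq R(\rho^{\otimes n})$ and $R(\rho^{\otimes n})/n\to R^\infty(\rho)$, this pins down $\lim_{n}\frac1nR_{\mathrm{ext}}(\rho^{\otimes n})=R^\infty(\rho)$.

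The main obstacle is Step 3: choosing the ancillary witness $W$ so that it exactly reproduces the subtracted term $D(\rho_{A'B'}\|\tau)$ while keeping the normalisation $\Tr[\sigma G]\leq1$ under control. The rest is bookkeeping, including the support and regularisation issues in $W$.
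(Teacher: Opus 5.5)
Your Steps 1 and 2 and the final consequences follow the paper's proof essentially verbatim. Step 1 uses trivial ancillas and the scale invariance of the objective. Step 2 takes $\KK=\RR_\sigma$, $\JJ=\RR_\tau$, $\alpha=1$, and uses $Q_n^2\leq Q_n$, the absolute-value estimate, and $\Tr_{(AB)^n}|(\sigma^{\otimes n})^{\pt_{B^n}}|=\tau^{\otimes n}$. (A small point in Step 1: the strictly positive approximation works by operator monotonicity of $\log$, not by lower semicontinuity. For instance, $D(\rho\|(1-\e)\sigma+\e\pi)\leq D(\rho\|\sigma)-\log(1-\e)$ with $\pi$ maximally mixed. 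Lower semicontinuity points the wrong way for an infimum.)

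\textbf{The gap is in Step 3}, at the normalisation $\Tr[\tau W]\leq 1$. Golden--Thompson, $\Tr e^{X+Y}\leq\Tr[e^Xe^Y]$ with $X=\log\tau$ and $Y=\log\rho_{A'B'}-\log\tau$, gives the \emph{reverse} inequality $\Tr[\tau W]\geq\Tr\rho_{A'B'}=1$. Equality holds only if $\rho_{A'B'}$ and $\tau$ commute. So your witness only yields $D(\rho_{ABA'B'}\|\sigma)\geq\Tr[\rho\log H]+D(\rho_{A'B'}\|\tau)-\log\Tr[\tau W]$, with a strictly positive loss in general. Hence the single-copy inequality $R^M(\rho)\leq D(\rho_{ABA'B'}\|\sigma)-D(\rho_{A'B'}\|\tau)$ is not established. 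Neither is its $n$-copy version, on which your tensorisation rests.

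A better choice of $W$ does not cure this. For product witnesses $G=H\otimes W$, the supremum over $W>0$ of $\Tr[\rho_{A'B'}\log W]-\log\Tr[\tau W]$ is exactly the measured relative entropy $D^M(\rho_{A'B'}\|\tau)$. That is strictly smaller than $D(\rho_{A'B'}\|\tau)$ in the non-commuting case. In effect, the measured variational formula measures the ancilla, so it cannot reproduce the subtracted Umegaki term exactly.

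The paper avoids this by measuring only the principal system. Write $H=\sum_x h_xP_x$ and apply data processing of the Umegaki relative entropy to the channel that measures $\{P_x\}$ on $AB$ and keeps $A'B'$ quantum. This gives $D(\rho_{ABA'B'}\|\sigma)\geq\sum_x D(\rho^{(x)}_{A'B'}\|h_x\sigma^{(x)}_{A'B'})+\Tr[\rho\log H]$. Then discard the outcome (joint convexity). Use $\sum_x h_x\sigma^{(x)}_{A'B'}\leq\tau$, which is your own Step-2 estimate applied to $\Tr[\sigma(H\otimes M)]$ for $M\geq0$. Operator monotonicity of $\log$ then gives $D(\rho_{A'B'}\|\tau)+\Tr[\rho\log H]$ with no loss.

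Your variational route can be rescued, but only asymptotically. Apply it to tensor powers of a fixed extension and reference, with witnesses $H_n\otimes W_n$. This gives $nD(\rho_{ABA'B'}\|\sigma)\geq R^M(\rho^{\otimes n})+D^M(\rho_{A'B'}^{\otimes n}\|\tau^{\otimes n})$. You would then need the Hiai--Petz fact $\frac1n D^M(\rho_{A'B'}^{\otimes n}\|\tau^{\otimes n})\to D(\rho_{A'B'}\|\tau)$, which is absent from your proposal. Combined with Lemma~\ref{measured_rains_lemma}, this yields $R^\infty\leq R_{\mathrm{ext}}$ after infimising. It does not recover the single-copy bound $R^M\leq R_{\mathrm{ext}}$ that the paper's argument gives.
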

\begin{proof}
Fix an extension and reference in~\eqref{extended_rains_definition}, and write $\tau_{A'B'} = \Tr_{AB}\big|\sigma_{ABA'B'}^{\pt_B}\big|$. Set $\KK = \RR_\sigma$, $\JJ = \RR_\tau$, and $\alpha = 1$. For every test $0 \leq Q \leq \id$ on $AB$ and every operator $X$ on $A'B'$, we have
\bb
\Braket{Q\otimes X,\KK[Q\otimes X]} & = \Tr\sigma(Q^2\otimes X^\dag X)\\
& \leq \Tr\sigma(Q\otimes X^\dag X)\\
& = \Tr\sigma^{\pt_B}(Q^{\pt_B}\otimes X^\dag X)\\
& \leq \big\|Q^{\pt_B}\big\|_\infty\Tr\big|\sigma^{\pt_B}\big|(\id\otimes X^\dag X)\\
& = \big\|Q^{\pt_B}\big\|_\infty\Braket{X,\JJ[X]} .
\label{extended_rains_compatibility}
\ee
The penultimate inequality follows by splitting $\sigma^{\pt_B}$ into its positive and negative parts. Since
\bb
\Tr_{A^nB^n}\left|(\sigma^{\otimes n})^{\Gamma_{B^n}}\right| = \big(\Tr_{AB}\big|\sigma^{\pt_B}\big|\big)^{\otimes n} = \tau^{\otimes n}\, ,
\label{extended_rains_tensor_reference}
\ee
the same estimate holds for arbitrary $Q_n,X_n$, proving~\eqref{ancillary_form_all_copies}. Equation~\eqref{rains_relative_entropy} identifies the two potentials with the relative entropies in~\eqref{extended_rains_definition}; hence $E_Q \leq R_{\mathrm{ext}}$ by~\eqref{quadratic_bound}. Trivial ancillas give
\bb
R_{\mathrm{ext}}(\rho) \leq \inf_{\sigma>0:\,\Tr\sigma = 1}\left\{D(\rho\|\sigma) + \log\big\|\sigma^{\pt}\big\|_1\right\} = R(\rho)\, ,
\ee
where the equality is the rescaling of the Rains set in~\eqref{rains_definition}.

For the opposite inequality, take $H>0$ on $AB$ with $\big\|H^{\pt_B}\big\|_\infty \leq 1$, and measure its spectral projections, retaining the outcome and $A'B'$. Writing $H = \sum_x h_xP_x$, the unnormalised ancillary blocks are
\bb
\rho^{(x)}_{A'B'} = \Tr_{AB}\big[(P_x\otimes\id)\rho_{ABA'B'}(P_x\otimes\id)\big],\qquad \sigma^{(x)}_{A'B'} = \Tr_{AB}\big[(P_x\otimes\id)\sigma_{ABA'B'}(P_x\otimes\id)\big] .
\ee
The same trace-norm argument as above gives $\sum_x h_x\sigma^{(x)}_{A'B'} \leq \tau_{A'B'}$. Relative-entropy data processing under the measurement and then under discarding its outcome, followed by operator monotonicity of the logarithm, yields
\bb
D(\rho_{ABA'B'}\|\sigma) & \geq \sum_x\rel{D}{\rho^{(x)}_{A'B'}}{\sigma^{(x)}_{A'B'}}\\
& = \sum_x\rel{D}{\rho^{(x)}_{A'B'}}{h_x\sigma^{(x)}_{A'B'}} + \Tr\rho_{AB}\log H\\
& \geq \rel{D}{\rho_{A'B'}}{\sumno_x h_x\sigma^{(x)}_{A'B'}} + \Tr\rho_{AB}\log H\\
& \geq D(\rho_{A'B'}\|\tau_{A'B'}) + \Tr\rho_{AB}\log H\, .
\ee
Taking the supremum over $H$ and invoking~\eqref{measured_rains_norm} bounds the fixed conditional difference below by $R^M(\rho_{AB})$. Now apply this inequality to the $n$-fold tensor powers of the same extension and reference. By~\eqref{extended_rains_tensor_reference} and additivity of relative entropy,
\bb
D(\rho_{ABA'B'}\|\sigma) - D(\rho_{A'B'}\|\tau) \geq \frac1n R^M(\rho_{AB}^{\otimes n})\, .
\ee
Letting $n\to\infty$, Lemma~\ref{measured_rains_lemma} gives $R^\infty(\rho_{AB})$. Only then do we infimise over the original extension and reference. The final assertions follow from the sandwich and weak additivity of $R^\infty$.
\end{proof}

Thus this ancillary relative-entropy construction cannot improve the regularised Rains converse. Whether the first inequality in~\eqref{extended_rains_sandwich} is always an equality before regularisation is not established here. At the opposite end, we have the following different family of quadratic converses.

\begin{cor}\label{two_state_forms_cor}
Let $\tau,\omega>0$ be two strictly positive states on a bipartite quantum system $AB$, and let $\theta\in [0,1]$. Then, for all states $\rho = \rho_{AB}$, we have
\bb
\Eppt(\rho) \leq \Eppt^\dag(\rho) \leq \frac12 \Braket{\sqrt\rho, \big( \log \big( \LL_{\sqrt\rho} \circ \Gamma \circ \LL_{\tau^{-\theta}}\circ \RR_{\omega^{-1 + \theta}} \circ \Gamma \circ \LL_{\sqrt\rho}\big)\big)\big[\sqrt\rho\big]} .
\ee
\end{cor}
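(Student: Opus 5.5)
The plan is to apply Theorem~\ref{quadratic_converse_thm} with trivial ancillary systems, $\JJ = \Id_{\C}$ and $\alpha = 2$. As recorded after that theorem, this reduces the claim to $\Eppt^\dag(\rho) \leq V_2[\rho,\KK]$ for a single strictly positive super-operator $\KK$ obeying the all-copy condition $\Braket{Q_n,\KK^{\otimes n}[Q_n]} \leq \|Q_n^{\pt}\|_\infty^2$. I would take
\bb
\KK \coloneqq \Gamma\circ\LL_{\tau^{\theta}}\circ\RR_{\omega^{1-\theta}}\circ\Gamma .
\ee
On the Hilbert--Schmidt space, $\Gamma$ is a self-adjoint involution: $\Braket{X,Y^{\pt}} = \Braket{X^{\pt},Y}$ and $\Gamma^2 = \Id$, so $\Gamma$ is unitary. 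Moreover $\LL_{\tau^\theta}$ and $\RR_{\omega^{1-\theta}}$ are commuting strictly positive super-operators. Hence $\KK>0$, and $\KK^{-1} = \Gamma\circ\LL_{\tau^{-\theta}}\circ\RR_{\omega^{-1+\theta}}\circ\Gamma$. Substituting this into~\eqref{V_definition} with $\alpha=2$ gives exactly the displayed right-hand side. The endpoints $\theta\in\{0,1\}$ are included, with the convention $X^0 = \id$ for the strictly positive operators $\tau,\omega$.

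The key step is the quadratic-form bound. Write $Y \coloneqq Q^{\pt}$, which is Hermitian. Then
\bb
\Braket{Q,\KK[Q]} = \Tr\big[Y\tau^\theta Y\omega^{1-\theta}\big] = \Braket{Y,\LL_\tau^{\theta}\RR_\omega^{1-\theta}[Y]} .
\ee
Since $\LL_\tau$ and $\RR_\omega$ commute and are positive, they have a joint spectral decomposition with eigenvalue pairs $(\lambda,\mu)$. Let $\nu$ be the spectral measure of the vector $Y$ with respect to this decomposition. Hölder's inequality with exponents $1/\theta$ and $1/(1-\theta)$ applied to $\int \lambda^\theta\mu^{1-\theta}\,\mathrm d\nu$ gives
\bb
\Braket{Y,\LL_\tau^{\theta}\RR_\omega^{1-\theta}[Y]} \leq \Braket{Y,\LL_\tau[Y]}^{\theta}\,\Braket{Y,\RR_\omega[Y]}^{1-\theta} = \big(\Tr[\tau Y^2]\big)^\theta\big(\Tr[\omega Y^2]\big)^{1-\theta} .
\ee
Because $\tau$ and $\omega$ are normalised states and $Y^2 \leq \|Y\|_\infty^2\id$, both factors are at most $\|Y\|_\infty^2$. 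This yields $\Braket{Q,\KK[Q]} \leq \|Q^{\pt}\|_\infty^2$.

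For the all-copy hypothesis, note that $\KK^{\otimes n} = \Gamma_{B^n}\circ\LL_{\tau^{\otimes n\,\theta}}\circ\RR_{\omega^{\otimes n\,(1-\theta)}}\circ\Gamma_{B^n}$. This uses $(\tau^{\otimes n})^\theta = (\tau^\theta)^{\otimes n}$ and the fact that left and right multiplications and partial transposes all tensorise. So $\KK^{\otimes n}$ has the same form with the states $\tau^{\otimes n}$ and $\omega^{\otimes n}$, and the one-copy argument applies verbatim for every $n$.

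I expect the only delicate point to be this interpolation step, specifically justifying the Hölder bound via the joint spectral measure of the commuting pair $(\LL_\tau,\RR_\omega)$. That is where the choice $\alpha=2$ is forced, because the form is quadratic in $Q^{\pt}$. Everything else is bookkeeping: the unitarity of $\Gamma$ and the identification of $\KK^{-1}$.
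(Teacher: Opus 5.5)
Your proof is correct and follows the paper's argument: the same $\KK = \Gamma\circ\LL_{\tau^{\theta}}\circ\RR_{\omega^{1-\theta}}\circ\Gamma$, Theorem~\ref{quadratic_converse_thm} with trivial ancillas and $\alpha = 2$, and tensorisation of $\KK$ to cover all $n$. The only cosmetic difference is in the key inequality, and in your extra checks. The paper bounds $\Tr[Q_n^{\pt}(\tau^{\otimes n})^\theta Q_n^{\pt}(\omega^{\otimes n})^{1-\theta}]$ in one step by the Schatten--H\"older inequality with exponents $(\infty,1/\theta,\infty,1/(1-\theta))$. Your scalar H\"older on the joint spectrum of the commuting pair $(\LL_\tau,\RR_\omega)$ instead passes through the slightly sharper intermediate bound $(\Tr[\tau Y^2])^\theta(\Tr[\omega Y^2])^{1-\theta}$, and you also verify $\KK>0$ and the formula for $\KK^{-1}$, which the paper leaves implicit.
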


\begin{proof}
Set
\bb
\KK \coloneqq \Gamma \circ \LL_{\tau^{\theta}}\circ \RR_{\omega^{1 - \theta}} \circ \Gamma\, .
\ee
Then, for all $n\in \N^+$ and all tests $0 \leq Q_n \leq \id$, we have
\bb
\Braket{Q_n, \KK^{\otimes n}[Q_n]} & = \Tr \left[ Q_n^{\pt} \big(\tau^{\otimes n}\big)^\theta Q_n^{\pt} \big(\omega^{\otimes n}\big)^{1 - \theta} \right]\\
& \leq \big\|Q_n^{\pt}\big\|_\infty^2\, \Big\|\big(\tau^{\otimes n}\big)^\theta\Big\|_{1/\theta} \Big\|\big(\omega^{\otimes n}\big)^{1 - \theta}\Big\|_{1/(1 - \theta)}\\
& = \big\|Q_n^{\pt}\big\|_\infty^2\, .
\ee
Hence, we can apply Theorem~\ref{quadratic_converse_thm} with trivial ancillary systems and $\alpha = 2$, which concludes the proof.
\end{proof}

\subsubsection{A different quadratic converse}

A particularly simple choice of quadratic form is $\KK = \Gamma\circ\LL_\tau\circ\Gamma$, where $\tau>0$ is a normalised state on $AB$. It gives the following single-copy quantity, written explicitly in terms of the auxiliary state:
\bb
U(\rho) \coloneqq \frac12\inf_{\tau>0}\Braket{\sqrt\rho,\,\left(\log\left(\LL_{\sqrt\rho}\circ\Gamma\circ\LL_{\tau^{-1}}\circ\Gamma\circ\LL_{\sqrt\rho}\right)\right)(\sqrt\rho)}\, .
\label{one_state_quadratic_bound}
\ee
The logarithm is restricted to the support of its argument, as in~\eqref{V_definition}, so singular physical states are allowed. In particular, $U(\rho) = \inf_{\tau>0}V_2[\rho,\Gamma\circ\LL_\tau\circ\Gamma]$.

For every $n\in\N^+$ and every effect $Q_n$, the corresponding form obeys
\bb
\Braket{Q_n,(\Gamma\circ\LL_\tau\circ\Gamma)^{\otimes n}(Q_n)} = \Tr\!\left[\tau^{\otimes n}\big(Q_n^{\pt}\big)^2\right] \leq \big\|Q_n^{\pt}\big\|_\infty^2\, .
\label{state_form_compatibility}
\ee
Theorem~\ref{quadratic_converse_thm}, with trivial ancillary systems and $\alpha = 2$, therefore gives
\bb
\Eppt(\rho) \leq \Eppt^\dag(\rho) \leq E_Q(\rho) \leq U(\rho) \leq \log\big\|\rho^{\pt}\big\|_1\, .
\label{U_bounds}
\ee
Every rate strictly above $U(\rho)$ has an exponential strong converse. To obtain the last inequality in~\eqref{U_bounds}, apply scalar Jensen to the logarithmic expectation in~\eqref{one_state_quadratic_bound}:
\bb
U(\rho) \leq \inf_{\tau>0}\frac12\log\Tr\!\left[\rho^{\pt}\tau^{-1}\rho^{\pt}\right] = \log\big\|\rho^{\pt}\big\|_1\, .
\label{log_negativity_comparison}
\ee
For the equality, Cauchy--Schwarz bounds the trace below by $\big\|\rho^{\pt}\big\|_1^2$, and the infimum is approached by taking $\tau$ proportional to $\big|\rho^{\pt}\big|$, with a strictly positive approximation when needed. In particular, $U$ is nonnegative and vanishes on PPT states.

Product auxiliary states and~\eqref{V_additivity} make $U$ subadditive. Applying the same converse to blocks thus gives
\bb
\Eppt^\dag(\rho) \leq \lim_{n\to\infty}\frac1nU(\rho^{\otimes n}) = \inf_{n\in\N^+}\frac1nU(\rho^{\otimes n}) \leq U(\rho)\, .
\label{U_regularisation}
\ee
The limit exists by subadditivity and nonnegativity. Every rate strictly above this infimum also has an exponential strong converse, by choosing one fixed block and one auxiliary state whose bound lies below the rate.

\subsubsection{Which quadratic forms can improve the bound?}

The definition of $E_Q$ in~\eqref{quadratic_bound} allows quadratic estimates that hold only for effects on the principal system. With trivial ancillary systems, this is weaker than an estimate on arbitrary operators. The distinction becomes important when comparing different quadratic forms. The relevant general result is the little noncommutative Grothendieck inequality.

\begin{thm}[{\cite[Theorem~8.1]{Pisier2012}}] \label{little_Grothendieck_thm}
Let $T:\mathcal{L}(\HH)\to\mathcal{L}(\HH')$ be a linear map, where the output operator space is equipped with the Hilbert--Schmidt norm $\|\cdot\|_2$. There exist states $\tau_0,\tau_1$ on $\HH$ such that, for every operator $X$,
\bb
\|T(X)\|_2^2 \leq \|T\|_{\infty\to2}^2 \left(\Tr\big[\tau_0 X^\dag X\big] + \Tr\big[\tau_1 XX^\dag\big]\right)\, , \qquad \|T\|_{\infty\to2} \coloneqq \sup_{\|X\|_\infty \leq 1}\|T(X)\|_2\, .
\label{little_Grothendieck}
\ee
\end{thm}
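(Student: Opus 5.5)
The plan follows the standard Pietsch--Hahn--Banach route: first prove a finite ``square-function'' inequality for families of operators, then convert it into the existence of the two states $\tau_0,\tau_1$ by a minimax argument over the compact convex set of pairs of states. By homogeneity we may assume $\|T\|_{\infty\to2} = 1$.

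\emph{Step 1 (finite inequality).} We will show that, for every finite family $X_1,\ldots,X_m$ of operators on $\HH$,
\begin{equation*}
\sum_k \|T(X_k)\|_2^2 \leq \Big\|\sumno_k X_k^\dag X_k\Big\|_\infty + \Big\|\sumno_k X_kX_k^\dag\Big\|_\infty\, .
\end{equation*}
The natural route is Gaussian averaging. Let $g_k$ be i.i.d.\ standard complex Gaussians and put $X \coloneqq \sum_k g_kX_k$. Since the output norm is Hilbertian, orthogonality of the $g_k$ gives
\begin{equation*}
\mathbb E\|T(X)\|_2^2 = \sum_k\|T(X_k)\|_2^2\, .
\end{equation*}
The left-hand side is at most $\mathbb E\|X\|_\infty^2$, because $\|T\|_{\infty\to2}=1$. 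It then remains to bound $\mathbb E\|\sum_k g_kX_k\|_\infty^2$ by the sum of the row and column norms. The noncommutative Khintchine inequality does this, but only up to a universal constant (and, in finite dimensions, possibly a logarithmic factor). The sharp constant $1$ displayed in the statement requires Haagerup's refined argument instead. That argument treats the real and imaginary parts of $X$ separately, and it uses a convexity/polar-decomposition trick applied to the column operator $C = \sum_k X_k\otimes e_{k1}$ and the row operator $R = \sum_k X_k\otimes e_{1k}$, whose squared norms are exactly $\|\sum_k X_k^\dag X_k\|_\infty$ and $\|\sum_k X_kX_k^\dag\|_\infty$. I expect this sharp-constant step to be the main obstacle. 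My first attempt would follow Haagerup--Itoh: reduce to the case where $T$ attains its norm, and exploit the extremal structure of the unit ball of $\mathcal{L}(\HH)$. If necessary, I would fall back on quoting the sharp finite inequality from \cite{Pisier2012} and carry out only the factorisation step myself.

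\emph{Step 2 (from the finite inequality to states).} Let $S$ be the set of pairs $(\tau_0,\tau_1)$ of states on $\HH$; it is compact and convex. For each finite family $\mathbf X = (X_k)_k$, define the affine function on $S$
\begin{equation*}
f_{\mathbf X}(\tau_0,\tau_1) \coloneqq \sum_k\|T(X_k)\|_2^2 - \Tr\Big[\tau_0\sumno_k X_k^\dag X_k\Big] - \Tr\Big[\tau_1\sumno_k X_kX_k^\dag\Big]\, .
\end{equation*}
The maximum of $\Tr[\tau_0 Y]$ over states is $\|Y\|_\infty$ for $Y\geq0$, and $\tau_0,\tau_1$ can be chosen independently. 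Hence Step 1 says exactly that $\min_S f_{\mathbf X}\leq 0$ for each family. The set $\{f_{\mathbf X}\}$ is a convex cone: positive multiples are obtained by rescaling the operators, and sums are obtained by concatenating families. Ky Fan's minimax theorem (or Sion's theorem, as used in Lemma~\ref{povm_dual_lemma}), applied to this convex set of continuous affine functions on the compact convex $S$, then yields a single pair $(\tau_0,\tau_1)\in S$ with $f_{\mathbf X}(\tau_0,\tau_1)\leq0$ for every family. Specialising to one-element families gives \eqref{little_Grothendieck}, and undoing the normalisation restores the factor $\|T\|_{\infty\to2}^2$.
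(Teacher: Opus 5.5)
Your Step~2 is correct and standard. The functions $f_{\mathbf X}$ form a convex cone of continuous affine functions on the compact convex set of pairs of states, and Ky Fan's lemma yields one pair $(\tau_0,\tau_1)$ that serves every family. The gap is Step~1, which carries the entire content of the theorem. It is in fact equivalent to~\eqref{little_Grothendieck}: summing~\eqref{little_Grothendieck} over a family and bounding $\Tr[\tau_0\sumno_kX_k^\dag X_k]$ and $\Tr[\tau_1\sumno_kX_kX_k^\dag]$ by operator norms gives the converse. Your route to Step~1 fails, and not only in the constant. In the operator norm the noncommutative Khintchine inequality does \emph{not} hold with a universal constant. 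Take $X_k = \ketbra{k}$, $k = 1,\ldots,m$. Then both $\|\sumno_kX_k^\dag X_k\|_\infty$ and $\|\sumno_kX_kX_k^\dag\|_\infty$ equal $1$, whereas $\mathbb E\|\sumno_kg_kX_k\|_\infty^2 = \mathbb E\max_k|g_k|^2 = \sum_{j=1}^m 1/j\to\infty$. So the first estimate $\mathbb E\|T(X)\|_2^2 \leq \|T\|_{\infty\to2}^2\,\mathbb E\|X\|_\infty^2$ already discards what is needed, and no dimension-free constant can come out of this chain. This matters here. Proposition~\ref{Grothendieck_Werner_prop} applies the theorem on $\HH_{AB}^{\otimes n}$, and a constant growing with the dimension would no longer give the finite bound~\eqref{Grothendieck_finite}.

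The dimension-free mechanism is the dual one: put the Gaussians on the Hilbert--Schmidt side. Write $\sumno_k\|T(X_k)\|_2^2 = \sumno_k\Tr[a_kX_k]$, where $a_k$ is the trace-class operator representing $X\mapsto\Braket{T(X_k),T(X)}$. Then $\mathbb E\|\sumno_kg_ka_k\|_1 \leq \|T\|_{\infty\to2}\big(\sumno_k\|T(X_k)\|_2^2\big)^{1/2}$. The trace-class ($p = 1$) noncommutative Khintchine inequality of Lust-Piquard and Pisier has no logarithmic loss, so it gives Step~1 up to a universal constant. The constant $1$ is Haagerup's theorem. For instance, it follows from his bilinear noncommutative Grothendieck inequality with constant one, applied to $(X,Y)\mapsto\Braket{T(Y^\dag),T(X)}$ at $Y = X^\dag$, followed by the arithmetic--geometric mean inequality. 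Your paragraph on ``Haagerup's refined argument'' supplies no step of this. Falling back on quoting the sharp finite inequality therefore amounts to quoting the theorem itself. That is all the paper does: it gives no proof and cites~\cite[Theorem~8.1]{Pisier2012}, where the states are already part of the statement. So your Step~2 adds nothing beyond that citation.
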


The states in this statement depend on $T$, but not on $X$. For Hermitian $X$, their average controls the quadratic form by a single state-weighted second moment, with a factor of two. For Werner states, this yields the following restriction on the possible improvements.

\begin{prop}\label{Grothendieck_Werner_prop}
Let $\rho$ be a full-rank Werner state, and let $\KK>0$ satisfy
\bb
\Braket{X,\KK^{\otimes n}[X]} \leq \big\|X^{\pt}\big\|_\infty^2 \qquad\forall\,n\in\N^+\, ,\quad\forall\,X\in\mathcal{L}(\HH_{AB}^{\otimes n})\, .
\label{full_operator_hypothesis}
\ee
Then, for every $n\in\N^+$,
\bb
U(\rho^{\otimes n}) \leq n V_2[\rho,\KK] + \frac12\, .
\label{Grothendieck_finite}
\ee
Consequently,
\bb
\lim_{n\to\infty}\frac1n U(\rho^{\otimes n}) \leq V_2[\rho,\KK]\, .
\label{Grothendieck_asymptotic}
\ee
\end{prop}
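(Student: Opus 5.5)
The plan is to use Theorem~\ref{little_Grothendieck_thm} to dominate $\KK^{\otimes n}$, conjugated by partial transposition, by a state-weighted left multiplication. The factor of two it costs turns into the additive $\tfrac12$ through the scaling law~\eqref{V_scaling}. Throughout, fix $n$ and write $\omega\coloneqq\rho^{\otimes n}$, which is full rank.

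\emph{Step 1 (norm bound).} Define the linear map $T(Y)\coloneqq(\KK^{\otimes n})^{1/2}[Y^{\pt}]$ on $\mathcal{L}(\HH_{AB}^{\otimes n})$. Since $\pt$ is an involution, substituting $X=Y^{\pt}$ in~\eqref{full_operator_hypothesis} gives $\|T(Y)\|_2^2\leq\|Y\|_\infty^2$ for every $Y$, so $\|T\|_{\infty\to2}\leq1$.

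\emph{Step 2 (Grothendieck).} Theorem~\ref{little_Grothendieck_thm} then yields states $\tau_0,\tau_1$ with $\|T(Y)\|_2^2\leq\Tr[\tau_0Y^\dag Y]+\Tr[\tau_1YY^\dag]$. In super-operator language this reads
\bb
\pt\circ\KK^{\otimes n}\circ\pt\leq\LL_{\tau_0}+\RR_{\tau_1}.
\ee
I want to replace the right-hand side by $2\LL_\tau$ for a single state $\tau$. Conjugating back by $\pt$ would then give $\KK^{\otimes n}\leq2\,\pt\circ\LL_\tau\circ\pt$.

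\emph{Step 3 (conclusion, given Step 2).} The logarithm in~\eqref{V_definition} is applied to $\LL_{\sqrt\omega}\circ\KK^{-1}\circ\LL_{\sqrt\omega}$. Operator monotonicity of the inverse and of the logarithm therefore makes $V_\alpha[\omega,\cdot]$ order-reversing in $\KK$. Hence $V_2[\omega,2\,\pt\LL_\tau\pt]\leq V_2[\omega,\KK^{\otimes n}]$. By~\eqref{V_scaling} with $\lambda=1$, $\mu=2$ and $\alpha=2$, the left side equals $V_2[\omega,\pt\LL_\tau\pt]-\tfrac12$. By~\eqref{V_additivity}, the right side equals $nV_2[\rho,\KK]$. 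Since $U(\omega)\leq V_2[\omega,\pt\LL_\tau\pt]$ (approximating $\tau$ by strictly positive states if needed), this gives~\eqref{Grothendieck_finite}. Dividing by $n$ and letting $n\to\infty$ gives~\eqref{Grothendieck_asymptotic}, using the limit established in~\eqref{U_regularisation}.

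\emph{Main obstacle: the right-multiplication term in Step 2.} The remark after Theorem~\ref{little_Grothendieck_thm} gives the needed control with a single state $(\tau_0+\tau_1)/2$ and factor two, but only for Hermitian $Y$. The order relation required in Step 3, however, is over all operators. This is where the Werner hypothesis must enter.

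My first attempt is a symmetrisation argument. The vector $\sqrt\omega$ is Hermitian, and $\omega$ is invariant under $U^{\otimes 2n}$-type twirling, under the swap, and under full transposition, since $F_d^\intercal=F_d$. I would try to show that $V_2[\omega,\cdot]$ is unchanged when $\pt\circ\KK^{\otimes n}\circ\pt$ is conjugated by the real-linear Hilbert--Schmidt isometry $X\mapsto X^\dag$, or by the full transpose. These conjugations exchange $\LL$ and $\RR$, up to transposing the weight.

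By joint convexity of $V_2$, averaging the form with its conjugate should not increase the bound. Monotonicity in the form then lets me replace $\LL_{\tau_0}+\RR_{\tau_1}$ by its symmetrisation. On the symmetric, Hermitian-preserving subspace relevant to $\omega$, this symmetrisation acts like $2\LL_\tau$ with $\tau=(\tau_0+\tau_1^\intercal)/2$.

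If this direct route fails, my fallback works within the Werner symmetry. The commutant of $U\otimes\bar U$ and $U\otimes U$ twirling reduces all the super-operators in question to the finite-dimensional algebra acting on the isotypic components of $\omega$. There, I would verify the comparison between $\RR_{\tau_1}$ and $\LL_{\tau_1}$ block by block. Full rank of $\rho$ ensures that all the logarithms involved are finite.
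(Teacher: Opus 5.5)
There is a genuine gap in Step 2. Step 1, the single application of Theorem~\ref{little_Grothendieck_thm} on the whole $n$-copy space, and the bookkeeping in Step 3 (the $-\tfrac12$ from \eqref{V_scaling}, additivity, regularisation) all match the paper. (Minor point: with $\LL_\omega(X)=\omega X$, the Grothendieck bound reads $\Gamma\circ\KK^{\otimes n}\circ\Gamma\leq\RR_{\tau_0}+\LL_{\tau_1}$.)

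\textbf{Why Step 2 fails.} Step 3 needs the global inequality $\KK^{\otimes n}\leq 2\,\Gamma\circ\LL_\tau\circ\Gamma$ on all of $\mathcal{L}(\HH_{AB}^{\otimes n})$. This is false in general, with or without symmetrisation. Take $\KK=\Gamma\circ\RR_\sigma\circ\Gamma$ with a state $\sigma>0$; it satisfies \eqref{full_operator_hypothesis}. Testing $\RR_{\sigma^{\otimes n}}\leq2\LL_\tau$, or its average with $\LL_{\sigma^{\otimes n}}$, on $Y=\ketbraa{\psi}{\phi}$ forces $\lambda_{\max}(\sigma)^n\leq4\,(d_Ad_B)^{-n}$. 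That fails for large $n$ unless $\sigma$ is maximally mixed.

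Your symmetrisation also fails on its own terms, for three reasons:
\begin{enumerate}[(i)]
\item Conjugating the form by $X\mapsto X^\dag$ or by the full transpose turns $\LL_{\sqrt\omega}$ into a right multiplication. So $V_2[\omega,\cdot]$ is not invariant: for $\KK=\RR_\sigma$ it trades the Umegaki relative entropy for the Belavkin--Staszewski one.
\item Averaging $\RR_{\tau_0}+\LL_{\tau_1}$ with its mirror image again gives some $\RR_{\tau_a}+\LL_{\tau_b}$, never $2\LL_\tau$.
\item Hermitian operators form only a real subspace, so they cannot serve as the complex reducing subspace that a logarithmic expectation requires.
\end{enumerate}

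\textbf{What is missing.} You need (a) the right complex subspace and (b) a way to use an inequality that holds only on it.

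For (a), take $\ran\PP_n$, the span of $P_{x_1}\otimes\cdots\otimes P_{x_n}$. For $X$ in it, $X^\Gamma$ lies in the commutative algebra generated by the Bell projections, so it is normal even when not Hermitian. The two Grothendieck terms then merge into $2\Tr[\bar\tau(X^\Gamma)^\dag X^\Gamma]$, with $\bar\tau=(\tau_0+\tau_1)/2$. Twirling $\bar\tau$ by $\bigotimes_i(U_i\otimes\bar U_i)$ into $\tau_n$ leaves these expectations unchanged and makes $\ran\PP_n$ reduce $\Gamma\circ\LL_{\tau_n}\circ\Gamma$. Even so, one obtains only the compressed inequality $\PP_n\circ\KK^{\otimes n}\circ\PP_n\leq2\,\PP_n\circ\Gamma\circ\LL_{\tau_n}\circ\Gamma\circ\PP_n$.

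For (b), the paper applies operator Jensen for the operator-concave logarithm. It compresses $\GG=\LL_{\omega^{-1/2}}\circ\KK^{\otimes n}\circ\LL_{\omega^{-1/2}}$ onto $\ran\PP_n$, which contains $\sqrt\omega$ and reduces $\LL_\omega$. It then uses operator monotonicity inside that subspace.

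An equivalent fix, closer to your convexity idea, is to twirl $\KK^{\otimes n}$ itself under $X\mapsto WXW^\dag$ with $W=\bigotimes_i(U_i\otimes U_i)$. These conjugations fix $\LL_{\sqrt\omega}$ and $\sqrt\omega$, so joint convexity shows that $V_2$ does not increase. The twirled form is block diagonal with respect to $\ran\PP_n$, and its restriction there equals $\PP_n\circ\KK^{\otimes n}\circ\PP_n$.

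Your fallback points at the right symmetry. However, $\KK^{\otimes n}$ is not Werner-covariant, so it cannot be reduced to isotypic blocks without one of these two steps. The comparison actually needed is the normality identity on $\ran\PP_n$, not a block-by-block comparison of $\RR_{\tau_1}$ with $\LL_{\tau_1}$.
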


The proof is given in Appendix~\ref{Grothendieck_app}. The point is to apply the Grothendieck inequality once to the whole $n$-copy space. Its factor of two then costs only $1/(2n)$ bits per input copy. Applying the inequality separately on each copy would lose a factor $2^n$ and would not prove~\eqref{Grothendieck_asymptotic}.

Thus, within the class~\eqref{full_operator_hypothesis}, alternative quadratic forms cannot beat the regularised one-state construction on Werner inputs. This includes the two-state forms in Corollary~\ref{two_state_forms_cor}, since its Schatten-H\"older estimate also applies to arbitrary operators. It does not imply that the one-copy optimisation is optimal, and it says nothing analogous about arbitrary conditional bounds. Most importantly, it does not cover forms whose estimates use $0 \leq Q \leq \id$ essentially. The Rains derivation in~\eqref{rains_quadratic_estimate} is already of this latter kind: its use of $Q^2 \leq Q$ cannot be extended to arbitrary operators. Such effect-dependent forms remain a possible source of stronger converses.

\subsection{Converse via Hirschman's extension of the Hadamard three-line theorem}

We now turn to a different converse, based on analytic families of trace-one matrices. Lifting an annulus to a strip allows us to apply Hirschman's refinement of the Hadamard three-line theorem with explicit boundary weights. We will use the following scalar version of~\cite[Lemma~3.2, Eqs.~(26)--(29)]{Junge2018}; see also its Lemma~A.1. The scalar form suffices because the interpoland $h_n$ below is a scalar trace pairing.

\begin{lemma}[(Hirschman's three-line inequality)]\label{hirschman_lemma}
Let $h$ be bounded on $\{w\in\mathbb C:0 \leq \Re w \leq 1\}$, holomorphic in its interior and continuous on its boundary. For $0<\theta<1$, let $\beta_\theta$ be defined by~\eqref{hirschman_density}. These are nonnegative probability densities, and
\bb
\log|h(\theta)| \leq \int_{\mathbb R} \Big[(1 - \theta)\beta_{1 - \theta}(s)\log|h(is)| + \theta\beta_\theta(s)\log|h(1 + is)|\Big]\,\mathrm ds\, .
\label{hirschman_scalar}
\ee
Here $\log0 = - \infty$, with the inequality interpreted by regularisation at zeros. In particular it applies to a bounded periodic function on the strip.
\end{lemma}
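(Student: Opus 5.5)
The plan is to read \eqref{hirschman_scalar} as the statement that the subharmonic function $u(w)\coloneqq\log|h(w)|$ lies below the Poisson integral of its boundary values on the strip $S\coloneqq\{0<\Re w<1\}$, evaluated at the real point $w=\theta$. So there are two things to prove. First, the weights $(1-\theta)\beta_{1-\theta}$ and $\theta\beta_\theta$ are exactly the Poisson kernel (harmonic measure density) of $S$ at $\theta$. Second, a maximum principle holds for bounded subharmonic functions on the unbounded strip.

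\emph{Step 1 (the kernel).} I would map $S$ conformally to the upper half-plane by $\phi(w)=e^{i\pi w}$. This sends the line $\Re w=0$ onto the positive real axis and $\Re w=1$ onto the negative real axis, with $\phi(\theta)=e^{i\pi\theta}$. Pulling back the half-plane Poisson kernel $\frac{1}{\pi}\frac{\Im\zeta}{|x-\zeta|^2}\,\mathrm dx$ through $x=\pm e^{\pi s}$ gives the densities
\[
\frac{\sin\pi\theta}{2(\cosh\pi s-\cos\pi\theta)}\quad\text{on }\Re w=0,\qquad \frac{\sin\pi\theta}{2(\cosh\pi s+\cos\pi\theta)}\quad\text{on }\Re w=1.
\]
A direct comparison with \eqref{hirschman_density} shows these equal $(1-\theta)\beta_{1-\theta}(s)$ and $\theta\beta_\theta(s)$, respectively; for the first one uses $\cos(\pi(1-\theta))=-\cos\pi\theta$. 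Their masses are the harmonic measures of the two boundary lines, namely $1-\theta$ and $\theta$; this can be checked by integrating the harmonic function $\Re w$, or by the elementary integral of the kernel. Hence each $\beta_\theta$ is a probability density, and it is evidently positive for $0<\theta<1$.

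\emph{Step 2 (maximum principle).} Let $M\coloneqq\sup_S|h|<\infty$. For a truncation level $K>0$, set $g_K\coloneqq\max\{u,-K\}$ on $\partial S$. This is a bounded continuous function, so its Poisson integral $P[g_K]$ is a bounded harmonic function on $S$ with boundary values $g_K$. The difference $v\coloneqq u-P[g_K]$ is subharmonic and bounded above on $S$, and at every boundary point its $\limsup$ is $\le 0$, because $u$ is upper semicontinuous up to the boundary and $u\le g_K$ there. I expect this to be the main obstacle, since the strip is unbounded and the ordinary maximum principle does not apply directly. I would handle it with the Phragm\'en--Lindel\"of principle for the strip: consider $v-\varepsilon\,\Re\cos\!\big(\tfrac{\pi}{2}(w-\tfrac12)\cdot c\big)$-type auxiliary functions, or more simply $v-\varepsilon\log|w+1|$, which tends to $-\infty$ as $|\Im w|\to\infty$ because $v$ is bounded above. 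Applying the maximum principle on large rectangles and letting $\varepsilon\to0$ then gives $v\le 0$, so $u(\theta)\le P[g_K](\theta)$.

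\emph{Step 3 (removing the truncation).} As $K\to\infty$, $g_K$ decreases to $u|_{\partial S}$ and is bounded above by $\log M$. Monotone convergence, with the finite kernel mass, therefore yields \eqref{hirschman_scalar}, where the right-hand side may equal $-\infty$. This matches the stated convention $\log0=-\infty$ with regularisation at zeros. If $h\equiv0$ the statement is trivial. For a bounded periodic function on the strip the hypotheses are satisfied verbatim, so the final remark is immediate.
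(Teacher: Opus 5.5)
Your argument is correct, but it is not the route the paper takes. The paper does not prove the lemma itself. It quotes the operator-valued Hirschman inequality \cite[Lemma~3.2]{Junge2018} and observes that the scalar case suffices. It handles zeros by applying the operator-norm version to $\operatorname{diag}(h,\varepsilon)$, which replaces $\log|h|$ by $\log\max\{|h|,\varepsilon\}$ before letting $\varepsilon\to0^+$, and it explicitly avoids any subharmonic maximum principle. The kernel facts (masses $1-\theta$ and $\theta$, Fourier transforms, identification with harmonic measure) are obtained only afterwards, in Appendix~\ref{harmonic_primer_app}, by applying the inequality to $\exp(e^{uw})$ and its reciprocal.

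You instead give the classical self-contained proof:
\begin{itemize}
\item The map $e^{i\pi w}$ identifies the two weights directly as the Poisson kernel of the strip at $\theta$. Your computation is right, including the use of $\cos(\pi(1-\theta))=-\cos\pi\theta$, and it gives positivity, normalisation and the harmonic-measure interpretation at once.
\item Phragm\'en--Lindel\"of on the unbounded strip then yields the sub-mean-value inequality for $\log|h|$.
\item Your truncation $\max\{u,-K\}$ with monotone convergence is the same regularisation as the paper's $\log\max\{|h|,\varepsilon\}$ with $\varepsilon=2^{-K}$, but proved rather than inherited.
\end{itemize}
The paper's version is shorter and comes with operator-valued generality. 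Yours is elementary and independent of the literature, at the cost of the maximum-principle step.

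Two minor points. First, the first auxiliary function in Step~2 is garbled and should be dropped; $\varepsilon\log|w+1|$ suffices. It is harmonic near $\overline{S}$ and nonnegative there, because $|w+1|\geq\Re(w+1)\geq1$, so it preserves the boundary inequality. It also exceeds $\sup_S v$ on the horizontal sides of tall enough rectangles. Second, your upper-semicontinuity claim uses continuity of $h$ on the closed strip, which is the intended reading of the hypothesis.
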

In our application, the boundary logarithms admit finite continuous upper bounds. To handle possible zeros, it suffices to apply the operator-norm version of the cited lemma to $\operatorname{diag}(h,\varepsilon)$ and let $\varepsilon\to 0^+$. This replaces each logarithm by $\log\max\{|h|,\varepsilon\}$ and requires no subharmonic maximum principle.

We also recall the exact strong-converse exponent of quantum hypothesis testing. The sandwiched R\'enyi relative entropy is defined in~\eqref{sandwiched_renyi}.

\begin{lemma}[(Mosonyi and Ogawa's strong converse exponent~{\cite{Mosonyi2015}})]\label{mo_exponent_lemma}
Let $\rho,\sigma$ be finite-dimensional states with $\supp\rho\subseteq\supp\sigma$, and let $r \geq 0$. Among sequences of effects satisfying
\bb
\liminf_{n\to\infty} - \frac1n\log \Tr\!\left[\sigma^{\otimes n}T_n\right] \geq r\, ,
\label{testing_type_two_exponent}
\ee
the smallest possible lower exponential decay rate of $\Tr\!\left[\rho^{\otimes n}T_n\right]$ is
\bb
H_r^*(\rho\|\sigma) \coloneqq \sup_{\alpha>1}\frac{\alpha - 1}{\alpha} \bigl(r - \widetilde D_\alpha(\rho\|\sigma)\bigr) .
\label{mo_exact_exponent}
\ee
This rate can be attained. Moreover, every $n$, every effect $0 \leq T \leq \id$ and every $\alpha>1$ satisfy
\bb
\Tr\!\left[\rho^{\otimes n}T\right] \leq 2^{\frac{n(\alpha - 1)}{\alpha}\widetilde D_\alpha(\rho\|\sigma)} \left(\Tr\!\left[\sigma^{\otimes n}T\right]\right)^{\frac{\alpha - 1}{\alpha}}\, .
\label{mo_finite}
\ee
\end{lemma}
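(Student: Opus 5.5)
The finite-block inequality~\eqref{mo_finite} is the only part used later, so I would prove it directly. For the exact characterisation of the optimal exponent, I would only show that~\eqref{mo_finite} gives the converse half of~\eqref{mo_exact_exponent}, and cite~\cite{Mosonyi2015} for attainability.

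For~\eqref{mo_finite}, fix $n$, an effect $0\leq T\leq\id$ and $\alpha>1$. Put $p\coloneqq\Tr[\rho^{\otimes n}T]$ and $q\coloneqq\Tr[\sigma^{\otimes n}T]$.

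\emph{Case $q=0$.} Since $\supp\rho^{\otimes n}\subseteq\supp\sigma^{\otimes n}$ and $T\geq0$, the condition $\Tr[\sigma^{\otimes n}T]=0$ forces $\sqrt T$ to vanish on $\supp\sigma^{\otimes n}$. Hence $p=0$ and the inequality holds trivially.

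\emph{Case $q>0$.} Apply the binary measurement channel $\MM_T$ represented by $(T,\id-T)$. The sandwiched R\'enyi divergence obeys data processing for every $\alpha>1$ (indeed for every $\alpha\geq1/2$). It is also additive on tensor powers. Together these give
\bb
n\,\widetilde D_\alpha(\rho\|\sigma)=\widetilde D_\alpha\big(\rho^{\otimes n}\big\|\sigma^{\otimes n}\big)\geq \frac1{\alpha-1}\log\Big(p^\alpha q^{1-\alpha}+(1-p)^\alpha(1-q)^{1-\alpha}\Big)\, .
\ee
The right-hand side is the classical R\'enyi divergence of the two binary distributions, with the convention that the second summand is $+\infty$ if $q=1$ and $p<1$, and $0$ if $p=1$. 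Because $\alpha>1$, the prefactor $1/(\alpha-1)$ is positive and the second summand is nonnegative, so we may drop it. This leaves $p^\alpha q^{1-\alpha}\leq 2^{n(\alpha-1)\widetilde D_\alpha(\rho\|\sigma)}$. Taking the $1/\alpha$-th power yields exactly~\eqref{mo_finite}.

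For the converse part of the exponent, take a sequence $T_n$ satisfying~\eqref{testing_type_two_exponent}. For any $\delta>0$ and all large $n$ we have $q_n\leq 2^{-n(r-\delta)}$. Inserting this into~\eqref{mo_finite} gives
\bb
-\tfrac1n\log\Tr\big[\rho^{\otimes n}T_n\big]\geq\tfrac{\alpha-1}{\alpha}\big(r-\delta-\widetilde D_\alpha(\rho\|\sigma)\big)\, .
\ee
Letting $\delta\to0$ and then optimising over $\alpha>1$ shows that no admissible sequence can have a lower decay rate than $H_r^*(\rho\|\sigma)$.

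The main obstacle is the reverse direction: constructing tests that attain $H_r^*$. This requires the Mosonyi--Ogawa analysis via pinching onto the spectral decomposition of $\sigma^{\otimes n}$, classical Neyman--Pearson tests on the pinched states, and a Legendre-transform identification of the resulting exponent with the sandwiched quantity. I would not reproduce this argument but cite~\cite{Mosonyi2015} for it. The later strong-converse application in Theorem~\ref{analytic_converse_thm} uses only~\eqref{mo_finite}, which is proved in full above.
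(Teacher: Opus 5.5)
Your proposal is correct and follows essentially the same route as the paper: you derive~\eqref{mo_finite} from data processing of $\widetilde D_\alpha$ under the binary measurement $(T,\id-T)$, keeping only the accepting term, you handle $\Tr[\sigma^{\otimes n}T]=0$ by support inclusion, and you cite Mosonyi--Ogawa for attainability of $H_r^*$. Your explicit derivation of the converse half of~\eqref{mo_exact_exponent} from~\eqref{mo_finite} is a valid small addition; the paper cites Mosonyi--Ogawa for the whole exponent and separately notes that the degenerate case $\rho=\sigma$ is attained trivially by $T_n=2^{-nr}\id$.
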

The exact exponent is given by~\cite[Theorem~IV.10 and Remark~IV.11]{Mosonyi2015}; the finite bound is~\cite[Eq.~(75), proof of Lemma~IV.7]{Mosonyi2015}. That paper uses natural exponential rates; all quantities here are converted to bits. The case $\rho = \sigma$ is elementary: both testing probabilities coincide, and $T_n = 2^{-nr}\id$ attains the exponent $r$. If the second probability in \eqref{mo_finite} vanishes, support inclusion makes the first one zero as well. The finite bound follows from binary-measurement data processing for \eqref{sandwiched_renyi}, retaining just the accepting term. This precise exponent strengthens the earlier Ogawa--Nagaoka estimate~\cite{Ogawa2000}.

\begin{manualthm}{\ref{analytic_converse_thm}}[(Analytic-family strong converse)]
Let $\rho = \rho_{A:B}$ be a finite-dimensional bipartite state, let $\mathcal{O} = \{z\in\mathbb C:a<|z - c|<b\}$ with $c\in\mathbb C$ and $0<a<b$, and let $Z(z)$ be an operator on $AB$, analytic in a neighbourhood of $\overline{\mathcal{O}}$, with $\Tr[Z(z)] = 1$. Suppose that $z_*\in\mathcal{O}$, $\sigma = Z(z_*)$ is a state, and $\supp\rho\subseteq\supp\sigma$. With the density in~\eqref{explicit_annular_density}, built from~\eqref{hirschman_density}, define
\begin{equation}
\begin{aligned}
J_{Z,\mathcal{O},z_*}(R) \coloneqq \int_{\partial\mathcal{O}} \min\!\left\{\log\|Z(z)\|_1,\,\log\big\|Z(z)^{\pt_B}\big\|_1 - R\right\}k_{\mathcal{O},z_*}(z)\,|\mathrm dz|\, .
\end{aligned}
\tag{\ref{analytic_J}}
\end{equation}
If $R \geq 0$ and $D(\rho\|\sigma) + J_{Z,\mathcal{O},z_*}(R)<0$, then $R$ is an exponential strong-converse rate for PPT distillation of $\rho$. Specifically,
\begin{equation}
\begin{aligned}
\EpptErr(\rho,R) \geq \sup_{\alpha>1}\frac{\alpha - 1}{\alpha}\Big[ - J_{Z,\mathcal{O},z_*}(R) - \widetilde D_\alpha(\rho\|\sigma)\Big]>0\, .
\end{aligned}
\tag{\ref{analytic_error_exponent}}
\end{equation}
\end{manualthm}
\begin{proof}
Write $z_* = c + r e^{i\varphi}$ and set
\bb
\ell \coloneqq \ln(b/a),\qquad \theta \coloneqq \frac{\ln(r/a)}{\ell}\in(0,1),\qquad f_{Z,R}(z) \coloneqq \min\!\left\{\log\|Z(z)\|_1,\log\big\|Z(z)^{\pt}\big\|_1 - R\right\} .
\label{analytic_strip_data}
\ee
The boundary integral in~\eqref{analytic_J} has the equivalent strip representation
\bb
J_{Z,\mathcal{O},z_*}(R) = \int_{\mathbb R}\Big[&(1 - \theta)\beta_{1 - \theta}(s)f_{Z,R}\!\left(c + a e^{i(\varphi + \ell s)}\right)\\
&+\theta\beta_\theta(s)f_{Z,R}\!\left(c + b e^{i(\varphi + \ell s)}\right)\Big]\,\mathrm ds\, .
\label{analytic_J_strip}
\ee
Appendix~\ref{harmonic_primer_app} verifies this equality and gives the explicit positive boundary densities. For these fixed data write $J(R) = J_{Z,\mathcal{O},z_*}(R)$.

Fix $n$ and a feasible effect $Q_n$ for output rank $d_n = \lceil2^{nR}\rceil$. By \eqref{effect}, $\big\|Q_n^{\pt}\big\|_\infty \leq 2^{-nR}$. Lift the annulus to the strip by
\bb
h_n(w) \coloneqq \Tr\!\left[Z\!\left(c + a e^{\ell w + i\varphi}\right)^{\otimes n}Q_n\right] .
\label{analytic_strip_lift}
\ee
This function is holomorphic near the closed strip and bounded there, since the matrix argument remains in a compact annulus. It is also periodic in the imaginary direction, and $h_n(\theta) = \Tr\!\left[\sigma^{\otimes n}Q_n\right]$. For any $z\in\overline{\mathcal{O}}$, trace-norm duality gives
\bb
\left|\Tr\!\left[Z(z)^{\otimes n}Q_n\right]\right| &\leqt{(i)} \|Z(z)\|_1^n\, ,\\
\left|\Tr\!\left[Z(z)^{\otimes n}Q_n\right]\right| &\eqt{(ii)} \left|\Tr\!\left[(Z(z)^{\pt})^{\otimes n}Q_n^{\pt}\right]\right| \leqt{(iii)} 2^{-nR}\big\|Z(z)^{\pt}\big\|_1^n\, .
\label{analytic_boundary_estimates}
\ee
Here (i) uses $\|Q_n\|_\infty \leq 1$ and tensor multiplicativity, (ii) uses invariance of the bilinear trace pairing under partial transpose, and (iii) uses feasibility and the same norm duality. No Hermiticity of $Z(z)$ is needed. Both boundary trace norms are at least one because their traces are one; hence $f_{Z,R}$ is finite and continuous.

Apply Lemma~\ref{hirschman_lemma} to $h_n$. On each boundary line the two estimates in \eqref{analytic_boundary_estimates} bound its logarithm by the corresponding $n f_{Z,R}$. Since the kernels are nonnegative, this gives
\bb
\Tr\!\left[\sigma^{\otimes n}Q_n\right] \leq 2^{nJ(R)}\, .
\label{analytic_reference_acceptance}
\ee
If a boundary zero is present, the regularisation convention following Lemma~\ref{hirschman_lemma}, applied to $h_n$, gives the same conclusion.

For every $\alpha>1$, Lemma~\ref{mo_exponent_lemma} now yields
\bb
\Tr\!\left[\rho^{\otimes n}Q_n\right] &\leqt{(i)} 2^{\frac{n(\alpha - 1)}{\alpha}\widetilde D_\alpha(\rho\|\sigma)} \left(\Tr\!\left[\sigma^{\otimes n}Q_n\right]\right)^{\frac{\alpha - 1}{\alpha}}\\
&\leqt{(ii)} 2^{-\frac{n(\alpha - 1)}{\alpha} [ - J(R) - \widetilde D_\alpha(\rho\|\sigma)]}\, .
\label{analytic_success_exponent}
\ee
Here (i) is \eqref{mo_finite} and (ii) is \eqref{analytic_reference_acceptance}. The estimate is uniform over all feasible effects. Take their supremum, then the lower exponential rate, and finally the supremum over fixed $\alpha>1$. This proves \eqref{analytic_error_exponent}. Its strict positivity follows from $\widetilde D_\alpha(\rho\|\sigma)\to D(\rho\|\sigma)$ as $\alpha\to 1^+$ and the strict hypothesis.

The same exponent bound applies to any integer-rank sequence with $\log d_n/n\to R$. Indeed the preceding finite bound then contains $J(\log d_n/n)$. The minimum defining $f_{Z,R}$ is $1$-Lipschitz in $R$, and the two kernel masses sum to one, so $J$ is $1$-Lipschitz. Passing to the limit for each fixed $\alpha$ and only then taking the supremum gives the same result. The exactness in Lemma~\ref{mo_exponent_lemma} concerns unconstrained hypothesis testing; no exact PPT exponent is asserted here.
\end{proof}

Optimising Theorem~\ref{analytic_converse_thm} over the analytic family, annulus and reference point gives
\bb
E_A(\rho) \coloneqq \inf_{Z,\mathcal{O},z_*}\inf\left\{ R \geq 0:\ D(\rho\|Z(z_*)) + J_{Z,\mathcal{O},z_*}(R)<0 \right\} .
\label{analytic_bound}
\ee
The outer infimum is over the data satisfying the hypotheses of Theorem~\ref{analytic_converse_thm}, and $J_{Z,\mathcal{O},z_*}$ is defined in~\eqref{analytic_J}. We use the convention $\inf\varnothing = +\infty$.
\begin{cor}[(Combined general converse)]\label{combined_cor}
With $E_Q$ and $E_A$ defined by~\eqref{quadratic_bound} and~\eqref{analytic_bound}, respectively, every finite-dimensional bipartite state satisfies
\bb
\Eppt(\rho) \leq \min\{R^\infty(\rho),E_Q(\rho),E_A(\rho)\}\, .
\label{combined}
\ee
\end{cor}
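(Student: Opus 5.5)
The corollary assembles three converse bounds that have already been proved, so the plan is to verify each term of the minimum separately. Since each bound holds for every finite-dimensional bipartite state, the minimum of the three follows at once.

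The Rains term needs no new work. The inequality $\Eppt(\rho)\leq R^\infty(\rho)$ is Rains's converse, recalled in~\eqref{Rains_bound} and in Section~\ref{rains_prelim_sec}. If one wants a self-contained route inside the paper, Proposition~\ref{extended_rains_prop} already gives $E_Q\leq R_{\mathrm{ext}}\leq R$. Subadditivity of $R$ on tensor powers, applied blockwise, then recovers $R^\infty$ from the single-copy Rains bound, because $\Eppt(\rho^{\otimes k})=k\,\Eppt(\rho)$. This identity holds because blocking copies only rescales rates, as one sees from the definition~\eqref{operational_rate}.

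The quadratic term is immediate from Theorem~\ref{quadratic_converse_thm}, which gives $\Eppt\leq\Eppt^\dag\leq E_Q$. When the feasible set in~\eqref{quadratic_bound} is empty, the convention $\inf\varnothing=+\infty$ makes the bound trivially true.

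The only point needing a sentence of care is the analytic term $E_A$.

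\begin{itemize}
\item Suppose $R\geq0$ satisfies $D(\rho\|Z(z_*))+J_{Z,\mathcal{O},z_*}(R)<0$ for some admissible data $(Z,\mathcal{O},z_*)$. Theorem~\ref{analytic_converse_thm} then shows that $\EpptErr(\rho,R)>0$, so the optimal Bell overlap at rate $R$ decays exponentially.
\item The theorem's closing remark extends this to any integer-rank sequence with $\log d_n/n\to R$, using the $1$-Lipschitz property of $J$ in $R$.
\item It follows that no protocol with rate at least $R$ can have fidelity tending to one. Therefore $\Eppt(\rho)\leq\Eppt^\dag(\rho)\leq R$.
\item A rate above $R$ also violates the hypothesis-testing formulation~\eqref{one_shot_ppt}. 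To see this, discard part of the output: a PPT protocol at a higher rank $d$ yields, by~\eqref{optimal_overlap}, a feasible test at any smaller rank. Equivalently, the constraint $\|Q^{\pt}\|_\infty\leq d^{-1}$ is monotone in $d$.
\end{itemize}

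Taking the infimum over all such $R$ and all admissible data gives $\Eppt\leq E_A$.

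The main (mild) obstacle is exactly this monotonicity-in-rate step. It is needed to pass from ``$R$ itself is a strong-converse rate'' to ``every rate $\geq R$ is excluded.'' I would handle it via the feasibility monotonicity in~\eqref{optimal_overlap}, which gives $F_{\PPT}(\rho^{\otimes n},d)$ nonincreasing in $d$. Combining the three inequalities then yields~\eqref{combined}.
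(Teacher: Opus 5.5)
Your proposal is correct and follows the paper's own argument, which simply assembles Rains's converse (normalised and applied to blocks), Theorem~\ref{quadratic_converse_thm} and Theorem~\ref{analytic_converse_thm} through the definitions~\eqref{quadratic_bound} and~\eqref{analytic_bound}. Your monotonicity-in-rate step, that $F_{\PPT}(\rho^{\otimes n},d)$ is nonincreasing in $d$, fills in a point the paper leaves implicit; it could equally be seen from the fact that $J_{Z,\mathcal{O},z_*}(R)$ is nonincreasing in $R$, so the feasible set in~\eqref{analytic_bound} is upward closed.
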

Here $R^\infty$ is defined in~\eqref{rains_definition}. Equation~\eqref{combined} collects the bounds established or used here; it is not an optimality claim among all possible converses. The first entry follows from Rains's bound~\cite[Theorem~4.13]{Rains2001}, after normalising the auxiliary operator and applying the bound to blocks. The other entries follow from Theorems~\ref{quadratic_converse_thm} and~\ref{analytic_converse_thm}, using~\eqref{quadratic_bound} and~\eqref{analytic_bound}. Combining~\eqref{Rains_bound} with~\eqref{combined} yields~\eqref{LOCC_converse_bounds}.

\section{Werner states: explicit bounds and certified separation}

We use the Werner parametrisation~\eqref{werner}, with $P_0$ and $P_1$ the symmetric and antisymmetric projections. The PPT region is $p \leq 1/2$.

\subsection{Achievability bounds}

Although Theorem~\ref{regularised_formula_thm} is stated in terms of $L$, the auxiliary functional $\widetilde{L}$ retains the absolute operators before taking their norm and satisfies $L\leq\widetilde{L}\leq\Eppt$. It can therefore give a sharper bound on a fixed number of copies. We use it for the explicit one-copy optimisation below, and use $L$ for the collective linear program; their regularisations agree.

The effect $A = P_1 + uP_0$, with $u = (d - 1)/(d + 1)$, already appears in Matthews and Winter's local discrimination protocol~\cite[Eq.~(13)]{VV-dh-Chernoff}. Here $A^{\pt} = q(\id - \Phi_d)$, where $q = d/(d + 1)$, and $\Tr\!\left[\rho_p A\right] = t = (d - 1 + 2p)/(d + 1)$. Proposition~\ref{FG_properties_prop}(iv) therefore gives $\widetilde{L}(\rho_p)$ at least the displayed binary divergence, and Theorem~\ref{regularised_formula_thm} gives
\bb
\Eppt(\rho_p) \geq \Dbin\!\left(\frac{d - 1 + 2p}{d + 1}\,\middle\|\,\frac d{d + 1}\right), \qquad \frac12<p \leq 1\, .
\label{werner_lower}
\ee

The measurement itself is known; \eqref{werner_lower} is its threshold-based distillation consequence. For fixed $d$ and $p = 1/2 + \delta$, the right side is $2\delta^2/(d\ln2) + O(\delta^3)$.

For $0 \leq k \leq n$, define the threshold effect
\bb
Q_{n,k} = \sum_{x\in\{0,1\}^n:\,|x| \geq k} \bigotimes_{j = 1}^n A^{x_j}(\id - A)^{1 - x_j}\, .
\label{threshold_effect}
\ee
This construction explicitly handles the signed Werner transform. On the sector with $i$ antisymmetric factors, the coefficient of \eqref{threshold_effect} is
\bb
f_{n,k}(i) = \Pr\{i + \operatorname{Bin}(n - i,u) \geq k\} = \sum_{j = \max\{0,k - i\}}^{n - i}\binom{n - i}{j}u^j(1 - u)^{n - i - j}\, .
\ee
On a partially transposed sector containing $\ell$ factors in $\Phi_d$, its eigenvalue is exactly $\Pr\{\operatorname{Bin}(n - \ell,q) \geq k\}$. All these eigenvalues are nonnegative and bounded by the $\ell = 0$ value, which equals the operator norm. Thus no absolute-value estimate of an alternating coefficient sum is needed.

We can in fact optimise the absolute-operator bound over all one-copy POVMs. Keep $d$ and $p$ fixed, and retain the notation $u = (d - 1)/(d + 1)$, $q = d/(d + 1)$ and $t = p + u(1 - p)$. Define
\bb
c_0(\mu) = \frac{d + 1 - d\mu}{2},\qquad c_1(\mu) = \frac{d - 1 - (d - 2)\mu}{2}\, ,
\ee
\bb
J_0(\mu)& = (1 - p)\log\frac{1 - p}{c_0(\mu)} + p\log\frac p{c_1(\mu)}\, ,\\
J_1(\mu)& = (1 - u)(1 - p)\log\frac{1 - p}{c_0(\mu)} + t\log\frac t{\mu q}\, .
\label{onecopy_functions}
\ee
Here $u,q,c_0,c_1$ depend on the fixed dimension $d$, while $t,J_0,J_1$ also depend on the fixed parameter $p$; only these fixed arguments are suppressed. The displayed variable of $c_0,c_1,J_0,J_1$ is $\mu$. Zero numerators contribute zero, and $J_1(0) = +\infty$.

\begin{prop}[(Exact one-copy POVM bound)]\label{onecopy_prop}
For every integer $d \geq 2$ and $p\in[1/2,1]$, there is a unique $\mu_*\in(0,1]$ with $J_0(\mu_*) = J_1(\mu_*)$, and
\bb
\Eppt(\rho_p) \geq \widetilde{L}(\rho_p) = \sup_{\MM} \widetilde{\ell}(\rho_p,\MM) = J_0(\mu_*) = J_1(\mu_*)\, .
\label{onecopy_value}
\ee
Here the supremum ranges over finite one-copy measurement channels. The optimum is attained by $\{(1 - u\lambda_*)P_0,(1 - \lambda_*)P_1,\lambda_*(P_1 + uP_0)\}$, where $\lambda_* = J_0'(\mu_*)/[J_0'(\mu_*) - J_1'(\mu_*)]$. At $p = 1/2$, $\mu_* = \lambda_* = 1$ and $\widetilde{L}(\rho_{1/2}) = 0$; at $p = 1$, $\mu_* = (d + 1)/(d + 2)$ and $\widetilde{L}(\rho_1) = \log((d + 2)/d)$. This is a one-copy functional optimum, not an operational converse.
\end{prop}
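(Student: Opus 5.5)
The plan is to prove matching lower and upper bounds on $\widetilde{L}(\rho_p)$. Both rest on the fact that everything relevant can be pushed into the commutative algebra spanned by $\{P_0,P_1\}$ and, after partial transpose, by $\{\id,\Phi_d\}$. The inequality $\Eppt(\rho_p)\geq\widetilde{L}(\rho_p)$ is already~\eqref{auxiliary_regularisation}, so only the equalities in~\eqref{onecopy_value} need proof.

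\emph{Achievability.} For the three-outcome POVM $\{(1-u\lambda)P_0,(1-\lambda)P_1,\lambda(P_1+uP_0)\}$, all partial transposes are linear combinations of $\id$ and $\Phi_d$, using $P_0^{\pt}=(\id+d\Phi_d)/2$, $P_1^{\pt}=(\id-d\Phi_d)/2$ and $A^{\pt}=q(\id-\Phi_d)$. Hence the absolute values commute and are diagonal in $\{\Phi_d,\id-\Phi_d\}$. The supremum over weights $y$ in~\eqref{povm_f} then reduces to a two-point problem: minimise the largest of two eigenvalues. I will evaluate it through the dual form~\eqref{povm_dual} with the isotropic state $\sigma_\mu=\mu\Phi_d+(1-\mu)(\id-\Phi_d)/(d^2-1)$. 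The denominators $\Tr[\sigma_\mu|E_j^{\pt}|]$ then become exactly $c_0(\mu)$, $c_1(\mu)$ and $\mu q$, up to the prefactors $1-u\lambda$, $1-\lambda$, $\lambda$. The resulting objective is $(1-\lambda)J_0(\mu)+\lambda J_1(\mu)$, minimised over $\mu$. Choosing $\lambda=\lambda_*$ makes $\mu_*$ a stationary point, which gives the value $J_0(\mu_*)=J_1(\mu_*)$. The endpoint cases $p=1/2$ and $p=1$ are direct substitutions.

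\emph{Converse (the main obstacle).} For an arbitrary POVM I will use the dual representation of Lemma~\ref{povm_dual_lemma}, fixing $\sigma=\sigma_{\mu_*}$, so that $\widetilde{\ell}(\rho_p,\MM)\leq\sum_j p_j\log(p_j/\Tr[\sigma_{\mu_*}|E_j^{\pt}|])$. The key step is to lower-bound each denominator by a linear functional of $E_j$, using $\Tr[\sigma|X|]\geq\Tr[\sigma^{1/2}V\sigma^{1/2}X]$ for any contraction $V$ commuting with $\sigma$. With $V=\pm$ on the two isotropic sectors, and after transposing back, this yields $\Tr[\sigma_{\mu_*}|E^{\pt}|]\geq\max\{\Tr[G_0E],\Tr[G_1E]\}$ for two Werner operators $G_0,G_1$. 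These are chosen so that $\Tr[G_0P_0]=c_0$ and $\Tr[G_0P_1]=c_1$, while $G_1$ is tight on the mixed effect $P_1+uP_0$. Since $\rho_p$ and the $G_i$ are Werner-invariant, twirling each $E_j$ changes neither $p_j$ nor these linear bounds. Each effect therefore becomes a point $(e_0,e_1)=(\Tr E_jP_0,\Tr E_jP_1)$ in a cone, with the constraint $\sum_j e=(\dim P_0,\dim P_1)$.

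The problem is now classical: maximise $\sum_j p_j\log\bigl(p_j/\max_i g_i(e_j)\bigr)$ over decompositions of a fixed vector. The log-sum inequality makes this concave-like along rays, so merging effects lying in the same linearity region of $\max_i g_i$ never decreases the score. One is thus left with at most one effect per extreme ray: pure $P_0$, pure $P_1$, and the kink direction $P_1+uP_0$. This is the three-outcome family, and its value is bounded by $(1-\lambda)J_0+\lambda J_1$ at $\mu_*$, which equals $J_0(\mu_*)$. The delicate part is verifying the two contraction certificates for every $\mu$, and checking that the kink of $\max\{g_0,g_1\}$ sits exactly on the ray of $A=P_1+uP_0$.

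\emph{Uniqueness of $\mu_*$.} I will show that $J_0$ is strictly increasing in $\mu$, because $c_0$ and $c_1$ decrease in $\mu$. I will show that $J_1-J_0$ tends to $+\infty$ as $\mu\to0^+$ and is $\leq 0$ at $\mu=1$, and that it is strictly monotone there, by differentiating the explicit logarithms. Existence and uniqueness then follow from the intermediate value theorem. Finally, the formula $\lambda_*=J_0'/(J_0'-J_1')$ is exactly the first-order condition, in $\mu$, for $\mu_*$ to minimise $(1-\lambda)J_0+\lambda J_1$.
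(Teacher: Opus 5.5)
\textbf{The reduction to three rays runs in the wrong direction.} After you fix $\sigma_{\mu_*}$, linearise and twirl, both $p_j=\Tr[\rho_pE_j]$ and $g(E_j)=\max\{\Tr[G_0E_j],\Tr[G_1E_j]\}$ are additive on each linearity cone of $g$. The log-sum inequality therefore reads
\[
p'\log\frac{p'}{g'}+p''\log\frac{p''}{g''}\;\geq\;(p'+p'')\log\frac{p'+p''}{g'+g''}\, .
\]
So merging two effects of the same cone can only \emph{lower} your upper bound. That is useless for showing that every POVM scores at most $J_0(\mu_*)$. Moreover, merging within cones would leave one effect per cone, located anywhere inside it, not one effect per extreme ray.

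The step you need is the reverse one, and it is what the paper does. Split each twirled effect $aP_0+bP_1$ along the extreme rays of its cone: as $bA+(a-ub)P_0$ if $a\geq ub$, and as $(a/u)A+(b-a/u)P_1$ otherwise. By additivity and log-sum, this splitting never decreases the score. Merging collinear pieces afterwards is neutral, by the equality case of log-sum. The result is $\{(1-u\lambda)P_0,(1-\lambda)P_1,\lambda A\}$, whose linearised score at $\mu_*$ is $(1-\lambda)J_0(\mu_*)+\lambda J_1(\mu_*)=J_0(\mu_*)$ for every $\lambda\in[0,1]$.

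With this fix your converse is a valid variant of the paper's. The paper twirls the effects themselves, using $|\mathcal{T}(H)|\leq\mathcal{T}(|H|)$ for $\mathcal{T}=\Gamma\circ\mathcal{W}\circ\Gamma$. It then evaluates $\widetilde\ell(\rho_p,\MM_\lambda)=\min_\mu[(1-\lambda)J_0(\mu)+\lambda J_1(\mu)]$ exactly, before bounding it by the value at $\mu_*$. You instead fix the dual point first, so the twirl acts only on linear functionals and no operator inequality for absolute values is needed.

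Several smaller points also need correcting.
\begin{itemize}
\item Your $\sigma_\mu$ has its weights swapped. The denominators $c_0(\mu),c_1(\mu),\mu q$ arise from $\sigma_\mu=(1-\mu)\Phi_d+\mu(\id-\Phi_d)/(d^2-1)$, i.e.\ $\mu=\Tr[\sigma(\id-\Phi_d)]$.
\item The certificates are misdescribed. For $\mu<1$, no linear functional with $\Tr[G_0P_0]=c_0$ and $\Tr[G_0P_1]=c_1$ lies below the sublinear map $E\mapsto\Tr[\sigma_\mu|E^{\Gamma}|]$: on $A$ it gives $uc_0+c_1>\mu q$. The correct sign choices are $V=\id$ and $V=(\id-\Phi_d)-\Phi_d$. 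They give $G_0$ tight on $P_0$ and $A$, and $G_1$ tight on $P_1$ and $A$. On Werner effects their maximum equals $\Tr[\sigma_\mu|E^{\Gamma}|]$ exactly, so the kink sits on the ray of $A$ automatically.
\item In the achievability step, stationarity at $\mu_*$ gives the minimum only because $(1-\lambda)J_0+\lambda J_1$ is convex in $\mu$; you should say so.
\item You must also check $\lambda_*\in[0,1]$. This follows from $J_0'(\mu_*)\geq0>J_1'(\mu_*)$ for $p>1/2$.
\item $J_0$ is constant when $d=2$ and $p=1$, so it is not strictly increasing there. Uniqueness of the crossing should instead rest on $J_1$ being strictly decreasing: it is strictly convex with $(\ln2)J_1'(1)=1-2p\leq0$.
\end{itemize}
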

\begin{proof}
Let $\mathcal{W}$ be the Werner twirl and $\mathcal{T} = \Gamma\circ\mathcal{W}\circ\Gamma$. On a Hermitian $H$, writing $P = \id - \Phi_d$,
\bb
\mathcal{T}(H) = \frac{\Tr\!\left[PH\right]}{d^2 - 1}P + \Tr\!\left[\Phi_dH\right]\Phi_d, \qquad |\mathcal{T}(H)| \leq \mathcal{T}(|H|)\, .
\ee
The first identity is the two-block twirl formula; the second follows from $|\Tr\!\left[RH\right]| \leq \Tr\!\left[R|H|\right]$ for each positive block projector $R$. Thus twirling every POVM effect preserves its probability and cannot increase the weighted absolute-operator norm. This uses the commutative range of this twirl, not operator convexity of absolute value.

For an invariant effect $E = aP_0 + bP_1$, its two partial-transpose eigenvalues are $(a + b)/2$ and $((d + 1)a - (d - 1)b)/2$. Put $A = P_1 + uP_0$. If $a \geq ub$, refine $E = bA + (a - ub)P_0$; otherwise refine $E = (a/u)A + (b - a/u)P_1$. In each case the summands have the same sign on each partial-transpose block, so their absolute values add exactly. The log-sum inequality in the dual \eqref{povm_dual} shows that refinement cannot decrease $\widetilde{\ell}$. Merging common rays leaves the measurement channel $\MM_\lambda$ represented by
\bb
\pazocal{E}_\lambda = \big((1 - u\lambda)P_0,(1 - \lambda)P_1,\lambda A\big),\qquad0 \leq \lambda \leq 1\, .
\ee
With $\mu = \Tr\!\left[\sigma P\right]$, the three dual expectations for the unscaled rays are $c_0(\mu),c_1(\mu),\mu q$. Cancelling the ray coefficients inside the logarithms gives
\bb
\widetilde{\ell}(\rho_p,\MM_\lambda) = \min_{0 \leq \mu \leq 1}[(1 - \lambda)J_0(\mu) + \lambda J_1(\mu)]\, .
\ee
An absent ray is omitted, so at $\lambda = 0,\mu = 0$ the value is $J_0(0)$, with no $0\cdot\infty$ expression. Each objective is convex in $\mu$.

The functions in \eqref{onecopy_functions} have the natural-unit derivatives
\bb
(\ln2)J_0' = \frac{(1 - p)d}{2c_0} + \frac{p(d - 2)}{2c_1} \geq 0, \qquad (\ln2)J_1' = \frac{q(1 - p)}{c_0} - \frac t\mu\, .
\ee
The second derivative of $J_1$ is positive, and $(\ln2)J_1'(1) = 1 - 2p \leq 0$, so $J_1$ strictly decreases on $(0,1)$. Hence $J_0 - J_1$ strictly increases from $- \infty$ at zero. At one its nonnegativity is exactly log-sum for numerator pair $(u(1 - p),p)$ and denominator pair $(u/2,1/2)$; equality holds only for $p = 1/2$. This proves the unique crossing and its location.

Every convex objective above equals $V = J_0(\mu_*) = J_1(\mu_*)$ at the crossing, so its minimum is at most $V$. For $p>1/2$, the stated $\lambda_*$ lies in $[0,1)$ and makes its derivative vanish at $\mu_*$; convexity gives a global minimum there and proves equality. At $p = 1/2$, use $\lambda_* = 1$. At $p = 1$, solving $c_1(\mu) = \mu q$ gives the stated value and $\lambda_* = (d - 2)(d + 1)/[(d + 2)(d - 1)]$; for $d = 2$ this is zero and $J_0$ is constant. Attainment by the POVM functional allows its positive weights to approach zero on zero-probability outcomes, exactly as in Lemma~\ref{povm_dual_lemma}.
\end{proof}

Collective positive-partial-transpose effects improve the near-threshold coefficient of \eqref{werner_lower}. A certified $24$-copy qutrit effect gives an achievable curve $\kappa_{24}\delta^2 + O(\delta^3)$ at $p = 1/2 + \delta$ as $\delta\to 0^+$, with $\kappa_{24} = 1.290480005840672317\ldots>2/(3\ln2)$. This is a fixed achievable-curve expansion.

Proposition~\ref{FG_properties_prop}(iv) also applies to collective block effects, and Theorem~\ref{regularised_formula_thm} turns the resulting $\widetilde{L}$ bound into an operational rate. For $d = 3$, an explicit rational $24$-copy effect in the companion certificate obeys $0 \leq A \leq \id$, $0 \leq A^{\pt} \leq (2/5)\id$. Its polynomial acceptance $a(p)$ has $a(1/2) = 2/5$ and
\bb
a'(1/2) = \frac{4118621179428546867727313}{1283031261738279119093760} \eqqcolon s\, .
\ee
Consequently its achievable curve, $\Dbin(a(p)\|2/5)/24$ when $a(p)>2/5$ and zero otherwise, has the fixed-block expansion $\kappa_{24}\delta^2 + O(\delta^3)$ as $\delta\to 0^+$, where
\bb
\kappa_{24} = \frac{s^2}{48(2/5)(3/5)\ln2} = 1.290480005840672317\ldots>\frac2{3\ln2}\, .
\ee
This strictly improves the quadratic coefficient of \eqref{werner_lower} on a sufficiently small right neighbourhood of $p = 1/2$. It is an expansion of one achievable curve, not of the unknown operational rate. The certificate checks all $25$ physical coefficients and all $25$ partial-transpose rows exactly.

The hierarchy \eqref{regularisation} has an exact finite linear-program form for Werner states. For fixed $d,p,m$, let $r_i = \binom mi p^i(1 - p)^{m - i}$ be the physical type probabilities; the vector $r = (r_0,\ldots,r_m)$ depends on $p,m$. Define the dimension-dependent type transform
\bb
(C_m)_{ki} = 2^{-m}\sum_{\ell}(-1)^\ell(d - 1)^\ell(d + 1)^{k - \ell} \binom{k}{\ell}\binom{m - k}{i - \ell},\qquad 0 \leq i,k \leq m\, ,
\label{transform}
\ee
where out-of-range binomial coefficients are zero. If $\mathcal{V}_m$ is the set of all nonzero vertices of $\{v \geq 0:|C_mv| \leq \boldsymbol1\}$, then
\bb
\frac{L(\rho_p^{\otimes m})}{m} = \max_{\substack{\lambda_v \geq 0\\
\sum_v\lambda_vv = \boldsymbol1}} \frac1m\sum_{v\in\mathcal{V}_m}\lambda_v(r\cdot v)\log(r\cdot v)\, .
\label{LP}
\ee
An optimum uses at most $m + 1$ outcomes. Any exactly feasible subset of columns already yields an achievable rate. Full vertex enumeration is unnecessary for a lower certificate, while a claim of finite-level optimality requires it or a matching dual proof. Running maxima of the normalised levels converge to $\Eppt$; no speed of convergence is known.

\begin{proof}[Proof of \eqref{LP}]
Independent Werner twirls and copy permutations fix $\rho_p^{\otimes m}$ and do not increase $\big\|E_j^{\pt}\big\|_\infty$: their conjugates are mixtures of unitary conjugations. Hence they do not decrease the objective defining $L$. A symmetric effect is $E(e) = \sum_i e_i\Pi_i$, where $\Pi_i$ sums the orthogonal sectors with $i$ antisymmetric factors. Its probability is $r\cdot e$, and its partial-transpose norm is $\|C_me\|_\infty$. The formula \eqref{transform} follows by taking products of $P_0^{\pt} = (\id + d\Phi_d)/2$ and $P_1^{\pt} = (\id - d\Phi_d)/2$.

The transform is invertible: it is the restriction of the invertible tensor partial transpose between the two invariant permutation-symmetric operator spaces. Thus $\{v \geq 0:|C_mv| \leq \boldsymbol1\}$ is compact. Every nonzero vertex has $\|C_mv\|_\infty = 1$, since otherwise it admits a nontrivial radial perturbation. For $e\ne0$, put $c = \|C_me\|_\infty$ and decompose $e/c$ into vertices of its active signed-$C_m$ face. The zero vertex is absent. The refined effects have probabilities summing to the original probability and norms summing exactly to $c$. The log-sum inequality therefore cannot decrease the objective. Merging equal rays yields \eqref{LP}. Its compact feasible coefficient set has an optimum with at most $m + 1$ nonzero coefficients by the basic feasible-solution property of linear programming. This proof uses all vertices for the exact equality, while any feasible finite subset suffices for an achievable certificate.
\end{proof}

\subsection{Converses and a gap with the regularised Rains bound}

For comparison, the known additive Werner Rains expression is, for $d>2$,
\bb
R^\infty(\rho_p) = R(\rho_p) = \begin{cases} 0,&0 \leq p \leq \frac12\, ,\\
1 - h_2(p),&\frac12 \leq p \leq \frac12 + \frac1d\, ,\\
\log\frac{d - 2}{d} + p\log\frac{d + 2}{d - 2},&\frac12 + \frac1d \leq p \leq 1. \end{cases}
\label{rains}
\ee
The single-copy formula is given in~\cite[Corollary~5.7, Eq.~(5.46)]{Rains2001}; Werner additivity is established in Ref.~\cite[Theorem~7, Example~3 and Sec.~III.D.2]{BRT26}.

\begin{manualthm}{\ref{gap_Werner_thm}}[(Strict qutrit Werner separation)]
For $d = 3$ and $p = 25/26$,
\begin{equation}
\begin{aligned}
0.5836& \leq \Eppt(\rho_{25/26}) \leq 0.6211\\
&<R^\infty(\rho_{25/26}) = \frac{25}{26}\log5 - \log3 \approx 0.6477\, .
\end{aligned}
\tag{\ref{benchmark}}
\end{equation}
Each of $E_Q$, $U$ and $E_A$, defined in~\eqref{quadratic_bound}, \eqref{one_state_quadratic_bound} and~\eqref{analytic_bound}, respectively, is strictly below $R^\infty$ on this state.
\end{manualthm}
\begin{proof}
The lower endpoint is certified by an exactly feasible $48$-copy, $49$-outcome POVM in \eqref{LP}. For the quadratic separation, take the positive auxiliary $\tau = (2/13)\Phi_3 + (11/13)(\id - \Phi_3)/8$. Werner invariance makes the span of the symmetric and antisymmetric projectors a reducing subspace of $\LL_{\rho^{-1/2}}\circ\Gamma\circ\LL_\tau\circ\Gamma\circ\LL_{\rho^{-1/2}}$, whose logarithmic expectation is $- 2V_2[\rho,\Gamma\circ\LL_\tau\circ\Gamma]$. In the orthonormal order $(P_1/\sqrt3,P_0/\sqrt6)$ its restriction and the physical vector are
\bb
G = \begin{pmatrix}19/50& - 1/2\\
- 1/2&43/2\end{pmatrix} = D + \frac{aa^T}{50},\quad D = \diag(9/25,9),\quad a = \binom1{ - 25},\quad v = \frac1{\sqrt{26}}\binom51\, .
\ee
The resolvent identity gives
\bb
v^T(\ln G - \ln D)v = \int_0^\infty\frac{|a^T(D + t\id)^{-1}v|^2}{50 + a^T(D + t\id)^{-1}a}\,\mathrm dt>0\, .
\ee
The denominator is positive and $a^Tv\ne0$, so the numerator is not identically zero. Thus $U(\rho_{25/26}) \leq - v^T\log G\,v/2 < - v^T\log D\,v/2 = R^\infty(\rho_{25/26})$, and~\eqref{U_bounds} gives $E_Q(\rho_{25/26}) \leq U(\rho_{25/26})$. Rational interval enclosures for the eigenvalues of $G$ and the logarithms give
\bb
0.633093814438981533 < - \frac12v^T\log G\,v < 0.633093814438981534\, .
\label{quadratic_werner_certificate}
\ee
The analytic certificate in Appendix~\ref{certificate_app} gives the smaller explicit upper bound $E_A(\rho_{25/26}) \leq 0.62107$. This compares the two exhibited certificates; it does not assert an ordering between the optimised quantities $U$ and $E_A$.

\end{proof}
The analytic upper value follows from the rational M\"obius annulus detailed in Appendix~\ref{certificate_app}. Its negative residual is enclosed with directed rounding and a one-sided quadrature estimate; numerical optimisation only locates the parameters. The lower value is a feasible rate, not an asserted optimum of \eqref{LP}. Further exact POVMs give rates at least $.144222039901$ at $p = 3/4$ using $48$ copies, and $.414801754074$ at $p = 9/10$ using $32$ copies. All three lower certificates verify the physical coefficients, every signed partial-transpose row, the POVM identity and rational logarithm remainders; their $10^{-12}$-wide scalar enclosures do not give operational upper bounds.

The bounds can be propagated between parameters by a free-admixture rule: if $\tau$ is PPT, then
\bb
\Eppt(s\rho + (1 - s)\tau) \leq s\Eppt(\rho),\qquad 0 \leq s \leq 1\, .
\label{admixture}
\ee
Fix $0<s<1$ and $\delta>0$. Using $\lceil(s+\delta)n\rceil$ resource copies of $\rho$, sample a Bernoulli$(s)$ word of length $n$, place resource copies in its selected positions whenever their number does not exceed $\lceil(s+\delta)n\rceil$, and prepare $\tau$ elsewhere; on the complementary event prepare $\tau^{\otimes n}$. This deterministic PPT simulation has vanishing trace-norm error against $(s\rho+(1-s)\tau)^{\otimes n}$ by the Chernoff bound. Composing it with a protocol for the mixture gives rate divided by $s+\delta$, and then $\delta\to 0^+$ proves \eqref{admixture}, without a general rate-continuity assumption. Since $\rho_{1/2}$ is PPT, the ratio $\Eppt(\rho_p)/(p - 1/2)$ is nondecreasing for $p>1/2$. An upper certificate at $q$ therefore propagates to $p \leq q$ by the factor $(p - 1/2)/(q - 1/2)$, while a lower bound propagates in the opposite direction. Figure~\ref{werner_fig} uses these implications.

For $1/2<p \leq q \leq 1$, the identity
\bb
\rho_p = \frac{p - 1/2}{q - 1/2}\rho_q + \left(1 - \frac{p - 1/2}{q - 1/2}\right)\rho_{1/2},\qquad \rho_{1/2}^{\pt} = \frac{\id - \Phi_d}{d^2 - 1} \geq 0\, ,
\ee
proves the ratio monotonicity used in the figure. Therefore a valid upper bound $U_q$ gives $\Eppt(\rho_p) \leq (p - 1/2)U_q/(q - 1/2)$ to its left, including the zero endpoint, and a valid lower bound $L_p$ gives $\Eppt(\rho_q) \geq (q - 1/2)L_p/(p - 1/2)$ to its right. Pointwise minima of applicable upper bounds and maxima of applicable lower bounds remain valid. This rule does not assert general convexity or permit arbitrary interpolation between endpoints.

\section{Conclusion}

We have investigated entanglement distillation under PPT operations in the regime of asymptotically vanishing error. As explained in the Introduction, this setting removes the difficulty of characterising LOCC operations while retaining the other two features that make entanglement manipulation so challenging: non-zero error for any finite number of copies and asymptotic limit in the number of copies itself. Our regularised formula gives a concrete way of approaching the optimal distillation rate from below, through an optimisation over measurements. As a by-product, it yields a faithful lower bound in terms of the negativity, making the fact that every NPT state is distillable under PPT operations quantitatively precise.

On the converse side, we have introduced two constructions, based on quadratic forms and analytic families, that bound the PPT\,---\,and hence also the LOCC\,---\,distillable entanglement. Both improve on the regularised Rains bound already for two-qutrit Werner states. Thus, the Rains bound does not give the sought-after solution to PPT entanglement distillation, even for a family with very many symmetries. The explicit gap between the achievable and converse rates in Figure~\ref{werner_fig} also makes clear how much remains to be understood already in this simple example.

The main problem left open is to calculate the PPT distillable entanglement by means of a single-letter formula, or at least to decide whether it is computable at all for general states. Our regularised characterisation provides converging lower bounds, but no guarantee on their distance from the limit. A computability result would require a complementary way of certifying arbitrarily accurate upper bounds. Whether the converse techniques developed here can be made complete in this sense, or whether a different description of the asymptotic problem is needed, remains to be seen.

\bigskip
\noindent \textit{Acknowledgments.} I am grateful to Marco Parentin and especially to Bartosz Regula for countless illuminating discussions on the topic of PPT entanglement manipulation. I acknowledge financial support from the European Union (ERC StG ETQO, Grant Agreement no.\ 101165230).

\appendix
\section{Singular physical inputs}\label{quadratic_support_app}

We justify the support convention and the geometric-mean representation in~\eqref{V_geometric_derivative}. Throughout this appendix $\KK>0$ is fixed. For every $\omega \geq 0$,
\bb
\ker\big(\LL_{\sqrt\omega}\circ\KK^{-1}\circ\LL_{\sqrt\omega}\big) = \ker\LL_{\sqrt\omega}\, .
\ee
Since $\sqrt\omega = \LL_{\sqrt\omega}[\id]$ lies in the orthogonal complement of this kernel, its logarithmic expectation is finite.

The polar decomposition of $\KK^{-1/2}\circ\LL_{\sqrt\omega}$ gives
\bb
&\Braket{\sqrt\omega,\big(\log(\LL_{\sqrt\omega}\circ\KK^{-1}\circ\LL_{\sqrt\omega})\big)[\sqrt\omega]}\\
&\quad = \Braket{\KK^{1/2}[\id], \left[(\KK^{-1/2}\circ\LL_\omega\circ\KK^{-1/2}) \log(\KK^{-1/2}\circ\LL_\omega\circ\KK^{-1/2})\right][\KK^{1/2}[\id]]} ,
\label{continuous_quadratic_expression}
\ee
where $0\log0 = 0$. The right side is continuous in $\omega \geq 0$ and $\KK>0$, by continuous functional calculus. Moreover,
\bb
\Tr[(\LL_\omega\#_t\KK)[\id]] & = \Tr[(\KK\#_{1 - t}\LL_\omega)[\id]]\\
& = \Braket{\KK^{1/2}[\id], (\KK^{-1/2}\circ\LL_\omega\circ\KK^{-1/2})^{1 - t}[\KK^{1/2}[\id]]} .
\ee
The right derivative at $t = 0$ is minus $\ln2$ times~\eqref{continuous_quadratic_expression}. This proves~\eqref{V_geometric_derivative}, including its singular-input version. The representation is independent of the full-rank approximation used for $\omega$.

\section{Exponential strong converse with ancillary systems}\label{conditional_strong_converse_app}

We prove the strong-converse assertion in Theorem~\ref{quadratic_converse_thm}. The data-processing proof in the main text already gives the finite-error inequality~\eqref{conditional_finite_error}; the purpose of this appendix is only to obtain exponential decay of the success probability.

Fix an ancillary pair $\rho_{A'B'},\JJ$ as in that theorem and some $\delta>0$. The spectral construction below gives operators $X_n$ on $(A'B')^n$ such that
\bb
\Tr\rho_{A'B'}^{\otimes n}X_nX_n^\dag& \leq 1\, ,\\
\Tr\rho_{A'B'}^{\otimes n}(X_n - \id)(X_n - \id)^\dag& \leq 2^{-n c_\delta}\, ,\\
\Braket{X_n,\JJ^{\otimes n}[X_n]}& \leq 2^{-n\alpha(V_\alpha[\rho_{A'B'},\JJ] - \delta)}\, ,
\label{ancillary_truncation_estimates}
\ee
where $c_\delta>0$ is independent of $n$. Neither positivity nor Hermiticity of $X_n$ is required.

To construct them, consider the spectral projection of $(\JJ^{-1/2}\circ\LL_{\rho_{A'B'}}\circ\JJ^{-1/2})^{\otimes n}$ onto eigenvalues at least $2^{n\alpha(V_\alpha[\rho_{A'B'},\JJ] - \delta)}$, and denote it by $\Pi_n$. Set
\bb
X_n \coloneqq (\JJ^{-1/2})^{\otimes n} \Big[\Pi_n\big[(\JJ^{1/2}[\id])^{\otimes n}\big]\Big] .
\label{ancillary_truncation}
\ee
If $\sum_\lambda\lambda P_\lambda$ is the spectral decomposition of $\JJ^{-1/2}\circ\LL_{\rho_{A'B'}}\circ\JJ^{-1/2}$, the numbers $\lambda\big\|P_\lambda[\JJ^{1/2}[\id]]\big\|_2^2$, with $\lambda>0$, form a probability distribution. Indeed their sum is $\Tr\rho_{A'B'} = 1$. By~\eqref{continuous_quadratic_expression}, the expectation of $\log\lambda$ is $\alpha V_\alpha[\rho_{A'B'},\JJ]$. On tensor powers these logarithms add independently. Since the distribution has finite support, a scalar Chernoff bound gives an exponentially small probability below the threshold defining $\Pi_n$. This is the second line of~\eqref{ancillary_truncation_estimates}. The first line is the retained probability. The third follows by bounding the ordinary squared norm of the retained vector by its weighted squared norm divided by the spectral threshold. This proves all three assertions, also when $\rho_{A'B'}$ is singular.

Now let $Q_n$ be any principal-system test with $\big\|Q_n^{\pt}\big\|_\infty \leq 2^{-nR}$. Cauchy--Schwarz and the second estimate in~\eqref{ancillary_truncation_estimates} imply that
\bb
\left|\Tr\rho_{ABA'B'}^{\otimes n}(Q_n\otimes X_n) - \Tr\rho_{AB}^{\otimes n}Q_n\right| & \leq \left(\Tr\rho_{AB}^{\otimes n}Q_n^2\right)^{1/2} \left(\Tr\rho_{A'B'}^{\otimes n}(X_n - \id)(X_n - \id)^\dag\right)^{1/2}\\
& \leq 2^{-n c_\delta/2}\, .
\ee
On the other hand,
\bb
\Tr\rho_{ABA'B'}^{\otimes n}(Q_n^2\otimes X_nX_n^\dag)& \leq 1\, ,\\
\Braket{Q_n\otimes X_n,\KK^{\otimes n}[Q_n\otimes X_n]} & \leq \big\|Q_n^{\pt}\big\|_\infty^\alpha 2^{-n\alpha(V_\alpha[\rho_{A'B'},\JJ] - \delta)}\, .
\ee
The first inequality uses $Q_n^2 \leq \id$; the second is the tensor-stable ancillary hypothesis.

For any $\omega \geq 0$, $\KK>0$, arbitrary operator $Z$ and $0<s<1$, Cauchy--Schwarz followed by scalar Jensen gives
\bb
|\Tr\omega Z|^2 \leq \Tr[(\KK\#_{1 + s}\LL_\omega)[\id]] \big(\Tr\omega ZZ^\dag\big)^{1 - s}\Braket{Z,\KK[Z]}^s\, .
\label{quadratic_CS_general}
\ee
Indeed, apply Cauchy--Schwarz to
\bb
&(\KK^{-1/2}\circ\LL_\omega\circ\KK^{-1/2})^{(1 + s)/2}[\KK^{1/2}[\id]]\, ,\\
&(\KK^{-1/2}\circ\LL_\omega\circ\KK^{-1/2})^{(1 - s)/2}[\KK^{1/2}[Z]]\, ,
\ee
and then apply Jensen to the squared norm of the second vector, using the concavity of $x\mapsto x^{1 - s}$. The squared norm of the first vector is $\Tr[(\KK\#_{1 + s}\LL_\omega)[\id]]$. This calculation uses no inverse of $\omega$.

Apply~\eqref{quadratic_CS_general} to $\rho_{ABA'B'}^{\otimes n}$ and $Q_n\otimes X_n$. By the preceding estimates and tensorisation,
\bb
\Tr\rho_{AB}^{\otimes n}Q_n \leq &\;2^{-n c_\delta/2}\\
& + 2^{-\frac n2\left\{\alpha s\big(R + V_\alpha[\rho_{A'B'},\JJ] - \delta\big) - \log\Tr[(\KK\#_{1 + s}\LL_{\rho_{ABA'B'}})[\id]]\right\}}\, .
\label{conditional_exponential}
\ee
If $R>V_\alpha[\rho_{ABA'B'},\KK] - V_\alpha[\rho_{A'B'},\JJ]$, choose $\delta>0$ sufficiently small and then fix $s>0$ sufficiently small. Since
\bb
\lim_{s\to0^+}\frac1{\alpha s}\log\Tr[(\KK\#_{1 + s}\LL_{\rho_{ABA'B'}})[\id]] = V_\alpha[\rho_{ABA'B'},\KK]\, ,
\ee
both terms in~\eqref{conditional_exponential} decay exponentially. The constants are independent of the test. Taking the supremum over all feasible $Q_n$ proves the assertion.

For trivial ancillary systems no spectral truncation is needed. Applying~\eqref{quadratic_CS_general} directly to $\rho^{\otimes n}$ and $Q_n$ gives, for every $R \geq 0$ and $0<s<1$,
\bb
2\EpptErr(\rho,R) \geq \alpha sR - \log\Braket{\sqrt\rho, \big(\LL_{\sqrt\rho}\circ\KK^{-1}\circ\LL_{\sqrt\rho}\big)^s[\sqrt\rho]} .
\ee
The logarithmic term divided by $\alpha s$ tends to $V_\alpha[\rho,\KK]$ as $s\to0^+$. Moreover, scalar Jensen shows that the logarithmic expression divided by $s$ is non-decreasing in $s>0$, so leaving $s$ positive does not strengthen the asymptotic rate bound.

\section{Proof of the Grothendieck comparison}\label{Grothendieck_app}

We prove Proposition~\ref{Grothendieck_Werner_prop}. Fix $n$ and apply Theorem~\ref{little_Grothendieck_thm}, with $\HH = \HH' = \HH_{AB}^{\otimes n}$, to
\bb
T(X) = (\KK^{1/2})^{\otimes n}[X^{\pt}]\, .
\ee
By~\eqref{full_operator_hypothesis}, $\|T\|_{\infty\to2} \leq 1$. Hence there exist states $\tau_{0,n},\tau_{1,n}$ such that
\bb
\Braket{X,\KK^{\otimes n}[X]} \leq \Tr\tau_{0,n}(X^{\pt})^\dag X^{\pt} + \Tr\tau_{1,n}X^{\pt}(X^{\pt})^\dag
\label{Grothendieck_applied}
\ee
for every operator $X$.

Let $\PP_n$ be the Hilbert--Schmidt orthogonal projection onto the complex linear span of $P_{x_1}\otimes\cdots\otimes P_{x_n}$, with $x_j\in\{0,1\}$. This is the independent Werner twirl. If $X$ lies in its range, $X^{\pt}$ lies in the commutative algebra generated by the $n$ Bell projections, and is therefore normal, whether or not $X$ is Hermitian. Thus the two terms in~\eqref{Grothendieck_applied} combine. Twirl $(\tau_{0,n} + \tau_{1,n})/2$ independently by $U\otimes\overline U$ on each pair, and call the resulting state $\tau_n$. On the range of $\PP_n$,
\bb
\PP_n\circ\KK^{\otimes n}\circ\PP_n \leq 2\,\PP_n\circ\Gamma\circ\LL_{\tau_n}\circ\Gamma\circ\PP_n\, .
\label{Grothendieck_compressed}
\ee
The twirling does not change the quadratic expectations in this display.

The range of $\PP_n$ contains $\sqrt{\rho^{\otimes n}}$ and reduces both $\LL_{\rho^{\otimes n}}$ and $\Gamma\circ\LL_{\tau_n}\circ\Gamma$. For the moment, suppose $\tau_n>0$, and write
\bb
\GG \coloneqq \LL_{(\rho^{\otimes n})^{-1/2}}\circ\KK^{\otimes n}\circ\LL_{(\rho^{\otimes n})^{-1/2}}\, .
\ee
Operator Jensen for $\log$, applied to compression onto $\ran\PP_n$, followed by operator monotonicity, gives
\bb
n V_2[\rho,\KK] & = - \frac12\Braket{\sqrt{\rho^{\otimes n}},(\log\GG)[\sqrt{\rho^{\otimes n}}]}\\
& \geq - \frac12\Braket{\sqrt{\rho^{\otimes n}}, \left(\log\left(\left.\PP_n\circ\GG\circ\PP_n\right|_{\ran\PP_n}\right)\right) [\sqrt{\rho^{\otimes n}}]}\\
& \geq V_2[\rho^{\otimes n},\Gamma\circ\LL_{\tau_n}\circ\Gamma] - \frac12\\
& \geq U(\rho^{\otimes n}) - \frac12\, .
\label{Grothendieck_log_comparison}
\ee
The first inequality has this direction because $\log$ is operator concave. The loss $1/2$ in the next line is precisely $\frac12\log2$, coming from~\eqref{Grothendieck_compressed}.

If $\tau_n$ is singular, replace it with $(1 - \e)\tau_n + \e\id/d^{2n}$. The right side of~\eqref{Grothendieck_compressed} then acquires a factor $1/(1 - \e)$, so the same proof loses $\frac12\log(2/(1 - \e))$ instead. Taking $\e\to0^+$ proves~\eqref{Grothendieck_finite}. Finally, divide by $n$ and take the regularisation limit. The factor of two is incurred once, rather than once per copy, and therefore disappears. This proves~\eqref{Grothendieck_asymptotic}.

\section{Periodic boundary weights and the certificate kernels}
\label{harmonic_primer_app}

The strip representation \eqref{analytic_J_strip} is exactly the annular integral used by the existing certificates. We verify this identity directly from Lemma~\ref{hirschman_lemma}, without a Dirichlet problem. For the fixed $\theta\in(0,1)$ of \eqref{analytic_strip_data}, write
\bb
h_0(s) \coloneqq (1 - \theta)\beta_{1 - \theta}(s),\qquad h_1(s) \coloneqq \theta\beta_\theta(s)\, .
\label{strip_unnormalised_weights}
\ee
These even functions have masses $1 - \theta$ and $\theta$, respectively. For any real $u>0$, apply Hirschman's inequality to $\exp(e^{u w})$ and its reciprocal. Both are bounded on the closed strip, so the two inequalities force equality. Repeating with $- u$ in place of $u$ and cancelling the common conversion between natural and binary logarithms gives
\bb
e^{u\theta}& = \int_{\mathbb R} \big[h_0(s) + e^u h_1(s)\big]\cos(us)\,\mathrm ds\, ,\\
e^{-u\theta}& = \int_{\mathbb R} \big[h_0(s) + e^{-u}h_1(s)\big]\cos(us)\,\mathrm ds\, .
\label{kernel_two_linear_identities}
\ee
Solving these two linear equations yields
\bb
\int_{\mathbb R}h_0(s)\cos(us)\,\mathrm ds = \frac{\sinh((1 - \theta)u)}{\sinh u},\qquad \int_{\mathbb R}h_1(s)\cos(us)\,\mathrm ds = \frac{\sinh(\theta u)}{\sinh u}\, .
\label{strip_fourier_integrals}
\ee

The boundary values in \eqref{analytic_J_strip} are periodic with period $2\pi/\ell$ in $s$. Splitting the integral into periods therefore gives the nonnegative angular kernels, relative to $\mathrm d\psi/(2\pi)$,
\bb
k_j(\psi) \coloneqq \frac{2\pi}{\ell} \sum_{m\in\mathbb Z} h_j\!\left(\frac{\psi - \varphi + 2\pi m}{\ell}\right), \qquad j = 0,1\, .
\label{periodised_strip_kernels}
\ee
On the inner and outer circles, respectively, this gives
\bb
k_{\mathcal{O},z_*}(z) = \begin{cases}
\displaystyle\frac{k_0(\psi)}{2\pi a},&z = c + a e^{i\psi},\\
\displaystyle\frac{k_1(\psi)}{2\pi b},&z = c + b e^{i\psi}.
\end{cases}
\label{annulus_arclength_density}
\ee
The arclength element $|\mathrm dz|$ is positive on both circles. Hence no orientation is needed; using the usual positive contour orientation would require an extra minus sign on the clockwise inner circle.

The sums converge uniformly, because the strip kernels have exponential tails. Their masses are $1 - \theta$ and $\theta$. Their cosine coefficients follow from \eqref{strip_fourier_integrals} with $u = k\ell$. Since $r = a e^{\theta\ell}$, rewriting the hyperbolic sines gives
\bb
k_0(\psi)& = 1 - \theta + 2\sum_{k \geq 1} \frac{(a/r)^k\bigl[1 - (r/b)^{2k}\bigr]}{1 - (a/b)^{2k}} \cos\bigl(k(\psi - \varphi)\bigr) ,\\
k_1(\psi)& = \theta + 2\sum_{k \geq 1} \frac{(r/b)^k\bigl[1 - (a/r)^{2k}\bigr]}{1 - (a/b)^{2k}} \cos\bigl(k(\psi - \varphi)\bigr) .
\label{annulus_harmonic_representation}
\ee
The coefficients decay geometrically. Equality with \eqref{periodised_strip_kernels} follows first against trigonometric polynomials and then against all continuous functions by uniform approximation. At the certificate reference point $- r_0$ one has $\varphi = \pi$ and $r = r_0$. Thus the kernels, their masses and every certified integral are unchanged. This also identifies the two boundary weights with the harmonic measure of the annulus at the reference point.

\section{Certified annular parameters and quadrature}\label{certificate_app}

For $d = 3$, set $Z_w = (1 - w)P_0/6 + wP_1/3$. Orthogonal spectral sectors give
\bb
A(w) = \|Z_w\|_1 = |w| + |1 - w|,\qquad B(w) = \big\|Z_w^{\pt}\big\|_1 = \frac23\big(|w + 1| + |w - 1/2|\big) .
\ee
Use $w(\zeta) = c + \zeta/(1 + h\zeta)$ on $a<|\zeta|<b$, with the exact rational parameters
\bb
t = \frac{959483}{10^6},\quad c = \frac{987646}{10^6},\quad a = \frac{268002}{10^7},\quad b = \frac{370446}{10^6},\quad h = - \frac{1114048}{10^6}\, .
\ee
The reference point is $- r_0$, where $r_0 = (c - t)/(1 + h(c - t))$. Rational inequalities verify $|h|b<1$ and $0<a<r_0<b$, so the family is analytic near the closed annulus and $Z_{w( - r_0)} = \rho_t$ is full rank. Let $q_i = a/r_0$, $q_o = r_0/b$, $s = a/b$, $\lambda_i = \ln(b/r_0)/\ln(b/a)$ and $\lambda_o = 1 - \lambda_i$. The harmonic kernels, relative to $\mathrm d\theta/(2\pi)$, are
\bb
k_i(\theta) = \lambda_i + 2\sum_{k \geq 1} \frac{q_i^k(1 - q_o^{2k})}{1 - s^{2k}}\cos k(\theta - \pi),\qquad k_o(\theta) = \lambda_o + 2\sum_{k \geq 1} \frac{q_o^k(1 - q_i^{2k})}{1 - s^{2k}}\cos k(\theta - \pi)\, .
\ee
These follow by periodising Hirschman's weights as in Appendix~\ref{harmonic_primer_app}. They are nonnegative, with masses $\lambda_i$ and $\lambda_o$ whose sum is one. With $J$ defined by \eqref{analytic_J}, a directed-rounding certificate gives
\bb
\Dbin(25/26\|t) + J(6211/10000) \leq - 1.1269845955\times10^{-6}\, .
\ee
The integration uses $65\,536$ midpoint cells on $[0,\pi]$, six terms in a geometric rearrangement of each kernel, and 60-digit decimal intervals. The one-sided quadrature error is at most $4.499149143\times10^{-7}$ and the kernel-tail error below $3.088\times10^{-14}$. Branch crossings require no mesh alignment: if $k \geq 0$ and $|(kf_j)''| \leq M_2$ on a cell of width $\ell$, the cell average of $k\min(f_0,f_1)$ is at most its midpoint value plus $M_2\ell^2/24$, by taking the smaller of the two Taylor integral bounds. The supplied certificate verifies the branch derivative bounds and all outward-rounded elementary operations.

On the inner circle, exact geometry and rational logarithm bounds give $\log(B/A)>.67525>.6211$. Hence its minimum branch is $\log A$ and remains unchanged when the rate is decreased. The outer harmonic mass satisfies $\lambda_o<.0311$. Lowering the rate by $3/100000$ therefore gives the exact bound
\bb
\Dbin(25/26\|t) + J(.62107) \leq - \frac{112698}{10^{11}} + \frac3{100000}\frac{311}{10000} = - \frac{9699}{50000000000}<0\, .
\ee
Theorem~\ref{analytic_converse_thm} proves the asserted single-pair strong converse. The improvement from $.6211$ to $.62107$ uses this residual slack and the small outer harmonic mass; it does not require a claim that correlated auxiliary families outperform all one-pair families. Exact rational feasibility checks and interval outputs are retained with the companion research records. They certify the displayed parameters, not optimal analytic geometry or an interpolated curve.

\section{Companion certificate files}\label{supplement_app}

The separate computational supplement contains the rational feasibility witnesses, interval arithmetic and executable checks supporting the numerical bounds in this paper. Its \path{README.md} maps each claim to its inputs and verification code, while \path{certificate-manifest.json} records their SHA-256 hashes. The program \path{run_certificates.py} verifies the three strongest POVM records, the $24$-copy effect, the annular quadrature and residual-slack improvement, the corrected three-copy analytic certificate, and the quadratic comparison above. It requires only the Python standard library. The full replay and its outputs are supplied in \path{fresh-verification/}.

The figure data, construction code and independent assembly audit are also included, with the scope of their checks documented separately. The displayed curve samples are floating-point evaluations of proved bounds, not interval certificates. Historical proof records can use earlier equation numbers; the supplement's index identifies the corresponding mathematical claims.

\bibliography{../../biblio}
\end{document}

%% file: macros.tex
\usepackage{newpxtext,newpxmath}

\usepackage[latin1]{inputenc}
\usepackage{amsthm}
\usepackage{amssymb}
\usepackage{amsmath}
\usepackage{bbold}
\usepackage{bbm}
\usepackage[pdftex, backref=page]{hyperref}
\usepackage{braket}
\usepackage{dsfont}
\usepackage{mathdots}
\usepackage{mathtools}
\usepackage{enumerate}
\usepackage[shortlabels]{enumitem}
\usepackage{csquotes}
\usepackage{stmaryrd}
\usepackage[cal=boondox]{mathalfa}
\usepackage{graphicx}
\usepackage{stackengine}
\usepackage{scalerel}
\usepackage{tensor}       
\usepackage{array}
\usepackage{makecell}
\newcolumntype{x}[1]{>{\centering\arraybackslash}p{#1}}
\usepackage{tikz}
\usepackage{pgfplots}
\usetikzlibrary{shapes.geometric, shapes.misc, positioning, arrows, arrows.meta, decorations.pathreplacing, decorations.pathmorphing, patterns, angles, quotes, calc}
\usepackage{booktabs}
\usepackage{xfrac}
\usepackage{siunitx}
\usepackage{centernot}
\usepackage{comment}
\usepackage{chngcntr}
\usepackage[justification=justified,format=plain]{caption}
\usepackage{subcaption}
\usepackage{ragged2e}

\newtheorem{thm}{Theorem}
\newtheorem*{thm*}{Theorem}
\newtheorem{prop}[thm]{Proposition}
\newtheorem*{prop*}{Proposition}
\newtheorem{lemma}[thm]{Lemma}
\newtheorem*{lemma*}{Lemma}
\newtheorem{cor}[thm]{Corollary}
\newtheorem*{cor*}{Corollary}

\newtheorem*{cj*}{Conjecture}

\newtheorem*{Def*}{Definition}

\newtheorem*{question*}{Question}

\newtheorem*{problem*}{Problem}

\makeatletter
\def\thmhead@plain#1#2#3{%
  \thmname{#1}\thmnumber{\@ifnotempty{#1}{ }\@upn{#2}}%
  \thmnote{ {\the\thm@notefont#3}}}
\let\thmhead\thmhead@plain
\makeatother

\theoremstyle{definition}

\makeatletter
\newcommand{\manualifempty}[3]{%
  \edef\@tempa{#1}%
  \ifx\@tempa\@empty
    #2
  \else
    #3
  \fi
}
\makeatother

\makeatletter
\newtheoremstyle{manualstyle}
  {3pt}{3pt}{\itshape}{}{\bfseries}{.}{ }{}

\theoremstyle{manualstyle}

\newtheorem{manualthminner}{Theorem}
\newenvironment{manualthm}[1]{%
  \def\@tempa{#1}%
  \ifx\@tempa\@empty
    \renewcommand{\themanualthminner}{\unskip}%
  \else
    \renewcommand{\themanualthminner}{#1}%
  \fi
  \begin{manualthminner}
}{\end{manualthminner}}

\newtheorem{manualpropinner}{Proposition}

\newtheorem{manuallemmainner}{Lemma}

\newtheorem{manualcorinner}{Corollary}

\makeatother

\newcommand{\bb}{\begin{equation}\begin{aligned}\hspace{0pt}}
\newcommand{\bbb}{\begin{equation*}\begin{aligned}}
\newcommand{\ee}{\end{aligned}\end{equation}}
\newcommand{\eee}{\end{aligned}\end{equation*}}

\newcommand{\eqt}[1]{\stackrel{\mathclap{\scriptsize \mbox{#1}}}{=}}
\newcommand{\leqt}[1]{\stackrel{\mathclap{\scriptsize \mbox{#1}}}{\leq}}

\newcommand{\geqt}[1]{\stackrel{\mathclap{\scriptsize \mbox{#1}}}{\geq}}
\newcommand{\ketbra}[1]{\ket{#1}\!\!\bra{#1}}

\newcommand{\ketbraa}[2]{\ket{#1}\!\!\bra{#2}}

\newcommand{\sumno}{\sum\nolimits}

\newcommand{\e}{\varepsilon}
\renewcommand{\epsilon}{\varepsilon}

\newcommand{\id}{\mathds{1}}

\newcommand{\N}{\mathds{N}}

\newcommand{\C}{\mathds{C}}

\newcommand{\locc}{\mathrm{LOCC}}

\newcommand{\ppt}{\mathrm{PPT}}

\newcommand{\PPT}{\pazocal{P\!P\!T}}

\DeclareMathOperator{\Tr}{Tr}

\DeclareMathAlphabet{\pazocal}{OMS}{zplm}{m}{n}

\DeclareMathOperator{\supp}{supp}

\DeclareMathOperator{\Id}{id}

\DeclareMathOperator{\diag}{diag}
\DeclareMathOperator{\ran}{ran}

\newcommand{\HH}{\pazocal{H}}

\newcommand{\MM}{\pazocal{M}}

\newcommand{\PP}{\pazocal{P}}

\newcommand{\lsmatrix}{\left(\begin{smallmatrix}}
\newcommand{\rsmatrix}{\end{smallmatrix}\right)}

\newcommand{\rel}[3]{#1\big(#2\,\big\|\,#3\big)}

\stackMath

\stackMath

\makeatletter
\newcommand*\rel@kern[1]{\kern#1\dimexpr\macc@kerna}
\newcommand*\widebar[1]{%
  \begingroup
  \def\mathaccent##1##2{%
    \rel@kern{0.8}%
    \overline{\rel@kern{-0.8}\macc@nucleus\rel@kern{0.2}}%
    \rel@kern{-0.2}%
  }%
  \macc@depth\@ne
  \let\math@bgroup\@empty \let\math@egroup\macc@set@skewchar
  \mathsurround\z@ \frozen@everymath{\mathgroup\macc@group\relax}%
  \macc@set@skewchar\relax
  \let\mathaccentV\macc@nested@a
  \macc@nested@a\relax111{#1}%
  \endgroup
}

\counterwithin*{equation}{part}
\counterwithin*{thm}{part}
\counterwithin*{figure}{part}

\tikzset{meter/.append style={draw, inner sep=10, rectangle, font=\vphantom{A}, minimum width=30, line width=.8, path picture={\draw[black] ([shift={(.1,.3)}]path picture bounding box.south west) to[bend left=50] ([shift={(-.1,.3)}]path picture bounding box.south east);\draw[black,-latex] ([shift={(0,.1)}]path picture bounding box.south) -- ([shift={(.3,-.1)}]path picture bounding box.north);}}}
\tikzset{roundnode/.append style={circle, draw=black, fill=gray!20, thick, minimum size=10mm}}
\tikzset{squarenode/.style={rectangle, draw=black, fill=none, thick, minimum size=10mm}}

\definecolor{Blues5seq1}{RGB}{239,243,255}
\definecolor{Blues5seq2}{RGB}{189,215,231}
\definecolor{Blues5seq3}{RGB}{107,174,214}
\definecolor{Blues5seq4}{RGB}{49,130,189}
\definecolor{Blues5seq5}{RGB}{8,81,156}

\definecolor{Greens5seq1}{RGB}{237,248,233}
\definecolor{Greens5seq2}{RGB}{186,228,179}
\definecolor{Greens5seq3}{RGB}{116,196,118}
\definecolor{Greens5seq4}{RGB}{49,163,84}
\definecolor{Greens5seq5}{RGB}{0,109,44}

\definecolor{Reds5seq1}{RGB}{254,229,217}
\definecolor{Reds5seq2}{RGB}{252,174,145}
\definecolor{Reds5seq3}{RGB}{251,106,74}
\definecolor{Reds5seq4}{RGB}{222,45,38}
\definecolor{Reds5seq5}{RGB}{165,15,21}

\allowdisplaybreaks

\usepackage[most,breakable]{tcolorbox}
\makeatletter

\def\boxed@gobegin#1{\def\@tempa{#1}\def\@tempb{orange}\ifx\@tempa\@tempb\begin{tcolorbox}[colback=red!15,colframe=orange!70,breakable,enhanced]\else\begin{tcolorbox}[colback=Blues5seq1,colframe=Blues5seq5,breakable,enhanced]\fi}
\def\boxed@gobegin@empty{\begin{tcolorbox}[colback=Blues5seq1,colframe=Blues5seq5,breakable,enhanced]}
\makeatother